\pgfplotsset{compat=newest}
\theoremstyle{plain}
\newtheorem{theorem}{Theorem}[section]
\newtheorem{lemma}[theorem]{Lemma}
\newtheorem{corollary}[theorem]{Corollary}
\newtheorem{proposition}[theorem]{Proposition}
\newtheorem*{propositionno}{Proposition}
\theoremstyle{definition}
\newtheorem{remark}[theorem]{Remark}
\newcommand*{\cA}{\mathcal{A}}
\newcommand*{\cB}{\mathcal{B}}
\newcommand*{\cC}{\mathcal{C}}
\newcommand*{\cD}{\mathcal{D}}
\newcommand*{\cF}{\mathcal{F}}
\newcommand*{\cH}{\mathcal{H}}
\newcommand*{\cI}{\mathcal{I}}
\newcommand*{\cN}{\mathcal{N}}
\newcommand*{\cM}{\mathcal{M}}
\newcommand*{\cP}{\mathcal{P}}
\newcommand*{\cT}{\mathcal{T}}
\newcommand*{\cV}{\mathcal{V}}
\newcommand*{\RR}{\mathbb{R}}
\newcommand*{\CC}{\mathbb{C}}
\newcommand*{\NN}{\mathbb{N}}
\newcommand*{\eps}{\varepsilon}
\newcommand*{\id}{I}
\newcommand*{\spec}{\mathrm{spec}}
\newcommand*{\supp}{\mathrm{supp}}
\newcommand*{\tr}{\mathrm{tr}\,}
\newcommand*{\ket}[1]{| #1 \rangle}
\newcommand*{\bra}[1]{\langle #1 |}
\newcommand{\ketbra}[2]{|#1\rangle\!\langle #2|}
\newcommand{\proj}[1]{|#1\rangle\!\langle #1|}
\newcommand*{\Pos}{\mathscr{P}}
\newcommand*{\Lin}{\mathscr{L}}
\newcommand*{\D}{\mathscr{D}}
\newcommand*{\CP}{\mathrm{CP}}
\newcommand{\newD}{D^{\#}}
\newcommand{\newQ}{Q^{\#}}
\newcommand{\reg}{\mathrm{reg}}
\newcommand{\amor}{\mathrm{a}}
\newcommand{\sharpf}{\#_f}
\newenvironment{sm}{\left[\begin{smallmatrix}}{\end{smallmatrix}\right]}
\DeclareMathOperator*{\argmin}{argmin}
\renewcommand\sout[1]{\ignorespaces}
\newcommand{\change}[1]{{\color{magenta}#1}}
\newcommand{\changecomment}[1]{\marginpar{\tiny \color{magenta} #1}}
\renewcommand{\sout}[1]{}
\renewcommand{\change}[1]{#1}
\renewcommand{\changecomment}[1]{}
\begin{document}

\title{Defining quantum divergences via convex optimization}

\author[1]{Hamza Fawzi}
\author[2]{Omar Fawzi}

\affil[1]{\small{DAMTP, University of Cambridge, United Kingdom}}
\affil[2]{\small{Univ Lyon, ENS Lyon, UCBL, CNRS, Inria, LIP, F-69342, Lyon Cedex 07, France}}

\maketitle


\begin{abstract}
We introduce a new quantum R\'enyi divergence $\newD_{\alpha}$ for $\alpha \in (1,\infty)$ defined in terms of a convex optimization program. This divergence has several desirable computational and operational properties such as an efficient semidefinite programming representation for states and channels, and a chain rule property. An important property of this new divergence is that its regularization is equal to the sandwiched (also known as the minimal) quantum R\'enyi divergence. This allows us to prove several results. First, we use it to get a converging hierarchy of upper bounds on the regularized sandwiched $\alpha$-R\'enyi divergence between quantum channels for $\alpha > 1$. Second it allows us to prove a chain rule property for the sandwiched $\alpha$-R\'enyi divergence for $\alpha > 1$ which we use to characterize the strong converse exponent for channel discrimination. Finally it allows us to get improved bounds on quantum channel capacities.

\end{abstract}


\section{Introduction}

Given nonnegative vectors $P, Q \in \RR^\Sigma$, the $\alpha$-R\'enyi divergence is defined as 
\begin{align}
D_{\alpha}(P \| Q) &=
\left\{\begin{array}{ll}
        \frac{1}{\alpha - 1} \log \sum_{x \in \Sigma} P(x)^{\alpha} Q(x)^{1-\alpha}  & \text{if } P \ll Q \\
        \infty & \text{else}.
        \end{array} \right.
        \label{eq:def_classical_div}
\end{align}
for $\alpha \in (1, \infty)$. Here $P \ll Q$ means that for any $x \in \Sigma$, $Q(x) = 0$ implies that $P(x) = 0$ and the $\log$ is taken to be base $2$. This definition has found many applications in information theory and beyond, we refer to the survey paper~\cite{EH14} for more general definitions and properties of this quantity. To generalize this notion to quantum states $\rho$ and $\sigma$ which are now positive semidefinite operators on $\CC^{\Sigma}$ the interpretation of the multiplication appearing in the definition matters and multiple definitions exist. Such definitions are systematically studied in~\cite{Tom15book}. We mention two important examples for our work. For positive semidefinite operators $\rho$ and $\sigma$ on $\CC^{\Sigma}$, provided $\rho \ll \sigma$ (i.e., the support of $\rho$ is contained in the support of $\sigma$), the geometric and sandwiched divergences 
\begin{align}
\widehat{D}_{\alpha}(\rho \| \sigma) &= \frac{1}{\alpha - 1} \log \tr\left( \sigma^{\frac{1}{2}} (\sigma^{-\frac{1}{2}} \rho \sigma^{-\frac{1}{2}})^{\alpha} \sigma^{\frac{1}{2}}  \right) \label{eq:def_geom_div} \\
\widetilde{D}_{\alpha}(\rho \| \sigma) &= \frac{1}{\alpha - 1} \log \tr\left(  \left(\sigma^{\frac{1-\alpha}{2\alpha}} \rho \sigma^{\frac{1-\alpha}{2\alpha}} \right)^{\alpha} \right) \label{eq:def_minim_div} \ ,
\end{align}
were respectively defined by~\cite{Mat13} and by~\cite{MDSFT13,WWY13}.
The inverses here should be understood as generalized inverses i.e., the inverse on the support. When $\rho \ll \sigma$ is not satisfied, both quantities are set to $\infty$. Whenever $\rho$ and $\sigma$ commute, both definitions agree and as $\rho$ and $\sigma$ can be diagonalized in the same basis this also matches with the classical definition~\eqref{eq:def_classical_div}. For this reason, throughout the paper, if $\rho$ and $\sigma$ commute, then we simply write $D_{\alpha}(\rho \| \sigma)$ for the classical $\alpha$-R\'enyi divergence in definition~\eqref{eq:def_classical_div}.

Another natural definition is the measured R\'enyi divergence, which is obtained by performing the same measurement on $\rho$ and $\sigma$ and then considering the classical R\'enyi divergence after performing the measurement:
\begin{align}
D^{\mathbb{M}}_{\alpha}(\rho \| \sigma) &= \sup_{\{\ket{v_x}\}_{x \in \Sigma}} D_{\alpha}\left( \sum_{x} \proj{v_x} \rho \proj{v_x} \; \Big\| \; \sum_{x} \proj{v_x} \sigma \proj{v_x} \right) \label{eq:def_meas_div} \ ,
\end{align}
where the supremum is chosen over all orthonormal bases $\{\ket{v_x}\}_{x}$ of $\CC^{\Sigma}$. This definition was proposed by~\cite{Don86,HP91} and we refer to~\cite{BFT15} for the equivalence between the different variants.
 
\paragraph{Contributions} In this paper, we put forward another way of defining quantum R\'enyi divergences through a convex optimization program.   Even if such divergences may not have operational interpretations in terms of some information processing task, we demonstrate in this paper that they can nonetheless be useful tools for proofs and for computations.
Given $\alpha \in (1,\infty)$, we define the $\#$ R\'enyi divergence of order $\alpha$ between two positive semidefinite operators $\rho, \sigma$ as
\begin{align}
\newD_{\alpha}( \rho \| \sigma) &:= \frac{1}{\alpha - 1} \log \newQ_{\alpha}(\rho \| \sigma) \ , \notag \\
\newQ_{\alpha}(\rho \| \sigma) &:= \inf_{A \geq 0} \tr(A)  \quad \text{ s.t. } \quad \rho \leq \sigma \#_{\frac{1}{\alpha}} A \ .\label{eq:expr_min_def}
\end{align}
Here $\sigma \#_{\frac{1}{\alpha}} A$ denotes the $\frac{1}{\alpha}$-geometric mean of $\sigma$ and $A$. We recall the definitions and properties of the matrix geometric mean in Section \ref{sec:gm} below. Using the joint concavity of the matrix geometric mean, the optimization program in~\eqref{eq:expr_min_def} is convex and for rational values of $\alpha$ it can be expressed as a semidefinite program~\cite{Sag13,FS17}.
We show the following properties:
 
\begin{itemize}
\item We prove in Section~\ref{sec:states} that $\newD_{\alpha}$ satisfies the data processing inequality and it matches with $D_{\alpha}$ for commuting operators. In addition, it is subadditive under tensor product and when regularized, it is equal to the sandwiched divergence i.e., we have (Proposition \ref{prop:relation_other_div})
\[
\lim_{n\rightarrow\infty} \frac{1}{n} \newD_{\alpha}(\rho^{\otimes n} \| \sigma^{\otimes n}) \;\; = \;\; \widetilde{D}_{\alpha}(\rho\|\sigma).
\]
\item We establish in Section~\ref{sec:channels} that the extension of $\newD_{\alpha}$ to channels has an expression as a convex optimization program (similar to the one for states~\eqref{eq:expr_min_def}) and satisfies subadditivity under tensor product as well as a chain rule property. Furthermore, when regularized, it gives the regularized sandwiched divergence between channels.
\end{itemize}
We then give some applications of $\newD_{\alpha}$ in Section~\ref{sec:applications}.
\begin{itemize}
\item We show that, for $\alpha > 1$, the regularized sandwiched $\alpha$-R\'enyi divergence between quantum channels can be computed to arbitrary precision in finite time (Theorem \ref{thm:comp_reg_sand}).
\item We prove a new chain rule property for $\widetilde{D}_{\alpha}$ for $\alpha > 1$ (Corollary~\ref{cor:chainrulesandwiched}). In turn, the new chain rule property allows us to characterize the strong converse exponent for channel discrimination and show that in this regime adaptive strategies do not offer an advantage over nonadaptive strategies (Section~\ref{sec:channel_discrimination}).
\item We give improved bounds on amortized entanglement measures, which can be used for example to bound the quantum capacity of channels with free two-way classical communication (Section~\ref{sec:amortized_entanglement}). We restrict our focus in this paper to the resource of entanglement, but we expect the same techniques to be applicable in many other resource-theoretic frameworks~\cite{chitambar2019quantum}.
\end{itemize}
 
In addition to these applications, we mention that a close variant of this divergence is introduced in~\cite{BFF20} to bound conditional entropies for quantum correlations.

\begin{remark}[Convention for unnormalized $\rho$]
We note that divergences are most interesting when the first argument is normalized, i.e., $\tr(\rho) = 1$ but it is convenient to keep the definition general. To define it for a general positive semidefinite operator $\rho$, we use a convention which is not standard. We choose for this work to use the exact same expression, e.g.,~\eqref{eq:def_classical_div} for the classical case,~\eqref{eq:def_minim_div} for the sandwiched R\'enyi divergence etc... even if $\rho$ is not normalized. With this convention, $\widetilde{D}_{\alpha}(\rho \| \sigma) = \widetilde{D}_{\alpha}(\frac{\rho}{\tr(\rho)} \| \sigma) + \frac{\alpha}{\alpha-1} \log \tr(\rho)$, which is slightly different from the more standard choice made for example in~\cite{Tom15book} where the correction term for normalization is simply $\log \tr(\rho)$. Note however that the difference between these variants only depends on $\tr(\rho)$ and $\alpha$, and thus the two variants basically have the same properties even when $\rho$ is not normalized. In particular, we will be using the property that the regularized measured divergence \change{is equal to} the sandwiched divergence, a property which clearly holds equally well for both conventions.
\end{remark}

\subsection{Notation}

Let $\cH$ be a finite dimensional Hilbert space and we write $\Lin(\cH)$ for the set of linear operators on $\cH$, $\Pos(\cH)$ for the set of positive semidefinite operators on $\cH$ and $\D(\cH) = \{\rho \in \Pos(\cH) : \tr(\rho)=1\}$. For $A, B \in \Lin(\cH)$, we write $A \geq B$ if $A-B \in \Pos(\cH)$. \change{We let $\|A\|_{\infty} = \max_{\|\ket{\psi}\|=1} \|A \ket{\psi}\|$ be the operator norm of $A$.} Also, for positive semidefinite operators $A$ and $B$, we write $A \ll B$ when $\supp(A) \subseteq \supp(B)$, where $\supp(A)$ denotes the support of $A$. 
 We denote by $\spec(A)$ the spectrum of $A$. For $\rho, \sigma \in \Pos(\cH)$ with $\rho \ll \sigma$, we write $D_{\max}( \rho \| \sigma) = \log \inf\{\lambda \in \RR : \rho \leq \lambda \sigma\}$. When $\cH = X \otimes Y$ for some Hilbert spaces $X$ and $Y$, we often explicitly indicate the systems by writing $A_{XY}$ for $A \in \Pos(X \otimes Y)$. Then $A_{X}$ denotes $\tr_{Y}(A_{XY})$. 

We denote by $\CP(X, Y)$ the set of completely positive maps from $\Lin(X)$ to $\Lin(Y)$. For $\cN \in \CP(X, Y)$, we denote by $J^{\cN}_{XY} \in \Pos(X \otimes Y)$ the corresponding Choi state defined by $J^{\cN}_{XY} = (\cI_{X} \otimes \cN)(\Phi_{XX'})$. 
 Here $X'$ is a copy of the space $X$, $\Phi_{XX'} = \sum_{x,x'} \ketbra{x}{x'}_{X} \otimes \ketbra{x}{x'}_{X'}$, where $\{\ket{x}\}_{x}$ labels a fixed basis of $X$ and $X'$ and $\cI_{X}$ denotes the identity map on $\Lin(X)$.

%

\section{Geometric means and the Kubo-Ando theory}

\label{sec:gm}


In~\cite{kuboando}, Kubo and Ando developed a general theory of operator means from operator monotone functions. The goal of this section is to recall the properties of these means which will be useful for the rest of this paper. This paper will deal with the operator means obtained from the operator monotone functions $f(x) = x^{\beta}$ for $\beta \in [0,1]$ (the so-called $\beta$-matrix geometric mean), however we keep the discussion general as we believe other choices of $f$ can be useful.

Given an operator monotone function $f:[0,\infty) \rightarrow [0,\infty)$ such that $f(1)=1$, the Kubo-Ando mean $\sharpf$ is defined for any pair of positive semidefinite operators $A,B$ and satisfies the following properties \cite{kuboando}, see also \cite[Theorem 37.1 and the following discussion]{simonchapter}:

\begin{itemize}
\item[(i)] Monotonicity: $A \leq C$ and $B \leq D$ implies $A \sharpf B \leq C \sharpf D$ 

\item[(ii)] Transformer inequality: $M(A \sharpf B)M^* \leq (MAM^*) \sharpf (MBM^*)$, with equality if $M$ is invertible 

\item[(iii)] Continuity: if\footnote{By $A_n \downarrow A$ we mean $A_1 \geq A_2 \geq A_3 \geq ...$ and $A_n \rightarrow A$} $A_n \downarrow A$ and $B_n \downarrow B$ then $A_n \sharpf B_n \downarrow A\sharpf B$ 

\item[(iv)] $A \sharpf A = A$

\item[(v)] Joint-concavity:\footnote{Because the map $(A,B) \mapsto A \sharpf B$ is positively homogeneous, joint-concavity is equivalent to the inequality \eqref{eq:concavitygm}} for any $A_i, B_i \geq 0$ we have
\begin{equation}
\label{eq:concavitygm}
\sum_{i} A_i \sharpf B_i \leq \left(\sum_{i} A_i\right) \sharpf \left(\sum_{i} B_i\right)  \ .
\end{equation}

\item[(vi)] For invertible $A$, we have
\begin{equation}
\label{eq:geominv}
A\sharpf B = A^{1/2} f\left(A^{-1/2} B A^{-1/2}\right) A^{1/2}.
\end{equation}
\end{itemize}

Note that properties (ii) and (v) immediately imply that if $\cN$ is a completely positive map, then $\cN(A \sharpf B) \leq \cN(A) \sharpf \cN(B)$.
\change{In fact it is known that the inequality above is true even if we only assume that $\cN$ is a positive map (instead of completely positive), see e.g., \cite[Proposition 3.30]{hiai2017different} or \cite[Lemma 6.3]{Mat13}. 
\begin{proposition}
\label{prop:monotonepositive}
If $\cN$ is a positive map and $A,B \geq 0$ then $\cN(A \sharpf B) \leq \cN(A) \sharpf \cN(B)$.
\end{proposition}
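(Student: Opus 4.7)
The plan is to reduce the statement for a general Kubo--Ando mean $\sharpf$ to the single case of the parallel sum $X : Y := (X^{-1}+Y^{-1})^{-1}$ via L\"owner's integral representation of the generating function $f$, and then invoke the (classical) monotonicity of the parallel sum under positive maps.

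First I would reduce to $A, B > 0$ by continuity: since $\cN$ is linear and positive we have $\cN(A+\eps I) = \cN(A) + \eps\, \cN(I)$ with $\cN(I) \geq 0$, so property (iii) lets us pass to the limit $\eps \downarrow 0$ once the inequality is established for strictly positive arguments. Then L\"owner's theorem expresses every operator monotone $f:[0,\infty)\to[0,\infty)$ with $f(1)=1$ as
\[
f(t) \;=\; a + bt + \int_{(0,\infty)} \frac{(1+s)\,t}{s+t}\, d\mu(s), \qquad a, b \geq 0,
\]
with $\mu$ a positive Borel measure on $(0,\infty)$ normalized so that $a + b + \mu((0,\infty)) = 1$. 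Substituting into $A \sharpf B = A^{1/2} f(A^{-1/2} B A^{-1/2}) A^{1/2}$ and using the identity
\[
A^{1/2}\, \frac{(1+s)\,T}{s+T}\, A^{1/2} \;=\; \frac{1+s}{s}\,(sA : B), \qquad T := A^{-1/2} B A^{-1/2},
\]
(which follows from $(s+T)^{-1} T = I - s(s+T)^{-1}$ together with the elementary fact $sA - (sA:B) = sA(sA+B)^{-1}\,sA$) yields the representation
\[
A \sharpf B \;=\; aA + bB + \int_{(0,\infty)} \frac{1+s}{s}\,(sA : B)\, d\mu(s).
\]

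Given this representation, the proposition reduces by linearity of $\cN$ to the single parallel-sum inequality
\[
\cN(X : Y) \;\leq\; \cN(X) : \cN(Y), \qquad X, Y > 0,
\]
which is the classical Ando--Anderson--Trapp result. Granted this step,
\[
\cN(A \sharpf B) \;\leq\; a\cN(A) + b\cN(B) + \int \tfrac{1+s}{s}\,(s\cN(A) : \cN(B))\, d\mu(s) \;=\; \cN(A) \sharpf \cN(B),
\]
as desired.

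The main obstacle is the parallel-sum inequality for positive (as opposed to $2$-positive) maps. The natural block-matrix characterization $X : Y = \max\bigl\{Z = Z^* : \begin{sm} X - Z & -Z \\ -Z & Y - Z \end{sm} \geq 0\bigr\}$ is \emph{not} preserved by a merely positive $\cN$, since applying $\cN$ block-wise to a $2 \times 2$ PSD matrix would require $2$-positivity. The Anderson--Trapp workaround exploits the identity $X : Y = X - X(X+Y)^{-1} X$ together with the particular rank-one block structure $\begin{sm} I \\ I \end{sm} Z \begin{sm} I & I \end{sm}$ that appears in the variational formula; this is where the nontriviality of the result is concentrated. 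The remaining steps in the overall plan are routine linearity and continuity.
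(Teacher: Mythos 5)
Your proposal is correct and follows essentially the same route as the paper: the paper does not reprove this proposition but quotes it from the literature (\cite[Proposition 3.30]{hiai2017different}, \cite[Lemma 6.3]{Mat13}), and the standard argument behind those citations is exactly your reduction --- the Kubo--Ando integral representation (recorded in the paper as \eqref{eq:intrepsharpf}) reduces everything to the single inequality $\cN(X:Y)\le\cN(X):\cN(Y)$ for positive maps, which is classical (Ando); your continuity reduction and the identity $A^{1/2}\tfrac{(1+s)T}{s+T}A^{1/2}=\tfrac{1+s}{s}\,(sA):B$ are fine.

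One correction to your closing paragraph, where you locate the ``main obstacle'': the block-matrix characterization you dismiss does work for merely positive maps. The matrices produced by the variational formula \eqref{eq:varpar}, namely $\begin{sm} X-Z & -Z \\ -Z & Y-Z \end{sm}\ge 0$ with $Z=X:Y$, have \emph{Hermitian} off-diagonal blocks, and a positive (not necessarily $2$-positive) map applied blockwise to a positive semidefinite block matrix with Hermitian off-diagonal blocks again yields a positive semidefinite matrix; this is the same phenomenon that makes Kadison's inequality hold for merely positive unital maps on Hermitian elements, and it appears as \cite[Exercise 3.2.2]{bhatia2009positive}. Concretely, if $C>0$ and $C^{-1/2}BC^{-1/2}=\sum_k\lambda_k P_k$ is a spectral decomposition, then $\begin{sm} A & B \\ B & C \end{sm}=\sum_k \begin{sm} \lambda_k^2 & \lambda_k \\ \lambda_k & 1 \end{sm}\otimes C^{1/2}P_kC^{1/2}+\begin{sm} 1 & 0 \\ 0 & 0 \end{sm}\otimes\left(A-BC^{-1}B\right)$, and each summand visibly stays positive semidefinite after applying $\cN$ blockwise (the general case follows by replacing $C$ with $C+\eps I$ and letting $\eps\downarrow0$). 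Applied to $\begin{sm} X & 0 \\ 0 & Y \end{sm}\ge\begin{sm} Z & Z \\ Z & Z \end{sm}$ with $Z=X:Y$, this gives $\cN(X:Y)\le\cN(X):\cN(Y)$ in a few lines, so the step you leave to an ``Anderson--Trapp workaround'' needs no black box and no $2$-positivity anywhere; this is precisely how the sources the paper cites close the argument.
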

}
\change{
The Kubo-Ando mean has the integral representation \cite{kuboando}
\begin{equation}
\label{eq:intrepsharpf}
A \#_f B = f(0) A + \left( \lim_{x\downarrow0} xf(x^{-1})\right)B + \int_{0}^{\infty} \frac{1+t}{t} (tA) : B d\mu(t)
\end{equation}
for some measure $\mu$ on $(0,\infty)$ depending on $f$, and where for $A',B \geq 0$, $A':B$ denotes the \emph{parallel sum} of $A'$ and $B$ which satisfies \cite[Theorem 9]{andersontrapp}
\begin{equation}
\label{eq:varpar}
\langle x , (A':B) x \rangle = \inf_{\substack{y,z \in \cH\\ y+z = x}} \langle y, A' y \rangle + \langle z, B z \rangle.
\end{equation}
}
%

Some additional properties of the Kubo-Ando mean will be needed in this paper.

\begin{proposition}
\label{prop:p}
For any operator monotone function $f:[0,\infty)\rightarrow [0,\infty)$, the following properties hold:

(vii) If $B \ll A$ then the formula \eqref{eq:geominv} is still valid provided we use the generalized inverse for $A$.

(viii) Direct sum: $(A_1 + A_2) \sharpf (B_1 + B_2) = (A_1 \sharpf B_1) + (A_2 \sharpf B_2)$, for any $A_1,A_2,B_1,B_2 \geq 0$ such that $\supp(A_1 + B_1) \perp \supp(A_2 + B_2)$.

(ix) If $f$ is such that $\lim_{x \downarrow 0} x f(\frac{1}{x}) = 0$ then $A\sharpf B \ll A$ and if $f(0) = 0$ then $A \sharpf B \ll B$, for any $A,B \geq 0$. 

(x) If $B_n \to B$ and $B_n \ll A$ then $A \sharpf B_n \to A\sharpf B$.
\end{proposition}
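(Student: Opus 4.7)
The plan is to dispatch (vii)–(x) in turn by combining the Kubo–Ando axioms (i)–(vi) with the integral representation \eqref{eq:intrepsharpf}. For (vii), I would use monotone continuity: $A_\epsilon := A + \epsilon I$ is invertible, so formula \eqref{eq:geominv} gives $A_\epsilon \sharpf B = A_\epsilon^{1/2} f(A_\epsilon^{-1/2} B A_\epsilon^{-1/2}) A_\epsilon^{1/2}$, and property (iii) yields $A_\epsilon \sharpf B \to A \sharpf B$ as $\epsilon \downarrow 0$. Since $B \ll A$, we have $B = PBP$ with $P$ the projector onto $\supp(A)$, and $A_\epsilon$ commutes with $P$, so $A_\epsilon^{-1/2} B A_\epsilon^{-1/2}$ lives in the $\supp(A)$ block of the splitting $\supp(A) \oplus \ker(A)$, where as $\epsilon \downarrow 0$ it converges to $A^{-1/2} B A^{-1/2}$ with the generalized inverse. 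Any contribution $f(0)$ on the $\ker(A)$ block is annihilated by the outer $A_\epsilon^{1/2}$ factors in the limit, yielding the desired formula.

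For (viii), I would reduce to the invertible case by adding $\epsilon I$. In the orthogonal decomposition $\supp(A_1+B_1) \oplus \supp(A_2+B_2) \oplus R$ (with $R$ the remaining complement), both $A := A_1+A_2$ and $B := B_1+B_2$ are block-diagonal with vanishing third block; adding $\epsilon I$ preserves block-diagonality while making them invertible. The functional calculus in \eqref{eq:geominv} respects this block structure, so $A_\epsilon \sharpf B_\epsilon$ decomposes into $(A_1+\epsilon I)\sharpf(B_1+\epsilon I)$, $(A_2+\epsilon I)\sharpf(B_2+\epsilon I)$, and $\epsilon I$; passing to $\epsilon \downarrow 0$ with (iii) gives the result. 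For (ix), I would exploit \eqref{eq:intrepsharpf}: by the variational formula \eqref{eq:varpar} with $y=x,z=0$ or $y=0,z=x$, the parallel sum satisfies $(tA):B \leq tA$ and $(tA):B \leq B$, so each integrand is $\ll A$ and $\ll B$, and since the integrand is positive the integral term inherits both support inclusions. Under $\lim_{x\downarrow 0} xf(1/x)=0$ the coefficient of $B$ in \eqref{eq:intrepsharpf} vanishes, leaving only $f(0)A$ and the integral, both $\ll A$; symmetrically $f(0)=0$ gives $A \sharpf B \ll B$.

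Finally (x) follows by combining (vii) with continuity of functional calculus. Since $B_n \to B$ and $B_n = PB_nP$ for every $n$ (where $P$ again denotes the projector onto $\supp(A)$), we get $B = PBP$, i.e.\ $B \ll A$, so (vii) applies to each $B_n$ and to $B$. The map $X \mapsto f(X)$ is continuous on self-adjoint operators (since $f$ is continuous on $[0,\infty)$), so $A^{1/2}f(A^{-1/2} B_n A^{-1/2})A^{1/2}$ converges to $A^{1/2}f(A^{-1/2} B A^{-1/2})A^{1/2}$, i.e.\ $A\sharpf B_n \to A\sharpf B$. The main obstacle I expect is the support/kernel bookkeeping in (vii); once that formula is in hand, the remaining items fall out of functional calculus and the integral representation.
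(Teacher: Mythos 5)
Your proposal is correct and follows essentially the same route as the paper: regularize to the invertible case and use formula \eqref{eq:geominv} together with the continuity property (iii) for (vii) and (viii), deduce (ix) from the integral representation \eqref{eq:intrepsharpf} and the support bounds $(tA):B \leq tA,\, B$ coming from \eqref{eq:varpar}, and obtain (x) from (vii) plus continuity of the functional calculus. The only cosmetic differences are that you perturb by $\eps I$ (and in (viii) also perturb $B$) where the paper adds $\eps$ times suitable projectors and perturbs only the $A_i$; this changes nothing essential, provided you note that the blockwise means appearing in your (viii) are identified with the full-space means $A_i \sharpf B_i$ by the same $\eps$-regularization argument.
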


\begin{proof}

(vii) \change{Let $P$ be the projector onto the orthogonal complement of $\supp(A)$. For any $\eps > 0$, $A+\eps P$ is invertible, thus we have, by \eqref{eq:geominv}
\[
(A+\eps P) \sharpf B = (A+\eps P)^{1/2} f\left((A+\eps P)^{-1/2} B (A+\eps P)^{-1/2}\right) (A+\eps P)^{1/2}.
\]
Since $A$ and $P$ have orthogonal supports we have $(A+\eps P)^{-1/2} = A^{-1/2} + \eps^{-1/2} P$ where $A^{-1/2}$ is the square root of the generalized inverse of $A$. It follows, since $B \ll A$, that $(A+\eps P)^{-1/2} B (A+\eps P)^{-1/2} = A^{-1/2} B A^{-1/2}$. This implies
\[
(A+\eps P)\sharpf B = (A^{1/2} + \eps^{1/2} P) f\left(A^{-1/2} B A^{-1/2}\right) (A^{1/2} + \eps^{1/2} P).
\]
Letting $\eps\to0$, and using the continuity property (iii) we get the desired equality.
}
\sout{After a suitable unitary we can assume that $A$ and $B$ are in blocks \mbox{$A = \begin{sm} A_1 & 0\\ 0 & 0\end{sm}$} and \mbox{$B = \begin{sm} B_1 & 0\\ 0 & 0\end{sm}$} where $A_1$ is invertible. Using the fact that $0 \sharpf 0 = 0$, we have $A\sharpf B = (A_1 \sharpf B_1) \oplus 0$ and $A_1 \sharpf B_1$ is given by the formula \eqref{eq:geominv}. This is exactly what one gets by using the formula directly with $A$ and $B$, and using the generalized inverse.}

(viii) \change{We first assume that $A_1+A_2$ is invertible. Using the orthogonality of supports condition, this implies that $\supp(A_1) = \supp(A_1+B_1)$ and $\supp(A_2) = \supp(A_2 + B_2)$. 
In addition $(A_1+A_2)^{-1} = A_1^{-1} + A_2^{-1}$. 
Thus we can use~\eqref{eq:geominv} together with the orthogonality of supports condition to compute the mean
\begin{align*}
(A_1+A_2) \sharpf (B_1+B_2) &= (A_1+A_2)^{1/2} f\left((A_1+A_2)^{-1/2} (B_1+B_2) (A_1+A_2)^{-1/2}\right) (A_1+A_2)^{1/2} \\
&= A_1^{1/2} f\left(A_1^{-1/2} B_1 A_1^{-1/2}\right) A_1^{1/2} + A_2^{1/2} f\left(A_2^{-1/2} B_2 A_2^{-1/2}\right) A_2^{1/2} \ .
\end{align*}
Using again~\eqref{eq:geominv} for each one of the two terms, we proved the desired statement when $A_1 + A_2$ is invertible.
 For the general case, we let $P$ be the orthogonal projector onto $\supp(A_1+B_1)$. Then we apply the previous argument to $A_1 + \varepsilon P$ and $A_2 + \varepsilon(I-P)$ for $\eps > 0$, and use the continuity property (iii) to take the limit $\eps \to 0$ and conclude.}

(ix) \change{This follows from the integral representation \eqref{eq:intrepsharpf} and the fact that $\supp((tA):B) = \supp(A) \cap \supp(B)$, which can be easily shown from the variational formulation \eqref{eq:varpar}.}

\sout{After a suitable unitary, we can assume $A$ has the block form \mbox{$A = \begin{sm} A_1 & 0\\ 0 & 0\end{sm}$} where $A_1$ is invertible. In the same basis we write \mbox{$B = \begin{sm} B_{11} & B_{12}\\ B_{12}^* & B_{22} \end{sm}$}. Note that \mbox{$B \leq 2 \begin{sm} B_{11} & 0\\ 0 & B_{22} \end{sm}$} and $A \leq 2A$. Thus by monotonicity we have $A \sharpf B \leq 2 (A_1 \oplus 0) \sharpf (B_{11} \oplus B_{22}) = 2 (A_1 \sharpf B_{11}) \oplus (0 \sharpf B_{22})$. To conclude observe that by our assumptions and using the properties of $\sharpf$, we get $0 \sharpf B_{22} = (\lim_{x \downarrow 0} x f(\frac{1}{x})) I \sharpf B_{22} = 0$ which shows that $A \sharpf B \ll A$. For $B$, we use the same argument inverting the roles of $A$ and $B$, we get $A \sharpf B \leq 2 f(\frac{1}{2}) (A_{11} \oplus A_{22}) \sharpf (B_{1} \oplus 0) = 2 f(\frac{1}{2}) (A_{11} \sharpf B_{1}) \oplus (A_{22} \sharpf 0)$. Then using the fact that $f(0) = 0$, we have $A_{22} \sharpf 0 = 0$ and we get the desired result.}

(x) \change{This follows from (vii) and the continuity of $f$.} \sout{If $A$ is invertible this follows immediately from \eqref{eq:geominv}. If not, we use an argument by blocks like above.}
\end{proof}

\begin{remark}[Lack of continuity of $\sharpf$ in general]
Property (x) is not true if we remove the condition $B_n \ll A$.
Indeed consider a unit vector $v \in \cH$, and let $A = vv^*$ and $B_n = v_n v_n^*$ where $v_n \neq v$ are unit vectors for all $n$ satisfying $v_n \rightarrow v$. Then if $f$ satisfies the conditions in property (ix) in Proposition \ref{prop:p}, we have that $A \sharpf B_n = 0$ for all $n$, and yet $A \sharpf (\lim_n B_n) = vv^*$.
\end{remark}

\change{
We will also need some specific properties that hold for $f(x) = x^{\beta}$ for $\beta \in (0,1)$. For such a function, we write the Kubo-Ando mean $\sharpf$ as $\#_{\beta}$.

\begin{proposition}
For any $\beta \in (0,1)$, we have

(xi) Tensor products: $(A_1 \otimes A_2) \#_{\beta} (B_1 \otimes B_2) = (A_1 \#_{\beta} B_1) \otimes (A_2 \#_{\beta} B_2)$.

(xii) $(aA) \#_{\beta} (bB) = a^{1-\beta} b^{\beta} (A \#_{\beta} B)$ for any $a \geq 0,b\geq 0$.
\end{proposition}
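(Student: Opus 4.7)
The plan is to prove both identities first in the case where $A_1, A_2$ (respectively $A$) are invertible, by direct computation using formula \eqref{eq:geominv}, and then bootstrap to the general positive semidefinite case via the continuity property (iii) from Section~\ref{sec:gm}.

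For (xii), assuming $A$ is invertible and $a,b > 0$, I would write
\[
(aA) \#_{\beta} (bB) = (aA)^{1/2} \left( (aA)^{-1/2} (bB) (aA)^{-1/2} \right)^{\beta} (aA)^{1/2}
= a \cdot (a^{-1}b)^{\beta}\, A^{1/2} (A^{-1/2} B A^{-1/2})^{\beta} A^{1/2},
\]
which simplifies to $a^{1-\beta} b^{\beta} (A \#_{\beta} B)$ using \eqref{eq:geominv}. The edge cases $a=0$ or $b=0$ follow from $f(0) = 0$ and $\lim_{x\downarrow 0} x f(1/x) = 0$ for $\beta \in (0,1)$ (i.e., item (ix)). For non-invertible $A$, I replace $A$ by $A + \eps I$, apply the invertible case, and take $\eps \downarrow 0$ using property (iii).

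For (xi), assuming $A_1$ and $A_2$ are both invertible (so $A_1 \otimes A_2$ is invertible with inverse $A_1^{-1} \otimes A_2^{-1}$), I would use \eqref{eq:geominv} to write
\[
(A_1 \otimes A_2) \#_{\beta} (B_1 \otimes B_2) = (A_1 \otimes A_2)^{1/2} \bigl( (A_1^{-1/2} B_1 A_1^{-1/2}) \otimes (A_2^{-1/2} B_2 A_2^{-1/2}) \bigr)^{\beta} (A_1 \otimes A_2)^{1/2}.
\]
The key algebraic identity is $(X \otimes Y)^{\beta} = X^{\beta} \otimes Y^{\beta}$ for $X,Y \geq 0$, which follows from functional calculus since $X \otimes Y$ is diagonalizable in a product basis. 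Distributing the outer factors across the tensor product then gives $(A_1 \#_{\beta} B_1) \otimes (A_2 \#_{\beta} B_2)$, again by \eqref{eq:geominv}.

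To remove the invertibility assumption on $A_1, A_2$, I would apply the invertible case to $A_i^{\eps} := A_i + \eps I$ and note that $A_1^{\eps} \otimes A_2^{\eps} \downarrow A_1 \otimes A_2$ as $\eps \downarrow 0$ (the difference is a positive operator of order $\eps$). Property (iii) then gives convergence of the left-hand side, and separate application of (iii) to each factor gives convergence of the right-hand side. The only mild subtlety is ensuring monotone decrease on both sides so that (iii) applies directly; this is immediate for the tensor of two monotonically decreasing families. No single step here is a serious obstacle — the entire argument reduces to the formula \eqref{eq:geominv} plus a continuity limit — but the cleanest presentation requires being explicit about why $(X\otimes Y)^{\beta} = X^{\beta} \otimes Y^{\beta}$, which is where the specialization to $f(x) = x^{\beta}$ is actually used (the corresponding identity fails for general operator monotone $f$).
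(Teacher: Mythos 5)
Your proposal is correct and follows essentially the same route as the paper: establish both identities in the invertible case via formula \eqref{eq:geominv} (using multiplicativity of $x \mapsto x^{\beta}$ under tensor products for (xi) and the $a=0$/$b=0$ edge cases via property (ix) for (xii)), then perturb $A$, $A_1$, $A_2$ by $\eps I$ and pass to the limit $\eps \downarrow 0$ using the continuity property (iii). The only difference is that you spell out the identity $(X\otimes Y)^{\beta} = X^{\beta}\otimes Y^{\beta}$ and the monotone decrease of the tensor-product family, which the paper leaves implicit.
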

\begin{proof}
(xi) When $A_1 \otimes A_2$ is invertible this follows from the formula \eqref{eq:geominv}. If not, note that for $\eps > 0$ we have
\begin{equation}
\label{eq:additivityeps}
((A_1 + \eps I) \otimes (A_2 + \eps I)) \sharpf (B_1 \otimes B_2) \; = \; ((A_1 + \eps I) \sharpf B_1) \otimes ((A_2 + \eps I) \sharpf B_2).
\end{equation}
When $\eps \downarrow 0$, note that $(A_1 + \eps I) \otimes (A_2 + \eps I) \downarrow A_1 \otimes A_2$. Thus by property (iii) we get the required equality by taking the limit $\eps \downarrow 0$ in \eqref{eq:additivityeps}.

(xii) Using property (ix), if $a = 0$ then both sides of the equality are $0$. Otherwise, if $a > 0$ then using the formula \eqref{eq:geominv}, we have that for any $\eps > 0$, $(a(A+\eps I)) \#_{\beta} (bB) = a^{1-\beta} b^{\beta} ((A+\eps I) \#_{\beta} B)$. Using the continuity property (iii) and taking $\eps \downarrow 0$, we have that $(a(A+\eps I)) \#_{\beta} (bB) \downarrow (aA) \#_{\beta} (bB)$ and $((A+\eps I) \#_{\beta} B) \downarrow (A \#_{\beta} B)$ which establishes the desired statement.
\end{proof}
}

We note that the matrix geometric mean is often defined for positive semidefinite operators as the limit as $\eps \to 0$ of the formula~\eqref{eq:geominv} applied to $A+\eps I$ and $B+\eps I$ and this clearly matches with the general approach of Kubo-Ando. This is the way it is presented in~\cite{bhatia2009positive} and we refer to~\cite{KW20} for a systematic study of the properties of the geometric R\'enyi divergence with this definition.

\section{Properties for positive semidefinite operators}
\label{sec:states}

In this section we state and prove basic properties for the new quantity $\newD_{\alpha}(\rho\|\sigma)$.

\begin{proposition}[First properties]
\label{prop:compact}
The feasible set of \eqref{eq:expr_min_def} is nonempty iff $\rho \ll \sigma$. Furthermore, if $\rho \ll \sigma$ the infimum in \eqref{eq:expr_min_def} is attained at some $A \ll \sigma$, more precisely, at some $A \leq \| \sigma^{-1/2} \rho \sigma^{-1/2} \|_{\infty}^{\alpha-1} \tr(\rho) P$ where $P$ is the projector on $\supp(\sigma)$ and the inverses are generalized inverses.
\end{proposition}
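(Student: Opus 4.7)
Write $P$ for the projector onto $\supp(\sigma)$, $X := \sigma^{-1/2}\rho\sigma^{-1/2}$ (with generalized inverses on $\supp(\sigma)$), $\lambda := \|X\|_\infty$ and $\nu := \lambda^{\alpha-1}\tr(\rho)$. The plan is to dispatch the four claims—necessity of $\rho \ll \sigma$, sufficiency, attainment of the infimum, and the operator bound $A \leq \nu P$—with the aid of an explicit feasible candidate and a trace-bound compactness argument.

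For necessity, Proposition~\ref{prop:p}(ix) applied to $f(x) = x^{1/\alpha}$ (for which $\lim_{x\downarrow 0} x f(1/x) = \lim x^{1-1/\alpha} = 0$ since $\alpha > 1$) gives $\sigma \#_{1/\alpha} A \ll \sigma$ for every $A \geq 0$, forcing $\rho \ll \sigma$ as soon as a feasible $A$ exists. Sufficiency and the seed for the later operator bound both come from the explicit candidate $A_0 := \sigma^{1/2} X^{\alpha} \sigma^{1/2}$: since $A_0 \ll \sigma$, Proposition~\ref{prop:p}(vii) yields $\sigma \#_{1/\alpha} A_0 = \sigma^{1/2}(X^\alpha)^{1/\alpha} \sigma^{1/2} = \rho$, so $A_0$ is feasible. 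Moreover $X$ commutes with $P$ and $X \leq \lambda P$, so $X^\alpha = X\cdot X^{\alpha-1} \leq \lambda^{\alpha-1} X$, giving $A_0 \leq \lambda^{\alpha-1}\rho$ and in particular $\tr A_0 \leq \nu$.

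For attainment and the operator bound, the natural arena is $\mathcal{K} := \{A : 0 \leq A \ll \sigma,\ \tr A \leq \nu,\ \rho \leq \sigma \#_{1/\alpha} A\}$. This set is bounded (since $A \ll \sigma$ and $\tr A \leq \nu$ force $A \leq \nu P$), closed (by Proposition~\ref{prop:p}(x), which applies within the invariant subspace $\supp(\sigma)$), and nonempty ($A_0 \in \mathcal{K}$), hence compact; so $\tr$ attains its minimum on $\mathcal{K}$ at some $A_{\min}$. To promote $A_{\min}$ to a minimizer of the unrestricted program I would apply the transformer inequality (property (ii)) with $M = P$ together with $P\sigma P = \sigma$: for any feasible $\tilde A \geq 0$ one has $P(\sigma \#_{1/\alpha} \tilde A)P \leq \sigma \#_{1/\alpha}(P\tilde A P)$, so $P\tilde A P$ is feasible with $\tr(P\tilde A P) \leq \tr\tilde A$. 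Reducing to $\tilde A \ll \sigma$ in this way, either $\tr\tilde A \leq \nu$ and $\tilde A \in \mathcal{K}$, or $\tr\tilde A > \nu \geq \tr A_{\min}$—either way $\tr\tilde A \geq \tr A_{\min}$, so $A_{\min}$ attains the unrestricted infimum. The quantitative bound then drops out: since $A_{\min} \in \mathcal K$ is PSD, $\|A_{\min}\|_\infty \leq \tr A_{\min} \leq \nu$, and combined with $A_{\min} \ll \sigma$ this upgrades to $A_{\min} \leq \nu P$.

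The step I expect to need the most care is the closedness of $\mathcal{K}$, because Proposition~\ref{prop:p}(x) only provides continuity of the Kubo-Ando mean along sequences whose second argument stays $\ll \sigma$; however, the restriction $A \ll \sigma$ built into $\mathcal{K}$ makes this harmless, and everything else is a direct chain of the Kubo-Ando properties from Section~\ref{sec:gm} together with the elementary PSD inequality $\|\cdot\|_\infty \leq \tr(\cdot)$.
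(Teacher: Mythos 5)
Your proof is correct and follows essentially the same route as the paper: necessity of $\rho \ll \sigma$ via property (ix), an explicit feasible point of trace at most $\|\sigma^{-1/2}\rho\sigma^{-1/2}\|_\infty^{\alpha-1}\tr(\rho)$, restriction to $A \ll \sigma$ by compressing with $P$ through the transformer inequality, the trace bound upgraded to $A \leq \nu P$, and attainment from closedness (property (x)) plus boundedness of the restricted feasible set. The only (harmless) deviation is your choice of seed $A_0 = \sigma \#_{\alpha} \rho$, which you then bound by the paper's candidate $\|\sigma^{-1/2}\rho\sigma^{-1/2}\|_\infty^{\alpha-1}\rho$ anyway.
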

\begin{proof}
If \eqref{eq:expr_min_def} is feasible then $\rho \leq \sigma \#_{1/\alpha} A \ll \sigma$ by Proposition \ref{prop:p}. Conversely, assume $\rho \ll \sigma$ then  $A = \| \sigma^{-1/2} \rho \sigma^{-1/2} \|_{\infty}^{\alpha-1} \rho$ is feasible. In fact, we have
\begin{align*}
\left(\sigma^{-1/2} \rho \sigma^{-1/2} \right)^{\alpha} \leq \left\| \sigma^{-1/2} \rho \sigma^{-1/2} \right\|_{\infty}^{\alpha - 1} \left(\sigma^{-1/2} \rho \sigma^{-1/2} \right) \ .
\end{align*}
Using the operator monotonicity of $t \mapsto t^{1/\alpha}$, we get
\begin{align*}
\rho &\leq \left\| \sigma^{-1/2} \rho \sigma^{-1/2} \right\|_{\infty}^{\frac{\alpha - 1}{\alpha}} \sigma^{1/2} \left(\sigma^{-1/2} \rho \sigma^{-1/2} \right)^{1/\alpha} \sigma^{1/2} \\
&= \sigma \#_{1/\alpha} A \ .
\end{align*}
Note that this shows the program~\eqref{eq:expr_min_def} always has a trivial achievable value of $\| \sigma^{-1/2} \rho \sigma^{-1/2} \|^{\alpha-1}_{\infty} \tr(\rho)$.

Let $P$ be a projector on $\supp(\sigma)$, and let $A$ be a feasible point of \eqref{eq:expr_min_def}. Then $PAP$ is also feasible and satisfies $\tr(PAP) \leq \tr(A)$. Indeed, by the transformer inequality we have
\[
\sigma\#_{1/\alpha} PAP \geq P(\sigma \#_{1/\alpha} A)P \geq P\rho P = \rho \ .
\]
Thus this means we can restrict $A$ to satisfy $A \ll \sigma$. If we in addition assume that $A$ achieves a value for the objective function in~\eqref{eq:expr_min_def} that is at least as good as the trivial value of $\| \sigma^{-1/2} \rho \sigma^{-1/2} \|^{\alpha-1}_{\infty} \tr(\rho)$, we may assume that $\tr(A) \leq \| \sigma^{-1/2} \rho \sigma^{-1/2} \|^{\alpha-1}_{\infty} \tr(\rho)$ which implies that $A \leq \| \sigma^{-1/2} \rho \sigma^{-1/2} \|^{\alpha-1}_{\infty} \tr(\rho) P$. 

\change{We can thus write $\newQ_{\alpha}(\rho \| \sigma) = \inf \left\{ \tr(A) : \sigma \#_{1/\alpha} A \geq \rho \text{ and } A \leq \| \sigma^{-1/2} \rho \sigma^{-1/2} \|^{\alpha-1}_{\infty} \tr(\rho) P\right\}$. The feasible set is closed (by Proposition \ref{prop:p}) and bounded, hence the infimum is attained.}
\sout{Now, to show that the minimum is achieved, consider a sequence $A_n \ll \sigma$ such that $A_n \rightarrow A$. Then by Proposition \ref{prop:p} we have $\rho \leq \sigma \#_{1/\alpha} A_n \rightarrow \sigma \#_{1/\alpha} A$. This means that the limit point $A$ is feasible, and so the minimum is attained.}
\end{proof}

We now show that $\newD_{\alpha}$ satisfies the main properties of a R\'enyi divergence: it satisfies the data-processing inequality and for commuting states, it matches with the classical R\'enyi divergence.
\begin{proposition}
\label{prop:data_processing}
Let $\alpha > 1$. The function $(\rho, \sigma) \mapsto \newQ_{\alpha}(\rho, \sigma)$ is jointly convex. Furthermore  $\newD_{\alpha}$ \sout{satisfies the data-processing inequality} is monotone under trace-preserving positive maps. More precisely, let $\rho$ and $\sigma$ be positive semidefinite operators on the Hilbert space $X$. Then if $\cN$ is a \sout{completely} positive and trace-preserving map from $\Lin(X)$ to $\Lin(Y)$ then
\begin{align*}
\newD_{\alpha}(\cN(\rho) \| \cN(\sigma)) \leq \newD_{\alpha}(\rho \| \sigma) \ .
\end{align*}
In addition, if $\rho$ and $\sigma$ commute, then $\newD_{\alpha}(\rho \| \sigma) = D_{\alpha}(\rho \| \sigma)$.
\end{proposition}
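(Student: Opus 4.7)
The plan is to establish the three claims in order, each being a short consequence of the Kubo--Ando formalism collected in Section~\ref{sec:gm}. Starting with joint convexity, I would take feasible $A_i$ for $(\rho_i, \sigma_i)$, so $\rho_i \leq \sigma_i \#_{1/\alpha} A_i$ for $i=1,2$, and use joint concavity of the Kubo--Ando mean (property (v)) to conclude
\[
\lambda \rho_1 + (1-\lambda)\rho_2 \;\leq\; \lambda(\sigma_1 \#_{1/\alpha} A_1) + (1-\lambda)(\sigma_2 \#_{1/\alpha} A_2) \;\leq\; (\lambda \sigma_1 + (1-\lambda)\sigma_2) \#_{1/\alpha} (\lambda A_1 + (1-\lambda) A_2).
\]
Hence $\lambda A_1 + (1-\lambda) A_2$ is feasible for the convex combination with objective $\lambda \tr(A_1) + (1-\lambda)\tr(A_2)$, and taking infima yields joint convexity of $\newQ_\alpha$.

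For the data-processing inequality, given any feasible $A$ for $(\rho,\sigma)$, applying the positive trace-preserving map $\cN$ to $\rho \leq \sigma \#_{1/\alpha} A$ and invoking Proposition~\ref{prop:monotonepositive} gives $\cN(\rho) \leq \cN(\sigma) \#_{1/\alpha} \cN(A)$; so $\cN(A)$ is feasible for $(\cN(\rho),\cN(\sigma))$, with $\tr(\cN(A)) = \tr(A)$ by trace preservation. Taking infima gives $\newQ_\alpha(\cN(\rho)\|\cN(\sigma)) \leq \newQ_\alpha(\rho\|\sigma)$ and hence the desired inequality on $\newD_\alpha$.

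For the commuting case, I would diagonalize $\rho = \sum_x p_x \proj{x}$ and $\sigma = \sum_x q_x \proj{x}$ in a common eigenbasis, and let $\cP(\cdot) = \sum_x \proj{x}(\cdot)\proj{x}$ be the pinching onto that basis, a positive trace-preserving map fixing both $\rho$ and $\sigma$. The monotonicity just proved, applied to $\cP$, lets me replace any feasible $A$ by $\cP(A)$ at no loss in the objective, reducing the optimization in~\eqref{eq:expr_min_def} to diagonal $A = \sum_x a_x \proj{x}$. Combining with Proposition~\ref{prop:compact} (which allows restricting to $A \ll \sigma$) and the explicit formula~\eqref{eq:geominv} via property (vii), the constraint $\rho \leq \sigma \#_{1/\alpha} A$ becomes $a_x \geq p_x^\alpha q_x^{1-\alpha}$ whenever $q_x > 0$ (and is automatic otherwise since then $p_x = 0$). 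Minimizing entrywise gives $\newQ_\alpha(\rho\|\sigma) = \sum_x p_x^\alpha q_x^{1-\alpha}$, i.e.\ $\newD_\alpha(\rho\|\sigma) = D_\alpha(\rho\|\sigma)$.

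I do not foresee a serious obstacle: each step is a direct application of results already established. The only mild delicacy is handling the generalized inverse when $\sigma$ is singular in the commuting case, but this is dealt with cleanly by Proposition~\ref{prop:compact} together with property (vii) of Proposition~\ref{prop:p}.
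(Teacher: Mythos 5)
Your proposal is correct and follows essentially the same route as the paper: joint convexity via the joint concavity (equivalently, inequality (v)) of the Kubo--Ando mean, monotonicity by pushing a feasible $A$ through the positive trace-preserving map using Proposition~\ref{prop:monotonepositive}, and the commuting case via the pinching onto a common eigenbasis that fixes $\rho$ and $\sigma$. The only cosmetic difference is that you reduce the commuting case to an entrywise classical program and read off both inequalities at once, whereas the paper exhibits the feasible point $A=\rho^{\alpha}\sigma^{1-\alpha}$ for one direction and uses the pinching for the other; the substance is the same.
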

\begin{proof}
Joint convexity follows directly from the joint concavity property of the matrix geometric mean~\eqref{eq:concavitygm}. In fact, for any positive semidefinite operators $\rho_0, \rho_1, \sigma_0, \sigma_1, A_0, A_1$ satisfying $\rho_0 \leq \sigma_0 \#_{1/\alpha} A_0$ and $\rho_1 \leq \sigma_1 \#_{1/\alpha} A_1$, we have for $\lambda \in [0,1]$
\begin{align*}
(1-\lambda) \rho_0 + \lambda \rho_1 
&\leq (1-\lambda)  \sigma_0 \#_{1/\alpha} A_0 + \lambda \sigma_1 \#_{1/\alpha} A_1 \\
&= ((1-\lambda)  \sigma_0) \#_{1/\alpha} ((1-\lambda) A_0) + (\lambda \sigma_1) \#_{1/\alpha} (\lambda A_1) \\
&\leq  \left( (1-\lambda) \sigma_0 + \lambda \sigma_1 \right) \#_{1/\alpha} \left( (1-\lambda) A_0 + \lambda A_1 \right) \ .
\end{align*}
Taking the minimum over $A_0$ and $A_1$, we obtain the desired result.

The data-processing inequality with positive trace-preserving maps follows immediately from the monotonicity property of $\#_{1/\alpha}$ in Proposition \ref{prop:monotonepositive}. In fact, assuming that $\rho \ll \sigma$ (otherwise the statement clearly holds) let $A$ be an optimal point for $\newQ_{\alpha}(\rho\|\sigma)$ so that $\rho \leq \sigma \#_{1/\alpha} A$. Then we have that
\begin{align*}
\cN(\rho) 
&\leq \cN(\sigma \#_{1/\alpha} A) \\
&\leq \cN(\sigma) \#_{1/\alpha} \cN(A) \ ,
\end{align*}
As such $\cN(A)$ is feasible for $\newQ_{\alpha}(\cN(\rho)\|\cN(\sigma))$ with $\tr(\cN(A)) =  \tr(A)$, this proves that $\newQ_{\alpha}(\cN(\rho)\|\cN(\sigma)) \leq \newQ_{\alpha}(\rho\|\sigma)$.
\sout{The data-processing inequality follows from joint-convexity in a generic way (see e.g.,~\cite{Tom15book}) but it is quite straightforward to prove it directly here. In fact, assuming that $\rho \ll \sigma$ (otherwise the statement clearly holds) let $A$ be an optimal point for $\newQ_{\alpha}(\rho\|\sigma)$ so that $\rho \leq \sigma \#_{1/\alpha} A$. Then we have that
where we used the joint-concavity and transformer inequality of the mean $\#_{1/\alpha}$ (see Section~\ref{sec:gm}).  As such $\cN(A)$ is feasible for $\newQ_{\alpha}(\cN(\rho)\|\cN(\sigma))$ with $\tr(\cN(A)) =  \tr(A)$, this proves that $\newQ_{\alpha}(\cN(\rho)\|\cN(\sigma)) \leq \newQ_{\alpha}(\rho\|\sigma)$.}

To analyze the commutative case, consider $\rho$ and $\sigma$ commuting operators. It suffices to assume that $\rho \ll \sigma$ in what follows. To show that $\newD_{\alpha}(\rho \| \sigma) \leq D_{\alpha}(\rho \| \sigma)$ it suffices to take $A = \rho^{\alpha} \sigma^{1-\alpha}$ which commutes with $\rho$ and $\sigma$. Then $\sigma \#_{1/\alpha} A = \sigma^{1-1/\alpha} (\rho^{\alpha} \sigma^{1-\alpha})^{1/\alpha} = \rho$.
To prove $\newD_{\alpha}(\rho \| \sigma) \geq D_{\alpha}(\rho \| \sigma)$ consider a common eigenbasis $\ket{1}, \dots, \ket{d}$ for $\rho$ and $\sigma$ and consider the map $\cM(W) = \sum_{i=1}^d \proj{i} W \proj{i}$. Note that $\cM$ is completely positive and trace-preserving and we have $\cM(\rho) = \rho$ and $\cM(\sigma) = \sigma$. Given an optimal choice of $A$ in the program \eqref{eq:expr_min_def} for $\rho$ and $\sigma$, we can write as before
\begin{align*}
\rho = \cM(\rho) 
&\leq \cM(\sigma \#_{1/\alpha} A) \\
&\leq \cM(\sigma) \#_{1/\alpha} \cM(A) \\
&= \sigma \#_{1/\alpha} \cM(A) \ .
\end{align*}
Noting that $\tr(\cM(A)) = \tr(A)$, we have constructed another optimal solution where the matrix $A$ commutes with $\rho$ and $\sigma$. In this case $\sigma \#_{1/\alpha} A = \sigma^{1-1/\alpha}  A^{1/\alpha}$. Thus, the condition $\rho \leq \sigma \#_{1/\alpha} A$ translates to $\sigma^{1/\alpha-1} \rho \leq A^{1/\alpha}$. As all the matrices are diagonal in the same basis, we can take both sides of this inequality to the power $\alpha$ and get $\rho^{\alpha} \sigma^{1-\alpha} \leq A$.
Taking the trace, we get that $D_{\alpha}(\rho \| \sigma) \leq \newD_{\alpha}( \rho \| \sigma)$.
\end{proof}

\if0
\begin{remark}
\label{rem:positive_maps}
It turns out that $\newD_{\alpha}$ is even monotone under trace-preserving maps that are positive, and not necessarily completely positive. This property was shown for the sandwiched R\'enyi divergence in~\cite{muller2017monotonicity} and we show here that it also holds for $\newD_{\alpha}$. In order to establish this, it suffices to show that for any positive map $\cT$, we have $\cT( \sigma \#_{1/\alpha} A) \leq \cT( \sigma) \#_{1/\alpha} \cT(A)$. We provide a brief proof of this below. Using the integral representation for the Kubo-Ando mean $\#_{1/\alpha}$, \cite{kuboando}, we have that
\begin{align*}
\sigma \#_{1/\alpha} A 
&= a \sigma + b A  + \int_{0}^{\infty} \frac{1+t}{t} ( (t \sigma) : A ) d\mu(t) \ .
\end{align*}
where $a,b \in \RR$ and $\mu$ is a positive measure on $(0,\infty)$ (depending on $\alpha$) and $(\sigma : A)$ denotes the parallel sum which can be defined via the following characterization~\cite[Theorem 4.1.1 (iii)]{bhatia2009positive}
\begin{align*}
\sigma : A = \max \left\{ Y : Y \geq 0, \begin{pmatrix} \sigma & \sigma \\ \sigma & \sigma + A \end{pmatrix} \geq \begin{pmatrix} Y & 0 \\ 0 & 0 \end{pmatrix} \right\} \ .
\end{align*}
Now for any positive map $\cT$, we have that $\begin{psmallmatrix} \sigma - Y & \sigma \\ \sigma & \sigma + A \end{psmallmatrix} \geq 0$ implies that $\begin{psmallmatrix} \cT(\sigma) - \cT(Y) & \cT(\sigma) \\ \cT(\sigma) & \cT(\sigma) + \cT(A) \end{psmallmatrix} \geq 0$ (see e.g., \cite[Exercise 3.2.2 (ii)]{bhatia2009positive}).  As a result, 
\begin{align*}
\cT(\sigma : A) \leq \cT(\sigma) : \cT(A) \ .
\end{align*}
Going back to the integral representation, we have
\begin{align*}
\cT(\sigma \#_{1/\alpha} A) &= a \cT(\sigma) + b \cT(A) +  \int_{0}^{\infty} \frac{1+t}{t} \cT( (t \sigma) : A ) d\mu(t) \\
&\leq a \cT(\sigma) + b \cT(A) +  \int_{0}^{\infty} \frac{1+t}{t} ( (t\cT(\sigma)) : \cT(A) ) d\mu(t) \\
&= \cT(\sigma) \#_{1/\alpha} \cT(A) \ ,
\end{align*}
which is the desired statement. 
\end{remark}
\fi

\begin{remark}[Non-monotonicity in $\alpha$]
We would like to emphasize that $\newD_{\alpha}$ is \emph{not} monotone in $\alpha$. This is illustrated in Figure~\ref{fig:exampleDsharp} where we take $\rho = \proj{\phi_\eps}$ with $\ket{\phi_\eps} = \sqrt{\eps} \ket{00} + \sqrt{1-\eps} \ket{11}$ and $\sigma = \id \otimes (\eps \proj{0} + (1-\eps) \proj{1})$.
\end{remark}
\begin{figure}[ht]
  \centering
%
%
\definecolor{mycolor1}{rgb}{0.00000,0.44700,0.74100}%
\definecolor{mycolor2}{rgb}{0.85000,0.32500,0.09800}%
\definecolor{mycolor3}{rgb}{0.92900,0.69400,0.12500}%
\definecolor{mycolor4}{rgb}{0.49400,0.18400,0.55600}%
\pgfplotstableread{
   eps  Dhat  Dtilde  Dsharp
   0.000010000000000   1.000000000000000   0.001979612414616   0.064948649461560
   0.000012890103671   1.000000000000000   0.002341569450270   0.071441231473763
   0.000016615477265   1.000000000000000   0.002769366330080   0.078550560264805
   0.000021417522448   1.000000000000000   0.003274877638815   0.086329262360317
   0.000027607408473   1.000000000000000   0.003872088981550   0.094833289874017
   0.000035586235731   1.000000000000000   0.004577462903480   0.104122012789102
   0.000045871026782   1.000000000000000   0.005410365310150   0.114258043980299
   0.000059128229072   1.000000000000000   0.006393561448295   0.125307067064392
   0.000076216900261   1.000000000000000   0.007553791545419   0.137337574235710
   0.000098244374584   1.000000000000000   0.008922437244655   0.150420679395882
   0.000126638017347   1.000000000000000   0.010536290953114   0.164629215112980
   0.000163237717228   1.000000000000000   0.012438441062730   0.180037612870245
   0.000210415109807   1.000000000000000   0.014679286581729   0.196720624869649
   0.000271227257933   1.000000000000000   0.017317694862295   0.214752379777029
   0.000349614747313   1.000000000000000   0.020422315591094   0.234205092144158
   0.000450657033775   1.000000000000000   0.024073062704448   0.255147361900893
   0.000580901588538   1.000000000000000   0.028362772969355   0.277642329707312
   0.000748788169884   1.000000000000000   0.033399045064540   0.301745317795258
   0.000965195713735   1.000000000000000   0.039306255352663   0.327500956788360
   0.001244147281276   1.000000000000000   0.046227735182086   0.354940330922695
   0.001603718743751   1.000000000000000   0.054328078243017   0.384077276336132
   0.002067210086592   1.000000000000000   0.063795523638343   0.414904146193352
   0.002664655232571   1.000000000000000   0.074844328912364   0.447387454736323
   0.003434768219505   1.000000000000000   0.087717004815466   0.481462598333634
   0.004427451843494   1.000000000000000   0.102686226974711   0.517028716785910
   0.005707031326057   1.000000000000000   0.120056165059207   0.553942609752306
   0.007356422544596   1.000000000000000   0.140162872692084   0.592012684517047
   0.009482504924681   1.000000000000000   0.163373255166526   0.630992710529874
   0.012223047153898   1.000000000000000   0.190081968858065   0.670575651286053
   0.015755634498807   1.000000000000000   0.220705394625557   0.710387905884904
   0.020309176209047   1.000000000000000   0.255671550019382   0.749984393050027
   0.026178738680524   1.000000000000000   0.295404433208446   0.788845191598577
   0.033744665556542   1.000000000000000   0.340300775449993   0.826374852556804
   0.043497223736375   1.000000000000000   0.390696427902371   0.861906211043341
   0.056068372335831   1.000000000000000   0.446818451282356   0.894711751860448
   0.072272713206762   1.000000000000000   0.508717080199561   0.924027744461401
   0.093160276581255   1.000000000000000   0.576168429788131   0.949099672925276
   0.120084562314231   1.000000000000000   0.648532706613235   0.969262967224735
   0.154790245750538   1.000000000000000   0.724540716718778   0.984076983077784
   0.199526231496888   1.000000000000000   0.801956430936513   0.993528519561780
}\datatable

\newcommand{\drawDsharpPlot}[1]{%
\begin{tikzpicture}
\begin{axis}[%
width=2in,
height=1.25in,
at={(1.011in,0.642in)},
scale only axis,
xmin=0,
xmax=0.2,
ymin=0,
ymax=1.1,
ylabel style={rotate=-90},
xlabel={$\eps$},
ylabel={},
xtick={0,0.05,0.1,0.15,0.2},
xticklabel style={/pgf/number format/fixed},
title={$\alpha=3/2$},
axis background/.style={fill=white},
legend style={legend cell align=left, align=left, draw=white!15!black}
]
\addplot [dotted,  color=black,  line width=1.0pt] table[x=eps, y=Dhat]{\datatable} node[below,pos=0.1] {$\widehat{D}$};
\addplot [smooth,color=mycolor1, dashed, line width=1.0pt] table[x=eps, y=Dtilde]{\datatable} node[above,pos=0.8] {$\widetilde{D}$};
\ifnum#1=1
\addplot [smooth,color=mycolor2, line width=2.0pt] table[x=eps, y=Dsharp]{\datatable} node[below,pos=0.8] {$\newD$};
\fi
\end{axis}
\end{tikzpicture}%
}
%
%
\definecolor{mycolor1}{rgb}{0.00000,0.44700,0.74100}%
\definecolor{mycolor2}{rgb}{0.85000,0.32500,0.09800}%
\definecolor{mycolor3}{rgb}{0.92900,0.69400,0.12500}%
\definecolor{mycolor4}{rgb}{0.49400,0.18400,0.55600}%
\begin{tikzpicture}

\begin{axis}[%
width=2in,
height=1.25in,
at={(1.011in,0.642in)},
scale only axis,
xmin=1,
xmax=4,
ymin=0,
ymax=1.1,
ylabel style={rotate=-90},
xlabel={$\alpha$},
ylabel={$D$},
xtick={1,1.5,2,2.5,3,3.5,4},
title={$\varepsilon=10^{-3}$},
axis background/.style={fill=white},
legend style={legend cell align=left, align=left, draw=white!15!black}
]

\addplot [color=black, line width=1.0pt,dotted,forget plot]
  table[row sep=crcr]{%
4	1\\
1.03225806451613	1\\
} node[below,pos=0.5] {$\widehat{D}$};


%



%

\addplot [smooth,color=mycolor2, line width=2.0pt,forget plot]
  table[row sep=crcr]{%
4	0.523876798561143\\
2	0.362156516791363\\
1.5	0.331227634402546\\
1.33333333333333	0.334933884719728\\
1.14285714285714	0.383511520071417\\
1.06666666666667	0.465506430370875\\
1.03225806451613	0.569805156248011\\
} node[above,pos=0.5] {$\newD$};

\addplot [smooth,color=mycolor1, dashed, line width=1.0pt,forget plot]
  table[row sep=crcr]{%
4	0.314429915374535\\
2	0.088431901576697\\
1.5	0.0402076371440934\\
1.33333333333333	0.0280545512793117\\
1.14285714285714	0.017256948247557\\
1.06666666666667	0.0139010211354024\\
1.03225806451613	0.0125634771586205\\
} node[above,pos=0.5] {$\widetilde{D}$};



%

\end{axis}
\end{tikzpicture}%
  \qquad
  \drawDsharpPlot{1}
  \caption{Left: The solid line corresponds to $\newD_{\alpha}(\rho_{XY}\| \id_X \otimes \rho_Y)$ as a function of $\alpha$ where $\rho_{XY} = \proj{\phi_{\eps}}$ with $\ket{\phi_\eps} = \sqrt{\eps} \ket{00} + \sqrt{1-\eps} \ket{11}$.
  The dashed lines correspond to $\widetilde{D}_{\alpha}(\rho_{XY}\| \id_X \otimes \rho_Y)$. Note that in this case $\widehat{D}_{\alpha}(\rho_{XY}\| \id_X \otimes \rho_Y) = D_{\max}(\rho_{XY}\| \id_X \otimes \rho_Y) = 1$ for all $\alpha \in (1,\infty)$ and all $\eps \in (0,1)$ (this corresponds to the black dotted line at the top).
  \change{Right: The three divergences (for $\alpha = 3/2$) as a function of $\eps \in (0,0.2)$. Note that $\newD$ and $\widetilde{D}$ converge to zero as $\eps \to 0$, whereas this is not the case for $\widehat{D}$.}  The computations were performed using the package~\cite{cvxquad}.
  }
  \label{fig:exampleDsharp}
\end{figure}

Now we turn to properties of the divergence for tensor products and the relation to other quantum R\'enyi divergences.
\begin{proposition}
\label{prop:relation_other_div}
Let $\alpha \in (1,\infty)$. The quantity $\newD_{\alpha}$ is subadditive under tensor products, i.e., for $\rho_1, \sigma_1 \in \Pos(\cH_1)$ and $\rho_2, \sigma_2 \in \Pos(\cH_2)$, we have 
\begin{align}
\label{eq:subadditive_states}
\newD_{\alpha}(\rho_1 \otimes \rho_2 \| \sigma_1 \otimes \sigma_2) \leq \newD_{\alpha}(\rho_1 \| \sigma_1) + \newD_{\alpha}( \rho_2 \| \sigma_2) \ .
\end{align}
In addition, for $\rho, \sigma \in \Pos(\cH)$ it can be related to the measured R\'enyi divergence as follows:
\begin{align}
\label{eq:relation_dmeas}
D^{\mathbb{M}}_{\alpha}(\rho \| \sigma) \leq \newD_{\alpha}(\rho \| \sigma) \leq D^{\mathbb{M}}_{\alpha}(\rho \| \sigma) + \frac{\alpha}{\alpha-1} \log |\mathrm{spec}(\sigma)| \ .
\end{align}
As a consequence,
\begin{align}
\label{eq:reg_im_sand}
\lim_{n \to \infty} \frac{1}{n}  D^{\mathbb{M}}_{\alpha}(\rho^{\otimes n} \| \sigma^{\otimes n}) = \widetilde{D}_{\alpha}(\rho \| \sigma) =  \lim_{n \to \infty} \frac{1}{n} \newD_{\alpha}(\rho^{\otimes n} \| \sigma^{\otimes n})
\end{align}
and 
\begin{align}
D^{\mathbb{M}}_{\alpha}(\rho \| \sigma) \leq \widetilde{D}_{\alpha}(\rho \| \sigma) \leq \newD_{\alpha}(\rho \| \sigma) \leq \widehat{D}_{\alpha}(\rho \| \sigma) \ .
\end{align}
\change{Furthermore $\newD_{\alpha}(\rho \| \sigma) \rightarrow D_{\max}(\rho \| \sigma)$ when $\alpha\rightarrow \infty$.}
\end{proposition}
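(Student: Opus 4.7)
I would organize the proof around five ingredients, in the following order. First, \emph{subadditivity} is immediate from the tensor product property (xi) of the geometric mean: if $A_i$ is feasible for $(\rho_i,\sigma_i)$ then
\[
\rho_1 \otimes \rho_2 \leq (\sigma_1 \#_{1/\alpha} A_1) \otimes (\sigma_2 \#_{1/\alpha} A_2) = (\sigma_1 \otimes \sigma_2) \#_{1/\alpha} (A_1 \otimes A_2),
\]
so $A_1 \otimes A_2$ is feasible for the pair of tensor products, with trace $\tr(A_1)\tr(A_2)$. Taking logs divided by $\alpha-1$ gives \eqref{eq:subadditive_states}.

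Next, for the \emph{lower bound in} \eqref{eq:relation_dmeas}: any rank-one projective measurement $\{\proj{v_x}\}$ is a completely positive trace-preserving map, and the pinched operators $\sum_x \proj{v_x}\rho\proj{v_x}$ and $\sum_x \proj{v_x}\sigma\proj{v_x}$ commute. Applying the data-processing inequality (Proposition~\ref{prop:data_processing}) together with the commutative identification $\newD_\alpha = D_\alpha$ yields $D_\alpha^{\mathbb{M}} \leq \newD_\alpha$. For the \emph{upper bound}, I would use the spectral pinching map $\cP_\sigma$ associated to $\sigma$: the pinching inequality gives $\rho \leq |\spec(\sigma)| \cP_\sigma(\rho)$. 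Since the feasible set of~\eqref{eq:expr_min_def} is monotone in $\rho$, and since homogeneity (xii) yields $\newQ_\alpha(c\rho\|\sigma) \leq c^\alpha \newQ_\alpha(\rho\|\sigma)$ for $c\geq 1$ (if $A$ is feasible for $(\rho,\sigma)$ then $c^\alpha A$ is feasible for $(c\rho,\sigma)$), we obtain
\[
\newQ_\alpha(\rho\|\sigma) \leq |\spec(\sigma)|^\alpha\, \newQ_\alpha(\cP_\sigma(\rho)\|\sigma).
\]
Since $\cP_\sigma(\rho)$ and $\sigma=\cP_\sigma(\sigma)$ commute, the commutative case of Proposition~\ref{prop:data_processing} identifies $\newD_\alpha(\cP_\sigma(\rho)\|\sigma)$ with $D_\alpha(\cP_\sigma(\rho)\|\sigma)$, which is realized by any measurement in a common eigenbasis of $\cP_\sigma(\rho)$ and $\sigma$ and is therefore bounded by $D_\alpha^{\mathbb{M}}(\rho\|\sigma)$. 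This produces the extra term $\frac{\alpha}{\alpha-1}\log|\spec(\sigma)|$.

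For the \emph{regularization identity} \eqref{eq:reg_im_sand}, I would apply \eqref{eq:relation_dmeas} to $\rho^{\otimes n}$ and $\sigma^{\otimes n}$, divide by $n$, and note that $|\spec(\sigma^{\otimes n})| \leq (n+1)^{\dim\cH - 1}$ so the error term vanishes as $n \to \infty$. The identity $\lim_n \tfrac{1}{n} D_\alpha^{\mathbb{M}}(\rho^{\otimes n}\|\sigma^{\otimes n}) = \widetilde{D}_\alpha(\rho\|\sigma)$ is the known asymptotic equipartition for measured R\'enyi divergences, which I would cite. Combining with subadditivity, both limits equal $\widetilde{D}_\alpha$. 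The chain of inequalities $D_\alpha^{\mathbb{M}} \leq \widetilde{D}_\alpha \leq \newD_\alpha \leq \widehat{D}_\alpha$ then follows: the first is classical, the second is obtained from $\widetilde{D}_\alpha(\rho\|\sigma) = \frac{1}{n}\widetilde{D}_\alpha(\rho^{\otimes n}\|\sigma^{\otimes n}) = \lim_n \frac{1}{n}\newD_\alpha(\rho^{\otimes n}\|\sigma^{\otimes n}) \leq \newD_\alpha(\rho\|\sigma)$ using additivity of $\widetilde{D}_\alpha$ on tensor products and subadditivity of $\newD_\alpha$, and the third comes from the explicit feasible choice $A = \sigma^{1/2}(\sigma^{-1/2}\rho\sigma^{-1/2})^{\alpha}\sigma^{1/2}$, which satisfies $\sigma \#_{1/\alpha} A = \rho$ by \eqref{eq:geominv} and has trace equal to $\widehat{Q}_\alpha(\rho\|\sigma)$.

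Finally, for $\alpha \to \infty$, I would squeeze $\newD_\alpha$ between $\widetilde{D}_\alpha$ and $\widehat{D}_\alpha$, both of which are known to converge to $D_{\max}$ (the limit $\frac{1}{\alpha}\log\tr(\sigma X^\alpha) \to \log\|X\|_\infty$ with $X=\sigma^{-1/2}\rho\sigma^{-1/2}$ handles $\widehat{D}_\alpha$, and a similar Schatten-norm limit handles $\widetilde{D}_\alpha$). The step I expect to need the most care is the measured upper bound: justifying the pinching/homogeneity chain cleanly on $\supp(\sigma)$ and matching the spectral pinching to a rank-one basis measurement compatible with the definition \eqref{eq:def_meas_div} is where one has to be most careful with generalized inverses and degenerate eigenvalues.
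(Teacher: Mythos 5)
Your proposal is correct and follows essentially the same route as the paper's proof: tensor feasible points for subadditivity, data processing plus the commutative case for the lower measured bound, the pinching inequality combined with the homogeneity of $\#_{1/\alpha}$ for the upper bound, the known regularization of the measured divergence for \eqref{eq:reg_im_sand}, subadditivity to get $\widetilde{D}_{\alpha} \leq \newD_{\alpha}$, and the explicit feasible point $A = \sigma \#_{\alpha} \rho$ for $\newD_{\alpha} \leq \widehat{D}_{\alpha}$. The only (harmless) deviation is the $\alpha \to \infty$ limit, where you squeeze $\newD_{\alpha}$ between $\widetilde{D}_{\alpha}$ and $\widehat{D}_{\alpha}$ (which requires the additional, true, fact that $\widehat{D}_{\alpha} \to D_{\max}$), whereas the paper combines $\widetilde{D}_{\alpha} \to D_{\max}$ with the trivial upper bound $\newD_{\alpha}(\rho\|\sigma) \leq D_{\max}(\rho\|\sigma) + \frac{1}{\alpha-1}\log\tr(\rho)$ already provided by Proposition~\ref{prop:compact}; both arguments are valid.
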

\begin{proof}
In order to show subadditivity, we may assume $\rho_1 \ll \sigma_1$ and $\rho_2 \ll \sigma_2$ (the statement clearly holds otherwise). For $b \in \{1,2\}$, let $A_{b}$ be a feasible solution for the program \eqref{eq:expr_min_def} for $\newD_{\alpha}(\rho_b \| \sigma_b)$. Then define $A_{12} = A_{1} \otimes A_{2}$. Then, using the tensor product property in Proposition~\ref{prop:p}, we get
\begin{align*}
 (\sigma_1 \otimes \sigma_2) \#_{1/\alpha} A_{12}  &= (\sigma_1 \#_{1/\alpha} A_1) \otimes  (\sigma_{2} \#_{1/\alpha} A_2) \\
&\geq \rho_1 \otimes \rho_2 \ .
\end{align*}
In addition, $\tr(A_{12}) = \tr(A_{1}) \tr(A_{2})$ and we thus get $\newQ_{\alpha}(\rho_1 \otimes \rho_2 \| \sigma_1 \otimes \sigma_2) \leq \newQ_{\alpha}(\rho_1 \| \sigma_1) \newQ_{\alpha}(\rho_2 \| \sigma_2)$ which proves subadditivity.

Next, to relate the different divergences, note that all of these divergences are finite if and only if $\rho \ll \sigma$. So we focus on the case $\rho \ll \sigma$. To show~\eqref{eq:relation_dmeas}, it follows from the data-processing inequality that for any choice of orthonormal basis $\{\ket{v_x}\}_{x}$, the measurement map $\cM(W) = \sum_{x} \proj{v_x} W \proj{v_x}$ satisfies $\newD_{\alpha}(\cM(\rho) \| \cM(\sigma)) \leq \newD_{\alpha}(\rho \| \sigma)$. But as $\cM(\rho)$ and $\cM(\sigma)$ commute, $\newD_{\alpha}(\cM(\rho) \| \cM(\sigma)) = D_{\alpha}(\cM(\rho) \| \cM(\sigma))$. Taking the supremum over measurements $\cM$, we have $D^{\mathbb{M}}_{\alpha}(\rho \| \sigma) \leq \newD_{\alpha}(\rho \| \sigma)$.

In the other direction, consider the pinching map $\cP_{\sigma}$ defined by $\cP_{\sigma}(W) = \sum_{\lambda \in \mathrm{spec}(\sigma)} \Pi_{\lambda} W \Pi_{\lambda}$, where $\mathrm{spec}(\sigma)$ denotes the set of eigenvalues of $\sigma$ and $\Pi_{\lambda}$ is the projector onto the eigenspace of $\lambda$. Consider an optimal solution $A$ of~\eqref{eq:expr_min_def} for the states $\cP_{\sigma}(\rho)$ and $\sigma$. Then we have $\cP_{\sigma}(\rho) \leq \sigma  \#_{1/\alpha} A$. Using the pinching inequality $\rho \leq |\mathrm{spec}(\sigma)| \cP_{\sigma}(\rho)$ \change{(see e.g., \cite{hayashi2002optimal} or \cite[Chapter 2]{Tom15book})}, we obtain
\begin{align*}
\rho \leq |\mathrm{spec}(\sigma)| \cP_{\sigma}(\rho) &\leq |\mathrm{spec}(\sigma)| \cdot \sigma \#_{1/\alpha} A \\
&= \sigma \#_{1/\alpha} (|\mathrm{spec}(\sigma)|^{\alpha} A) \ .
\end{align*}
As such $|\mathrm{spec}(\sigma)|^{\alpha} A$ is feasible for the optimization program~\eqref{eq:expr_min_def} for $\rho$ and $\sigma$ and thus $\newQ_{\alpha}(\rho \| \sigma) \leq |\mathrm{spec}(\sigma)|^{\alpha} \newQ_{\alpha}(\cP_{\sigma}(\rho) \| \sigma)$. But note that $\cP_{\sigma}(\sigma) = \sigma$ and it commutes with $\cP_{\sigma}(\rho)$, so using Proposition~\ref{prop:data_processing}, we have $\newD_{\alpha}(\cP_{\sigma}(\rho) \| \sigma) = D_{\alpha}(\cP_{\sigma}(\rho) \| \cP_{\sigma}(\sigma)) \leq D^{\mathbb{M}}_{\alpha}(\rho \| \sigma)$. Putting everything together, we get
\begin{align*}
\newD_{\alpha}(\rho \| \sigma) \leq D^{\mathbb{M}}_{\alpha}(\rho \| \sigma) + \frac{\alpha}{\alpha-1} \log |\mathrm{spec}(\sigma)| \ .
\end{align*}
 
As a result, we have that for any integer $n \geq 1$,
\begin{align*}
\frac{1}{n} D^{\mathbb{M}}_{\alpha}(\rho^{\otimes n} \| \sigma^{\otimes n}) \leq \frac{1}{n} \newD_{\alpha}(\rho^{\otimes n} \| \sigma^{\otimes n}) \leq \frac{1}{n} D^{\mathbb{M}}_{\alpha}(\rho^{\otimes n} \| \sigma^{\otimes n}) +  
\frac{1}{n} \frac{\alpha}{\alpha-1} \log |\mathrm{spec}(\sigma^{\otimes n})| .
\end{align*}
But it is well-known that $|\mathrm{spec}(\sigma^{\otimes n})|  \leq (n+1)^{\dim \cH-1}$. This shows that the regularization of both $\newD_{\alpha}$ and $D^{\mathbb{M}}_{\alpha}$ give the same value\sout{, the former converging from above and the latter from below}. But the regularization of the measured R\'enyi divergence is known to be equal to $\widetilde{D}_{\alpha}$~\change{\cite{mosonyi2015quantum} (see also \cite[Theorem 4.1]{Tom15book})}. \change{The inequality $\widetilde{D}_{\alpha}(\rho \| \sigma) \leq \newD_{\alpha}(\rho \| \sigma)$ then follows directly because, by subadditivity of $\newD_{\alpha}$ we have
\[
\widetilde{D}_{\alpha}(\rho \| \sigma) = \lim_{n\rightarrow\infty} \frac{1}{n} \newD_{\alpha}(\rho^{\otimes n} \| \sigma^{\otimes n}) = \inf_{n \in \NN} \frac{1}{n} \newD_{\alpha}(\rho^{\otimes n} \| \sigma^{\otimes n}) \leq \newD_{\alpha}(\rho \| \sigma).
\]
}

We now show that $\newD_{\alpha}( \rho \| \sigma) \leq \widehat{D}_{\alpha}( \rho \| \sigma)$. 
Recall that $\widehat{D}_{\alpha}(\rho \| \sigma) = \frac{1}{\alpha - 1} \log \tr \left( \sigma \#_{\alpha} \rho \right)$, where, for $\rho \ll \sigma$, $\sigma \#_{\alpha} \rho$ is defined by
\[
\sigma \#_{\alpha} \rho = \sigma^{1/2} \left( \sigma^{-1/2} \rho \sigma^{-1/2}\right)^{\alpha} \sigma^{1/2} \ ,
\]
where generalized inverses are used if $\sigma$ is not invertible.
If we choose $A = \sigma \#_{\alpha} \rho$, it is immediate to verify that
\begin{align*}
\sigma \#_{1/\alpha} A = \sigma \#_{1/\alpha} ( \sigma \#_{\alpha} \rho) = \rho
\end{align*}
which means that $A$ is feasible for \eqref{eq:expr_min_def}, and so $\newD_{\alpha}(\rho \| \sigma) \leq \widehat{D}_{\alpha}(\rho \| \sigma)$. \change{We note that for $\alpha \in (1,2]$ this inequality actually follows immediately from the fact that $\widehat{D}_{\alpha}$ is the maximal R\'enyi divergence satisfying the data-processing inequality~\cite{Mat13}. As illustrated in Figure~\ref{fig:exampleDsharp}, there can be a large gap between $\newD_{\alpha}$ and $\widehat{D}_{\alpha}$.}

\change{
Finally we prove that $\newD_{\alpha}( \rho \| \sigma) \rightarrow D_{\max}(\rho \| \sigma)$ when $\alpha \rightarrow \infty$. We saw in Proposition \ref{prop:compact} that $\newQ_{\alpha}(\rho \| \sigma) \leq \left\|\sigma^{-1/2} \rho \sigma^{-1/2}\right\|_{\infty}^{\alpha-1} \tr(\rho) = 2^{D_{\max}(\rho \| \sigma)(\alpha-1)} \tr(\rho)$, which implies that $\newD_{\alpha}(\rho \| \sigma) \leq D_{\max}(\rho \| \sigma) + \frac{1}{\alpha - 1} \log \tr(\rho)$. On the other hand it is known (see e.g., \cite[Proposition 4]{MDSFT13}) that $\widetilde{D}_{\alpha}(\rho \| \sigma) \rightarrow D_{\max}(\rho \| \sigma)$ when $\alpha\rightarrow \infty$ thus we get the desired result.
}

%
%
%
\end{proof}


\change{
We conclude this section by establishing additional useful properties of $\newQ_{\alpha}$.
\begin{proposition}[Additional properties for $\newQ_{\alpha}$]
\label{prop:add_props}
For any $\alpha > 1$, the following hold.
(a) Isometric invariance: if $V$ is an isometry, i.e., $V^* V = I$ then $\newQ_{\alpha}(V\rho V^* \| V \sigma V^*) = \newQ_{\alpha}(\rho\| \sigma)$. (b) Positive homogeneity: $\newQ_{\alpha}(\lambda \rho \| \lambda \sigma) = \lambda \newQ_{\alpha}(\rho \| \sigma)$ for $\lambda \geq 0$. (c) Block additivity: $\newQ_{\alpha}(\rho_1 + \rho_2 \| \sigma_1 + \sigma_2) = \newQ_{\alpha}(\rho_1 \| \sigma_1) + \newQ_{\alpha}(\rho_2 \| \sigma_2)$ provided $\supp(\rho_1+\sigma_1) \perp \supp(\rho_2+\sigma_2)$.
\end{proposition}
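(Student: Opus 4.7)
My plan is to treat the three properties separately, in each case by constructing a feasible point for the convex program defining $\newQ_{\alpha}$. All three proofs lean on properties of the Kubo-Ando mean $\#_{1/\alpha}$ established earlier in the paper, most notably the transformer inequality (ii), the direct-sum identity (viii), property (ix) (applicable since $f(x)=x^{1/\alpha}$ satisfies both $f(0)=0$ and $\lim_{x\downarrow 0}xf(1/x)=0$ for $\alpha>1$), and the positive homogeneity (xii). I will use Proposition \ref{prop:compact} repeatedly to restrict the optimal $A$ to satisfy $A\ll\sigma$.

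For part (a), given $A$ feasible for $\newQ_{\alpha}(\rho\|\sigma)$, the transformer inequality gives $V\rho V^*\leq V(\sigma\#_{1/\alpha}A)V^*\leq(V\sigma V^*)\#_{1/\alpha}(VAV^*)$, and $\tr(VAV^*)=\tr(A)$ since $V^*V=I$. Conversely, starting from $B$ feasible for the isometrized problem, applying $V^*(\cdot)V$ and the transformer inequality yields $\rho=V^*V\rho V^*V\leq\sigma\#_{1/\alpha}(V^*BV)$; here $\tr(V^*BV)=\tr(B)$ once we restrict to $B\ll V\sigma V^*$ via Proposition \ref{prop:compact} so that $VV^*$ acts as the identity on $\supp(B)\subseteq\mathrm{range}(V)$. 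For part (b), the case $\lambda=0$ is immediate because $0\#_{1/\alpha}A=0$ by (ix); for $\lambda>0$, property (xii) gives $(\lambda\sigma)\#_{1/\alpha}(\lambda A)=\lambda(\sigma\#_{1/\alpha}A)$, so the bijection $A\leftrightarrow\lambda A$ maps the feasible set of one program onto that of the other while scaling the objective by $\lambda$.

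I expect part (c) to be the main obstacle. Let $P_i$ denote the projector onto $\supp(\sigma_i)$; finiteness forces $\rho_i\ll\sigma_i$, and the hypothesis then implies $P_1P_2=0$. For the $\leq$ direction I take optimal $A_i\ll\sigma_i$; then $\supp(\sigma_1+A_1)\perp\supp(\sigma_2+A_2)$, so (viii) gives $(\sigma_1+\sigma_2)\#_{1/\alpha}(A_1+A_2)=(\sigma_1\#_{1/\alpha}A_1)+(\sigma_2\#_{1/\alpha}A_2)\geq\rho_1+\rho_2$, exhibiting $A_1+A_2$ as feasible with additive trace. For the $\geq$ direction, let $A\ll\sigma_1+\sigma_2$ be optimal for $\newQ_{\alpha}(\rho_1+\rho_2\|\sigma_1+\sigma_2)$. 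I pass to the block-diagonal form using the pinching $\cM(X)=P_1XP_1+P_2XP_2+(I-P_1-P_2)X(I-P_1-P_2)$, which is completely positive and trace-preserving and fixes each $\rho_i$ and $\sigma_i$; by Proposition \ref{prop:data_processing} the operator $\cM(A)=P_1AP_1+P_2AP_2$ is still feasible with the same trace as $A$ (the third summand vanishes because $\supp(A)\subseteq\mathrm{range}(P_1)\oplus\mathrm{range}(P_2)$). Applying (viii) once more to this block-diagonal form and then sandwiching by $P_i$ isolates $\rho_i\leq\sigma_i\#_{1/\alpha}P_iAP_i$, whence $\tr(P_iAP_i)\geq\newQ_{\alpha}(\rho_i\|\sigma_i)$; summing and using $\tr(A)=\tr(P_1AP_1)+\tr(P_2AP_2)$ yields the desired inequality.
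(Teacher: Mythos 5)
Your proposal is correct and follows essentially the same route as the paper: feasible-point constructions using the transformer inequality for (a), homogeneity of the mean for (b), and for (c) the block-additivity property (viii) for one direction together with a block-diagonal pinching of an optimal $A$ (the paper's map $\pi(A)=P_1AP_1+P_2AP_2$) for the other. The minor variations — invoking Proposition \ref{prop:compact} to take $A_i\ll\sigma_i$ instead of projecting, and phrasing the pinching step via the data-processing argument — are cosmetic and sound.
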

\begin{proof}
(a) The inequality $\newQ_{\alpha}(V\rho V^* \| V\sigma V^*) \leq \newQ_{\alpha}(\rho \| \sigma)$ follows from the data-processing inequality. For the other inequality if $A$ is an optimal point for $\newQ_{\alpha}(V \rho V^* \| V \sigma V^*)$, using the transformer inequality, $V^* A V$ is feasible point for the program defining $\newQ_{\alpha}(\rho \| \sigma )$. But as $VV^* \leq \id$, we have $\tr(V^* A V) = \tr(A V V^*) \leq \tr(A)$, which proves the desired result. \\
(b) Immediate to verify from the definition of $\newQ_{\alpha}(\rho \| \sigma)$.\\
(c) We use the fact that $(A_1 + A_2)\#_{1/\alpha} (B_1 + B_2) = (A_1 \#_{1/\alpha} B_1) + (A_2 \#_{1/\alpha} B_2)$ established in Section \ref{sec:gm}. Let $P_1$, $P_2$ be projectors on $\supp(\rho_1 + \sigma_1)$ and $\supp(\rho_2 + \sigma_2)$ respectively. For the inequality $\newQ_{\alpha}(\rho_1 + \rho_2 \| \sigma_1 + \sigma_2) \leq \newQ_{\alpha}(\rho_1 \| \sigma_1) + \newQ_{\alpha}(\rho_2 \| \sigma_2)$, consider $A_1$ and $A_2$ two optimal points for $\newQ_{\alpha}(\rho_1 \| \sigma_1)$ and $\newQ_{\alpha}(\rho_2 \| \sigma_2)$, respectively. Using the transformer inequality and the orthogonality condition, note that $P_1 A_1 P_1$ and $P_2 A_2 P_2$ are also optimal points for $\newQ_{\alpha}(\rho_1 \| \sigma_1)$ and $\newQ_{\alpha}(\rho_2 \| \sigma_2)$. Using the property of the geometric mean for orthogonal sums of operators, we get that $P_1 A_1 P_1 + P_2 A_2 P_2$ is feasible for the program defining $\newQ_{\alpha}(\rho_1 + \rho_2 \| \sigma_1 + \sigma_2)$. \\
For the reverse inequality. Let $A$ be an optimal solution in the definition of $\newQ_{\alpha}(\rho_1 + \rho_2 \| \sigma_1 + \sigma_2)$ so that $(\sigma_1 + \sigma_2) \#_{1/\alpha} A \geq \rho_1 + \rho_2$. 
Let $\pi(A) = P_1 A P_1 + P_2 A P_2$ be the projection map on ``block-diagonal'' operators. Then, letting $\sigma = \sigma_1 + \sigma_2$ and $\rho = \rho_1 + \rho_2$ we have
\[
\sigma \#_{1/\alpha} \pi(A) = \pi(\sigma) \#_{1/\alpha} \pi(A) \geq \pi(\sigma \#_{1/\alpha} A) \geq \pi(\rho) = \rho.
\]
Since $\tr \pi(A) = \tr A$, it follows that $\pi(A)$ is also an optimal solution in the definition of $\newQ_{\alpha}(\rho_1 + \rho_2\| \sigma_1 + \sigma_2)$. Call $A_1 = P_1 A P_1$ and $A_2 = P_2 A P_2$. By the previous equation and the block additivity of the geometric mean, we see that $\sigma_1 \#_{1/\alpha} A_1 \geq \rho_1$ and $\sigma_2 \#_{1/\alpha_2} A_2 \geq \rho_2$. This implies that $\newQ_{\alpha}(\rho_1 \| \sigma_1) \leq \tr A_1$ and $\newQ_{\alpha}(\rho_2 \| \sigma_2) \leq \tr A_2$, which in turn implies $\newQ_{\alpha}(\rho_1 \| \sigma_1) + \newQ_{\alpha}(\rho_2 \| \sigma_2) \leq \tr \pi(A) = \newQ_{\alpha}(\rho \| \sigma)$ as desired.
\end{proof}
An immediate consequence of the proposition above is that if $\ket{\psi}$ is a unit normed vector in $\cH$ then $\newQ_{\alpha}(\rho \otimes \proj{\psi} \| \sigma \otimes \proj{\psi}) = \newQ_{\alpha}(\rho \| \sigma)$, because the map $V\ket{\phi} = \ket{\phi} \otimes \ket{\psi}$ is an isometry.
In turn this implies the following useful property for classical-quantum states.

\begin{proposition}
Let $\Sigma$ be a finite set, $p(x) \geq 0$, $\rho(x)$ and $\sigma(x) \in \Pos(\cH)$ for all $x \in \Sigma$. Then, $\newQ_{\alpha}$ has the direct-sum property for classical-quantum states:
\begin{align*}
\newQ_{\alpha}\left(\sum_{x \in \Sigma} p(x) \proj{x} \otimes \rho(x) \Big\| \sum_{x \in \Sigma} p(x) \proj{x} \otimes \sigma(x)\right) &= \sum_{x \in \Sigma} p(x)  \newQ_{\alpha}(\rho(x) \| \sigma(x)) \ .
\end{align*}
\end{proposition}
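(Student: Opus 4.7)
The plan is to derive this identity as a direct consequence of the three structural properties of $\newQ_{\alpha}$ established in Proposition \ref{prop:add_props}: block additivity (c), positive homogeneity (b), and isometric invariance (a), applied in this order. The key structural observation is that the classical flag $\proj{x}$ in front of each quantum block creates the orthogonality of supports required to apply block additivity, after which the remaining manipulations peel off the scalar weights and the classical register.

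Concretely, I would set $\tilde{\rho}_x := p(x)\, \proj{x} \otimes \rho(x)$ and $\tilde{\sigma}_x := p(x)\, \proj{x} \otimes \sigma(x)$ and first note that the supports of $\tilde{\rho}_x + \tilde{\sigma}_x$ for distinct $x \in \Sigma$ lie in orthogonal subspaces $\mathrm{span}\{\ket{x}\} \otimes \cH$ of $\CC^{\Sigma} \otimes \cH$. Iterating block additivity therefore yields
\[
\newQ_{\alpha}\Bigl(\sum_{x \in \Sigma} \tilde{\rho}_x \,\Big\|\, \sum_{x \in \Sigma} \tilde{\sigma}_x\Bigr) \;=\; \sum_{x \in \Sigma} \newQ_{\alpha}\bigl(\tilde{\rho}_x \,\big\|\, \tilde{\sigma}_x\bigr).
\]
Next, positive homogeneity gives $\newQ_{\alpha}(\tilde{\rho}_x \| \tilde{\sigma}_x) = p(x)\, \newQ_{\alpha}\bigl(\proj{x} \otimes \rho(x) \,\big\|\, \proj{x} \otimes \sigma(x)\bigr)$. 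Finally, the isometry $V_x : \cH \to \CC^{\Sigma} \otimes \cH$ defined by $V_x \ket{\phi} = \ket{x} \otimes \ket{\phi}$ satisfies $V_x \rho(x) V_x^* = \proj{x} \otimes \rho(x)$ and similarly for $\sigma(x)$, so isometric invariance collapses each term to $\newQ_{\alpha}(\rho(x) \| \sigma(x))$, giving the stated equality.

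There is no substantive obstacle: the proof is essentially bookkeeping on top of Proposition \ref{prop:add_props}. The only points of care are the boundary case $p(x) = 0$, where positive homogeneity with $\lambda = 0$ is invoked and consistency requires $\newQ_{\alpha}(0 \| 0) = 0$ (which is immediate from the definition, taking $A = 0$), and the fact that block additivity as stated in Proposition \ref{prop:add_props}(c) is a two-term statement, so one should note that it extends to any finite number of blocks by a straightforward induction on $|\Sigma|$, since merging blocks $1,\dots,k-1$ into a single block preserves the orthogonality condition with block $k$.
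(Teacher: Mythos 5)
Your derivation is correct and matches the paper's intended argument: the paper presents this proposition precisely as an immediate consequence of Proposition \ref{prop:add_props}, combining block additivity (extended to finitely many blocks), positive homogeneity, and isometric invariance via the embedding $\ket{\phi} \mapsto \ket{x} \otimes \ket{\phi}$, exactly as you do. Your attention to the $p(x)=0$ case and to the two-block-to-many-block induction is sound bookkeeping and raises no issues.
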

}

\sout{\emph{Proof.} \quad To show $\leq$, we consider feasible points $A(x)$ for the optimization programs~\eqref{eq:expr_min_def} for $\rho(x)$ and $\sigma(x)$ and construct $A = \sum_{x} p(x) \proj{x} \otimes A(x)$. By the direct sum property in Proposition~\ref{prop:p}, $A$ is feasible for the optimization program~\eqref{eq:expr_min_def} for $\rho := \sum_{x} p(x) \proj{x} \otimes \rho(x)$ and $\sigma := \sum_{x} p(x) \proj{x} \otimes \sigma(x)$. In addition, $\tr(A) = \sum_{x} p(x) \tr(A(x))$ which proves the desired inequality.
For the other direction $\geq$, consider a feasible point $A$  for the optimization programs~\eqref{eq:expr_min_def} for $\rho$ and $\sigma$. By definition, we have $\rho \leq \sigma \#_{1/\alpha} A$. We apply the completely positive map $\cM(W) = \sum_{x \in \Sigma} (\proj{x} \otimes I) W (\proj{x} \otimes I)$ to this inequality, followed by the joint-concavity and the transformer inequality (see Section~\ref{sec:gm}), we get $\rho \leq \sigma  \#_{1/\alpha} \cM(A)$. Now $\cM(A)$ has a block diagonal form and we let $p(x) A(x)$ be the block labelled $x$. By the direct sum property in Proposition~\ref{prop:p}, we have $p(x) \rho(x) \leq (p(x) \sigma(x))  \#_{1/\alpha} (p(x) A(x))$ and thus, $A(x)$ is feasible for the optimization program of $\newD_{\alpha}(\rho(x) \| \sigma(x))$ and $\tr(A) = \sum_{x} p(x) \tr(A(x))$ which leads to the desired inequality. \qed}

\section{Properties for positive maps}
\label{sec:channels}

The notion of divergence between states can naturally be extended to a divergence between channels by maximizing over the input states. Here we consider the stabilized version where a reference system that is unaffected by the channels is allowed.
Let $X,X', Y$ be Hilbert spaces with $\dim X = \dim X'$ and $\cN, \cM$ be completely positive maps from $\Lin(X')$ to $\Lin(Y)$. It will be convenient to write the definition of the channel divergence in terms of the Choi states of the channels. For this, we define $\Phi_{XX'}$ as an unnormalized maximally entangled state of the form $\Phi_{XX'} = \sum_{x,x'} \ketbra{x}{x'}_{X} \otimes \ketbra{x}{x'}_{X'}$, where $\{\ket{x}\}_{x}$ labels a fixed basis of $X$ and $X'$. Then we let
 $J^{\cN}_{XY} = (\cI_{X} \otimes \cN_{X' \to Y})(\Phi_{XX'})$ and $J^{\cM}_{XY} = (\cI_{X} \otimes \cM_{X' \to Y})(\Phi_{XX'})$ be the Choi matrices of these channels. Here $\cI_{X}$ denotes the identity map on $\Lin(X)$. Observe that for any density operator $\omega \in \D(X)$, $\omega_{X}^{\frac{1}{2}} J^{\cN}_{XY} \omega_{X}^{\frac{1}{2}} = (\cI_{X} \otimes \cN_{X' \to Y})(\Omega_{XX'})$ where $\Omega_{XX'}$ is the pure state $\omega_{X}^{\frac{1}{2}} \Phi_{XX'} \omega_{X}^{\frac{1}{2}}$. For any divergence $\mathbf{D}$, the corresponding channel divergence is defined as:
\begin{align}
\label{eq:def_channel_div}
\mathbf{D}(\cN \| \cM) := \sup_{\omega_{X} \in \D(X)} \mathbf{D}(\omega_{X}^{\frac{1}{2}} J_{XY}^{\cN} \omega_{X}^{\frac{1}{2}} \| \omega_{X}^{\frac{1}{2}} J_{XY}^{\cM} \omega_{X}^{\frac{1}{2}}) \ .
\end{align}
We refer to~\cite{LKDW18} for a more detailed discussion of this definition. For $\mathbf{D} = \newD_{\alpha}$, our first result is an expression for the channel divergence in terms of a convex optimization program.
\begin{theorem}
\label{thm:channel_div}
For any $\alpha \in (1,\infty)$ and completely positive maps $\cN$ and $\cM$, we can write
\begin{align}
\label{eq:expr_channel_D}
\newD_{\alpha}(\cN \| \cM) &= \frac{1}{\alpha-1} \log \newQ_{\alpha}(\cN \| \cM) \\
\label{eq:expr_channel_Q}
\newQ_{\alpha}(\cN \| \cM) 
&= \inf_{A_{XY} \geq 0} \| \tr_{Y}(A_{XY}) \|_{\infty}  \quad \textup{ s.t. } \quad  J^{\cN}_{XY} \leq J^{\cM}_{XY} \#_{1/\alpha} A_{XY} \ ,
\end{align}
where $\| . \|_{\infty}$ denotes the operator norm.
\end{theorem}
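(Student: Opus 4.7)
The plan is to prove the two matching inequalities between $\newQ_{\alpha}(\cN\|\cM)$ as defined by the optimization program~\eqref{eq:expr_channel_Q} and $\sup_{\omega_X \in \D(X)} \newQ_{\alpha}(\omega_X^{1/2} J^{\cN}_{XY} \omega_X^{1/2} \| \omega_X^{1/2} J^{\cM}_{XY} \omega_X^{1/2})$ (which is what one obtains from the definition~\eqref{eq:def_channel_div} since $\alpha>1$ and $\log$ is monotone, so the sup can be moved outside). The easy direction is a direct transformer-inequality argument; the harder direction relies on Sion's minimax theorem.

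For the easy direction ($\newQ_{\alpha}(\cN\|\cM) \leq \inf_A \|\tr_{Y}(A_{XY})\|_{\infty}$), I would take any $A_{XY}$ feasible in~\eqref{eq:expr_channel_Q} and any $\omega_X \in \D(X)$, and apply the transformer inequality (property (ii)) to $M = \omega_X^{1/2} \otimes I_Y$, obtaining
$$\omega_X^{1/2} J^{\cN}_{XY} \omega_X^{1/2} \leq (\omega_X^{1/2} J^{\cM}_{XY} \omega_X^{1/2}) \#_{1/\alpha} (\omega_X^{1/2} A_{XY} \omega_X^{1/2}).$$
Thus $A'_{XY} := \omega_X^{1/2} A_{XY} \omega_X^{1/2}$ is feasible for the state-level program with $\tr(A') = \tr(\omega_X \tr_{Y}(A_{XY})) \leq \|\tr_{Y}(A_{XY})\|_{\infty}$. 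Taking $\sup_\omega$ and then $\inf_A$ gives the inequality.

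For the reverse direction I would first specialize to full-rank $\omega_X$, where $M = \omega_X^{1/2} \otimes I_Y$ is invertible and the transformer inequality becomes an equality. Then $A_{XY} \mapsto \omega_X^{1/2} A_{XY} \omega_X^{1/2}$ is a bijection between the feasible sets of the channel and state programs, giving
$$\newQ_{\alpha}(\omega_X^{1/2} J^{\cN}_{XY} \omega_X^{1/2} \| \omega_X^{1/2} J^{\cM}_{XY} \omega_X^{1/2}) = \inf_{A_{XY} \geq 0,\, J^{\cN}_{XY} \leq J^{\cM}_{XY} \#_{1/\alpha} A_{XY}} \tr(\omega_X \tr_{Y}(A_{XY})).$$
Combined with the identity $\|\tr_{Y}(A_{XY})\|_{\infty} = \sup_{\omega_X \in \D(X)} \tr(\omega_X \tr_{Y}(A_{XY}))$, the remaining task is the minimax swap
$$\sup_{\omega_X \in \D(X)} \inf_{A_{XY}} \tr(\omega_X \tr_{Y}(A_{XY})) = \inf_{A_{XY}} \sup_{\omega_X \in \D(X)} \tr(\omega_X \tr_{Y}(A_{XY})),$$
for which I would invoke Sion's theorem: $\D(X)$ is convex and compact; the $A$-feasible set is convex by the joint concavity of $\#_{1/\alpha}$ (property (v)); the objective is bilinear and hence continuous. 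To ensure Sion's compactness hypotheses I would restrict the $A$-side to a bounded subset using the explicit a priori bound from Proposition~\ref{prop:compact}.

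The main obstacle is that the Sion-optimizer $\omega^*$ need not be full-rank, whereas the bijection above is only exact for full-rank $\omega$. To handle this I would approximate $\omega^*$ by $\omega_\eps := (1-\eps)\omega^* + \eps I_X / \dim X$, which is full-rank for $\eps > 0$. Writing $g(\omega_X) := \inf_{A} \tr(\omega_X \tr_{Y}(A_{XY}))$, the function $g$ is concave (infimum of linear functions), so $g(\omega_\eps) \geq (1-\eps) g(\omega^*) + \eps\, g(I_X/\dim X) \to g(\omega^*) = \inf_A \|\tr_{Y}(A_{XY})\|_{\infty}$ as $\eps \to 0$. Since $\omega_\eps$ is full-rank, $g(\omega_\eps) = \newQ_{\alpha}(\omega_\eps^{1/2} J^{\cN}_{XY} \omega_\eps^{1/2} \| \omega_\eps^{1/2} J^{\cM}_{XY} \omega_\eps^{1/2}) \leq \newQ_{\alpha}(\cN\|\cM)$ by definition, and passing to the limit yields $\newQ_{\alpha}(\cN\|\cM) \geq \inf_A \|\tr_{Y}(A_{XY})\|_{\infty}$. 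Combined with the easy direction this proves the claimed equality and, by the monotonicity of $\log$, the identity~\eqref{eq:expr_channel_D}.
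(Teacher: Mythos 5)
Your proposal is correct in substance and follows the same skeleton as the paper's proof: the ``easy'' inequality is the paper's second half almost verbatim (push a feasible $A_{XY}$ through the transformer inequality with $M=\omega_X^{1/2}\otimes I_Y$ and use $\tr(\omega_X\tr_Y A_{XY})\le\|\tr_Y A_{XY}\|_\infty$), and the ``hard'' inequality rests, as in the paper, on the change of variables $B=\omega_X^{1/2}A_{XY}\omega_X^{1/2}$ for invertible $\omega_X$ combined with Sion's minimax theorem and the identity $\sup_{\omega_X\in\D(X)}\tr(\omega_X\tr_Y A_{XY})=\|\tr_Y A_{XY}\|_\infty$. The genuine difference is how the full-rank restriction is reconciled with the unrestricted supremum in \eqref{eq:def_channel_div}: the paper restricts to $\omega_X>0$ from the outset, applies Sion with the (non-compact) set $\{\omega>0\}$ against a compact set of $A$'s invoked via Proposition~\ref{prop:compact}, and then uses that restricting the supremum can only decrease it; you instead run Sion over all of $\D(X)$ and repair a possibly singular optimizer $\omega^*$ by the perturbation $\omega_\eps$ together with concavity of $g$. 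Your variant works and is arguably cleaner, with two caveats. First, since your sup-side $\D(X)$ is compact, Sion needs no compactness on the $A$-side, so the proposed truncation of the $A$-feasible set is unnecessary---and you should not impose it: your identification $g(\omega_\eps)=\newQ_{\alpha}(\omega_\eps^{1/2}J^{\cN}_{XY}\omega_\eps^{1/2}\|\omega_\eps^{1/2}J^{\cM}_{XY}\omega_\eps^{1/2})$ requires $g$ to be the infimum over the \emph{full} feasible set, and Proposition~\ref{prop:compact} (whose a priori bound is tailored to the objective $\tr(A)$ and becomes $\omega$-dependent after the change of variables) does not show that truncation leaves $\inf_A\tr(\omega\tr_Y A)$ unchanged; dropping the restriction removes the issue entirely (the paper, whose $\omega$-side is non-compact, cannot drop it and is terser about this point). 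Second, you should dispose of the degenerate case $J^{\cN}_{XY}\not\ll J^{\cM}_{XY}$ up front (both sides are $+\infty$: the program \eqref{eq:expr_channel_Q} is infeasible, and $\omega_X=I/\dim X$ in \eqref{eq:def_channel_div} already gives $+\infty$), as the paper does in one sentence, and justify the existence of $\omega^*$ by upper semicontinuity of $g$ (an infimum of linear functions) on the compact set $\D(X)$, or simply perturb a near-optimizer instead.
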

\begin{remark}
Note that the constraint in~\eqref{eq:expr_channel_Q} is jointly convex in $J^{\cN}_{XY}$ and $J^{\cM}_{XY}$.
\end{remark}
\begin{proof}
First, if $J_{XY}^{\cN} \ll J_{XY}^{\cM}$ is not satisfied, then both quantities are $\infty$: we can take $\omega_{X} = \frac{\id}{\dim X}$ in \eqref{eq:def_channel_div} and have $\newD_{\alpha}({\cN} \| {\cM}) = \infty$, and on the other hand the optimization program in \eqref{eq:expr_channel_Q} is infeasible. So we may assume that $J_{XY}^{\cN} \ll J_{XY}^{\cM}$ in what follows.
We have to show that
\begin{equation}
\label{eq:goalchannel}
\sup_{\omega_{X} \in \D(X)} \newQ_{\alpha}(\omega_{X}^{\frac{1}{2}} J_{XY}^{\cN} \omega_{X}^{\frac{1}{2}} \| \omega_{X}^{\frac{1}{2}} J_{XY}^{\cM} \omega_{X}^{\frac{1}{2}})
\;
=
\;
\min_{A_{XY} \geq 0} \| \tr_{Y}(A_{XY}) \|_{\infty} \;\; \text{ s.t. } \;\; J^{\cN}_{XY} \leq J^{\cM}_{XY} \#_{1/\alpha} A_{XY}.
\end{equation}

We will start by showing that, if we strengthen the condition $\omega_{X} \in \D(X)$ with $\omega_{X} \in \D(X), \omega_{X} > 0$ in the left-hand side, equality holds. We will later show that the condition $\omega_{X} > 0$ leads to the same quantity. With this additional condition, the left-hand side of \eqref{eq:goalchannel} is
\begin{align}
\label{eq:proof_channel_intermediate}
\sup_{\substack{\omega_{X} \in \D(X) \\ \omega_{X} > 0}} & \min_{B_{XY} \geq 0} \tr(B_{XY})  \\
& \quad \omega_{X}^{\frac{1}{2}} J^{\cN}_{XY} \omega_{X}^{\frac{1}{2}} \leq (\omega_{X}^{\frac{1}{2}} J^{\cM}_{XY} \omega_{X}^{\frac{1}{2}}) \#_{1/\alpha} B_{XY}  \ . \label{eq:proof_channel_constraint_1}
\end{align}
Now using the fact that $\omega_{X}$ is invertible together with the transformer inequality, we get
\begin{align*}
(\omega_{X}^{\frac{1}{2}} J^{\cM}_{XY} \omega_{X}^{\frac{1}{2}}) \#_{1/\alpha} B_{XY}
&= \omega_{X}^{\frac{1}{2}} \left(  J^{\cM}_{XY}  \#_{1/\alpha} (\omega_{X}^{-\frac{1}{2}} B_{XY} \omega_{X}^{-\frac{1}{2}} ) \right) \omega_{X}^{\frac{1}{2}}
\end{align*}
Thus, the constraint in~\eqref{eq:proof_channel_constraint_1} is equivalent to 
\begin{align*}
 J^{\cN}_{XY} \leq J^{\cM}_{XY} \#_{1/\alpha} (\omega_{X}^{-\frac{1}{2}} B_{XY} \omega_{X}^{-\frac{1}{2}} )  \ .
 \end{align*}
 Thus, by performing a change of variable $A_{XY} = \omega_{X}^{-\frac{1}{2}} B_{XY} \omega_{X}^{-\frac{1}{2}}$, the program in~\eqref{eq:proof_channel_intermediate} becomes
 \begin{align*}
\sup_{\substack{\omega_{X} \in \D(X) \\ \omega_{X} > 0}} & \min_{A_{XY} \geq 0} \tr \left( \omega_{X}^{\frac{1}{2}} A_{XY}  \omega_{X}^{\frac{1}{2}} \right)  \\ 
& \quad  J^{\cN}_{XY}  \leq J^{\cM}_{XY} \#_{1/\alpha} A_{XY}  \ .
\end{align*}
Now using Sion's minmax theorem, observe that we can exchange the minimization and the maximization. In fact, the objective function is linear in both $\omega_{X}$ and in $A_{XY}$, and the set of invertible density operators is convex. In addition, as we assumed that $J^{\cN}_{XY}  \ll J^{\cM}_{XY}$, using Proposition~\ref{prop:compact}, we may restrict the set of $A_{XY}$ we optimize over to be convex and compact. To conclude, it suffices to observe that
$\sup_{\substack{\omega_{X} \in \D(X) \\ \omega_{X} > 0}} \tr(\omega_{X} A_{XY}) = \| \tr_{Y} A_{XY} \|_{\infty}$.


Since replacing the condition $\omega_X \geq 0$ by $\omega_X > 0$ can only decrease the LHS of \eqref{eq:goalchannel}, we have shown the direction $\geq$ of \eqref{eq:goalchannel}. It thus remains to show the direction $\leq$. Take an optimal feasible solution $A_{XY}$ of~\eqref{eq:expr_channel_Q} and let us write $\lambda$ for its value. Now consider an $\omega_{X} \in \D(X)$ and define $A_{XY}^{\omega} = \omega_{X}^{\frac{1}{2}} A_{XY} \omega_{X}^{\frac{1}{2}}$. By construction $\tr(A^{\omega}_{XY}) \leq \lambda$. In addition, we have
\begin{align*}
\omega_{X}^{\frac{1}{2}} J^{\cN}_{XY} \omega_{X}^{\frac{1}{2}} 
&\leq \omega_{X}^{\frac{1}{2}} \left( J^{\cM}_{XY} \#_{1/\alpha} A_{XY}  \right) \omega_{X}^{\frac{1}{2}} \\
&\leq (\omega_{X}^{\frac{1}{2}} J^{\cM}_{XY} \omega_{X}^{\frac{1}{2}}) \#_{1/\alpha} A^{\omega}_{XY} \ ,
\end{align*} 
where we used the transformer inequality. As such $A_{XY}^{\omega}$ is feasible for the defining optimization program for $\newQ_{\alpha}(\omega_{X}^{\frac{1}{2}} J^{\cN}_{XY} \omega_{X}^{\frac{1}{2}}  \| \omega_{X}^{\frac{1}{2}} J^{\cM}_{XY} \omega_{X}^{\frac{1}{2}})$, and this implies $\newQ_{\alpha}(\omega_{X}^{\frac{1}{2}} J^{\cN}_{XY} \omega_{X}^{\frac{1}{2}}  \| \omega_{X}^{\frac{1}{2}} J^{\cM}_{XY} \omega_{X}^{\frac{1}{2}}) \leq \lambda$. Taking the supremum over $\omega_{X}$ completes the proof.
\end{proof}

An immediate corollary is that the channel divergence is subadditive.
\begin{corollary}
\label{cor:channel_div_subadditive}
For any $\alpha \in (1,\infty)$ and completely positive maps $\cN_1, \cN_2, \cM_1, \cM_2$, we have
\begin{align*}
\newD_{\alpha}(\cN_1 \otimes \cN_2 \| \cM_1 \otimes \cM_2) &\leq \newD_{\alpha}(\cN_1 \| \cM_1 ) + \newD_{\alpha}( \cN_2 \| \cM_2)
\end{align*}
\end{corollary}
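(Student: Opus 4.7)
The plan is to lift the subadditivity argument for states (Proposition~\ref{prop:relation_other_div}) to the channel setting using the semidefinite representation from Theorem~\ref{thm:channel_div}. The key observation is that $J^{\cN_1 \otimes \cN_2}_{X_1X_2Y_1Y_2} = J^{\cN_1}_{X_1Y_1} \otimes J^{\cN_2}_{X_2Y_2}$ (up to a reordering of tensor factors) and similarly for $\cM_1 \otimes \cM_2$, so it suffices to exhibit a good feasible point for the program \eqref{eq:expr_channel_Q} defining $\newQ_{\alpha}(\cN_1 \otimes \cN_2 \| \cM_1 \otimes \cM_2)$ from near-optimal solutions for each factor.

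First, I would handle the trivial case: if $J^{\cN_i}_{X_iY_i} \not\ll J^{\cM_i}_{X_iY_i}$ for some $i$, then the right-hand side is $+\infty$ and there is nothing to prove. Otherwise, fix $\eps > 0$ and let $A^{(i)}_{X_iY_i} \geq 0$ be feasible for the program defining $\newQ_{\alpha}(\cN_i \| \cM_i)$ with $\|\tr_{Y_i} A^{(i)}\|_{\infty} \leq \newQ_{\alpha}(\cN_i \| \cM_i) + \eps$ (such $A^{(i)}$ exist, and in fact Proposition~\ref{prop:compact} applied in the proof of Theorem~\ref{thm:channel_div} shows the infimum is attained so one can take $\eps = 0$). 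So $J^{\cN_i} \leq J^{\cM_i} \#_{1/\alpha} A^{(i)}$.

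Next, I would take $A := A^{(1)}_{X_1Y_1} \otimes A^{(2)}_{X_2Y_2}$ as a candidate feasible point. Using the tensor product property (xi) of the Kubo-Ando mean, together with the monotonicity property (i),
\begin{align*}
J^{\cM_1} \otimes J^{\cM_2} \#_{1/\alpha} (A^{(1)} \otimes A^{(2)}) &= (J^{\cM_1} \#_{1/\alpha} A^{(1)}) \otimes (J^{\cM_2} \#_{1/\alpha} A^{(2)}) \\
&\geq J^{\cN_1} \otimes J^{\cN_2} \ ,
\end{align*}
so $A$ is feasible. For the objective, partial trace factorizes across the tensor product, so $\tr_{Y_1 Y_2}(A^{(1)} \otimes A^{(2)}) = \tr_{Y_1}(A^{(1)}) \otimes \tr_{Y_2}(A^{(2)})$, and the operator norm is multiplicative under tensor products. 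Therefore
\[
\newQ_{\alpha}(\cN_1 \otimes \cN_2 \| \cM_1 \otimes \cM_2) \leq \|\tr_{Y_1} A^{(1)}\|_{\infty} \cdot \|\tr_{Y_2} A^{(2)}\|_{\infty} \leq (\newQ_{\alpha}(\cN_1 \| \cM_1)+\eps)(\newQ_{\alpha}(\cN_2 \| \cM_2)+\eps).
\]
Letting $\eps \to 0$ and taking $\frac{1}{\alpha-1} \log$ on both sides yields the claimed inequality.

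There is no serious obstacle here; the only subtlety is making sure the tensor-product identity for the geometric mean applies, which is exactly property (xi) proved earlier, and verifying that the operator norm of a partial trace behaves well under tensor products, which is immediate from $\|M_1 \otimes M_2\|_{\infty} = \|M_1\|_{\infty}\|M_2\|_{\infty}$.
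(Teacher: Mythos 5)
Your proposal is correct and follows essentially the same route as the paper: tensor the (near-)optimal feasible points $A^{(1)} \otimes A^{(2)}$, use $J^{\cN_1 \otimes \cN_2} = J^{\cN_1} \otimes J^{\cN_2}$ together with the tensor-product property of $\#_{1/\alpha}$ to verify feasibility, and conclude via multiplicativity of the operator norm of the partial traces. The extra care you take with the $\eps$-approximation and the $\not\ll$ case is fine but not needed beyond what the paper's one-line argument already covers.
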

\begin{proof}
Let $A^{1}_{X_1Y_1}$ be a feasible solution for the program~\eqref{eq:expr_channel_Q} for the channels $\cN_1$ and $\cM_1$ and $A^{2}_{X_2Y_2}$ for $\cN_2$ and $\cM_2$. Then using the fact that $J^{\cN_1 \otimes \cN_2} = J^{\cN_1} \otimes J^{\cN_2}$ and the tensor product property of the mean (Proposition~\ref{prop:p}), we have that $A^{12}_{X_1X_2 Y_1 Y_2} = A^{1}_{X_1Y_1} \otimes A^{2}_{X_2Y_2}$ is feasible for $\cN_1 \otimes \cN_2$ and $\cM_1 \otimes \cM_2$ and $\| \tr_{Y_1Y_2} A^{12}_{X_1X_2 Y_1 Y_2}  \|_{\infty} = \| \tr_{Y_1} A^{1}_{X_1Y_1} \|_{\infty} \| \tr_{Y_2} A^{2}_{X_2Y_2} \|_{\infty}$.
\end{proof}

Next, we prove that as for states, the regularized channel divergence \change{is equal to} the regularized sandwiched R\'enyi divergence. 
Note that unlike for states, the sandwiched R\'enyi divergence of channels is not additive in general, see~\cite{FFRS19} for an example.
\begin{lemma}
\label{lem:reg_channel_div}
Let $\cN$ and $\cM$ be completely positive maps from $\Lin(X)$ to $\Lin(Y)$. For any $n \geq 1$ and $\alpha \in (1,\infty)$,
\begin{align*}
\frac{1}{n} \newD_{\alpha}(\cN^{\otimes n} \| \cM^{\otimes n}) - \frac{1}{n} \frac{\alpha}{\alpha-1} (d^2+d) \log (n+d) \leq \frac{1}{n} \widetilde{D}_{\alpha}(\cN^{\otimes n} \| \cM^{\otimes n} ) \leq  \frac{1}{n} \newD_{\alpha}(\cN^{\otimes n} \| \cM^{\otimes n}) \ ,
\end{align*}
where $d = \dim X \dim Y$.
\end{lemma}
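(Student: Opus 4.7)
The upper bound $\widetilde{D}_\alpha(\cN^{\otimes n}\|\cM^{\otimes n}) \leq \newD_\alpha(\cN^{\otimes n}\|\cM^{\otimes n})$ is essentially free: the state-level inequality $\widetilde{D}_\alpha(\rho\|\sigma) \leq \newD_\alpha(\rho\|\sigma)$ from Proposition~\ref{prop:relation_other_div}, applied with $\rho = \omega_{X}^{1/2} J^{\cN^{\otimes n}} \omega_{X}^{1/2}$ and $\sigma = \omega_{X}^{1/2} J^{\cM^{\otimes n}} \omega_{X}^{1/2}$ for each $\omega_{X} \in \D(X^{\otimes n})$, gives the channel-level inequality after taking the supremum over $\omega_{X}$.

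For the nontrivial direction the plan is to combine Theorem~\ref{thm:channel_div} with the state-level pinching bound $\newD_\alpha(\rho\|\sigma) \leq D^{\mathbb{M}}_\alpha(\rho\|\sigma) + \tfrac{\alpha}{\alpha-1}\log|\mathrm{spec}(\sigma)|$ from Proposition~\ref{prop:relation_other_div}, but the naive application of this bound fails because $|\mathrm{spec}(\omega_X^{1/2} J^{\cM^{\otimes n}}\omega_X^{1/2})|$ is generically as large as $d^n$. The fix is a reduction to permutation-symmetric inputs. Starting from the saddle-point identity $\newQ_\alpha(\cN^{\otimes n}\|\cM^{\otimes n}) = \sup_{\omega_{X}} f(\omega_{X})$ established in the proof of Theorem~\ref{thm:channel_div}, where $f(\omega_X) := \inf_A \tr((\omega_X \otimes I_{Y^{\otimes n}}) A)$ subject to $J^{\cN^{\otimes n}} \leq J^{\cM^{\otimes n}} \#_{1/\alpha} A$, note that $f$ is concave on $\D(X^{\otimes n})$ as an infimum of linear functions. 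The tensor-power Choi operators $J^{\cN^{\otimes n}}$ and $J^{\cM^{\otimes n}}$ are invariant under the diagonal permutation $V_\pi = U^X_\pi \otimes U^Y_\pi$ for each $\pi \in S_n$; substituting $A = V_\pi B V_\pi^*$ and using the transformer equality together with this invariance yields $f(U^X_\pi \omega_X (U^X_\pi)^*) = f(\omega_X)$. Averaging over $\pi$ and invoking concavity, the supremum is attained at a permutation-symmetric $\omega_X^\star$ on $X^{\otimes n}$.

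For this symmetric $\omega_X^\star$, both $(\omega_X^\star)^{1/2} \otimes I_{Y^{\otimes n}}$ and $J^{\cM^{\otimes n}}$ commute with every $V_\pi$, so $\sigma^\star := (\omega_X^\star)^{1/2} J^{\cM^{\otimes n}} (\omega_X^\star)^{1/2}$ is a permutation-invariant operator on $(XY)^{\otimes n}$ with $\dim(XY) = d$. Schur-Weyl duality decomposes the commutant of the $S_n$-action on $(\mathbb{C}^d)^{\otimes n}$ as $\bigoplus_{\lambda} \mathrm{End}(U_\lambda)$ over partitions $\lambda$ of $n$ with at most $d$ parts, and combining the standard count of such partitions with the Weyl dimension formula gives
\[
|\mathrm{spec}(\sigma^\star)| \;\leq\; \sum_\lambda \dim U_\lambda \;\leq\; (n+d)^{d^2+d}.
\]

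Putting the pieces together, applying Proposition~\ref{prop:relation_other_div} with $\rho^\star = (\omega_X^\star)^{1/2} J^{\cN^{\otimes n}} (\omega_X^\star)^{1/2}$ and $\sigma^\star$ yields $\newD_\alpha(\rho^\star\|\sigma^\star) \leq D^{\mathbb{M}}_\alpha(\rho^\star\|\sigma^\star) + \tfrac{\alpha}{\alpha-1}(d^2+d)\log(n+d)$, and then $D^{\mathbb{M}}_\alpha(\rho^\star\|\sigma^\star) \leq \widetilde{D}_\alpha(\rho^\star\|\sigma^\star) \leq \widetilde{D}_\alpha(\cN^{\otimes n}\|\cM^{\otimes n})$ (by definition of the channel divergence) produces the desired
\[
\newD_\alpha(\cN^{\otimes n}\|\cM^{\otimes n}) \;\leq\; \widetilde{D}_\alpha(\cN^{\otimes n}\|\cM^{\otimes n}) + \tfrac{\alpha}{\alpha-1}(d^2+d)\log(n+d).
\]
The main obstacle is the spectrum-counting step: the symmetric reduction is indispensable (without it the correction term is useless), and turning the naive $d^n$ into a polynomial-in-$n$ bound requires the Schur-Weyl dimension estimate for the algebra of permutation-invariant operators on $(\mathbb{C}^d)^{\otimes n}$.
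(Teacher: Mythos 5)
Your proposal is correct in substance and follows essentially the same route as the paper's proof: reduce to permutation-invariant input states, apply the pinching relation $\newD_{\alpha} \leq D^{\mathbb{M}}_{\alpha} + \frac{\alpha}{\alpha-1}\log|\spec(\cdot)|$ of Proposition~\ref{prop:relation_other_div}, bound the number of eigenvalues of a permutation-invariant operator on $(\CC^d)^{\otimes n}$ via Schur--Weyl (this is exactly Lemma~\ref{lem:perm_inv_spec}, and your count $\sum_{\lambda}\dim U_{\lambda}$ is a marginally sharper version of the paper's $|\cI_{n,d}|\max_{\lambda}\dim U_{\lambda}$, with the same final bound $(n+d)^{d^2+d}$), then finish with $D^{\mathbb{M}}_{\alpha}\leq\widetilde{D}_{\alpha}$ and the definition of the channel divergence. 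The one step where you genuinely diverge is the symmetrization: the paper simply invokes the covariance result \cite[Proposition II.4]{LKDW18}, which needs only the data-processing inequality for $\newD_{\alpha}$, whereas you derive the reduction from the convex program of Theorem~\ref{thm:channel_div}, using concavity of $f(\omega)=\inf_A \tr((\omega\otimes \id)A)$ (an infimum of linear functions over an $\omega$-independent feasible set) together with invariance of that feasible set under $A\mapsto V_{\pi}AV_{\pi}^*$. This is self-contained and a perfectly valid alternative; it buys independence from the cited lemma at the cost of leaning on the minimax characterization.

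There is one small gap to patch. Your final chain implicitly uses $\newQ_{\alpha}(\cN^{\otimes n}\|\cM^{\otimes n}) = f(\omega^{\star}) = \newQ_{\alpha}(\rho^{\star}\|\sigma^{\star})$, but the identification $f(\omega)=\newQ_{\alpha}(\omega^{\frac{1}{2}}J^{\cN^{\otimes n}}\omega^{\frac{1}{2}}\,\|\,\omega^{\frac{1}{2}}J^{\cM^{\otimes n}}\omega^{\frac{1}{2}})$ coming from the proof of Theorem~\ref{thm:channel_div} uses the transformer \emph{equality} and hence invertibility of $\omega$; for a singular maximizer $\omega^{\star}$ one only gets $f(\omega^{\star})\geq \newQ_{\alpha}(\rho^{\star}\|\sigma^{\star})$, which points the wrong way for your purposes. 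The fix is easy: the proof of Theorem~\ref{thm:channel_div} already shows that the supremum over strictly positive $\omega$ equals $\newQ_{\alpha}(\cN^{\otimes n}\|\cM^{\otimes n})$, and the twirl $\omega\mapsto\frac{1}{n!}\sum_{\pi}U_{\pi}\omega U_{\pi}^*$ preserves invertibility while not decreasing $f$; so work with a sequence of invertible permutation-invariant near-optimizers (rather than an exact, possibly singular, maximizer), apply the pinching bound to each, and pass to the limit. With that adjustment your argument is complete.
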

\begin{proof}
The second inequality follows immediately from the fact that $\widetilde{D}_{\alpha} \leq \newD_{\alpha}$; see Proposition~\ref{prop:relation_other_div}.

For the other direction, the channels $\cN^{\otimes n}$ and $\cM^{\otimes n}$ are covariant with respect to the representation of the symmetric group~$\mathfrak{S}_n$. In fact, for $\pi \in \mathfrak{S}_n$, if we denote $P_{X}(\pi)$ the operator on the space $X^{\otimes n}$ that permutes the $n$ tensor factors according to $\pi$, then
\begin{align*}
\cN^{\otimes n}(P_{X}(\pi) \ . \ P_{X}(\pi)^*) = P_{Y}(\pi) \cN^{\otimes n}(\  . \ ) P_{Y}(\pi)^{*} \ ,
\end{align*} 
and similarly the same relation holds for $\cM^{\otimes n}$. Using the definition~\eqref{eq:def_channel_div}, we have
\begin{align*}
\newD_{\alpha}(\cN^{\otimes n} \| \cM^{\otimes n}) &= \sup_{\omega_{X^n} \in \D(X^{\otimes n})} \newD_{\alpha}( \omega_{X^n}^{\frac{1}{2}}(J_{XY}^{\cN})^{\otimes n} \omega_{X^n}^{\frac{1}{2}} \| \omega_{X^n}^{\frac{1}{2}}(J^{\cM}_{XY})^{\otimes n} \omega_{X^n}^{\frac{1}{2}} ) \ .
\end{align*}
Using \cite[Proposition II.4]{LKDW18} for the divergence $\newD_{\alpha}$ (which satisfies the data-processing inequality as shown in Proposition~\ref{prop:data_processing}), we may restrict the optimization to permutation-invariant states and get
\begin{align*}
\newD_{\alpha}(\cN^{\otimes n} \| \cM^{\otimes n}) &= \sup_{\substack{\omega_{X^n} \in \D(X^{\otimes n}) \\ [P_{X}(\pi), \omega_{X^n}] = 0 \: \forall \pi \in \mathfrak{S}_n}} \newD_{\alpha}( \omega_{X^n}^{\frac{1}{2}}(J_{XY}^{\cN})^{\otimes n} \omega_{X^n}^{\frac{1}{2}} \| \omega_{X^n}^{\frac{1}{2}}(J^{\cM}_{XY})^{\otimes n} \omega_{X^n}^{\frac{1}{2}} ) \ .
\end{align*}
Now consider such a permutation-invariant $\omega_{X^n}$ and we use the relation to the measured R\'enyi divergence in~\eqref{eq:relation_dmeas}:
\begin{align*}
&\newD_{\alpha}( \omega_{X^n}^{\frac{1}{2}}(J^{\cN}_{XY})^{\otimes n} \omega_{X^n}^{\frac{1}{2}} \| \omega_{X^n}^{\frac{1}{2}}(J^{\cM}_{XY})^{\otimes n} \omega_{X^n}^{\frac{1}{2}} ) \\
&\leq D^{\mathbb{M}}_{\alpha}( \omega_{X^n}^{\frac{1}{2}}(J^{\cN}_{XY})^{\otimes n} \omega_{X^n}^{\frac{1}{2}} \| \omega_{X^n}^{\frac{1}{2}}(J^{\cM}_{XY})^{\otimes n} \omega_{X^n}^{\frac{1}{2}} ) + \frac{\alpha}{\alpha-1} \log |\mathrm{spec}(\omega_{X^n}^{\frac{1}{2}}(J^{\cM}_{XY})^{\otimes n} \omega_{X^n}^{\frac{1}{2}})|  \ .
\end{align*}
Now note that if $\omega_{X^n}$ is permutation-invariant, then so is the operator $\omega_{X^n}^{\frac{1}{2}}(J^{\cM}_{XY})^{\otimes n} \omega_{X^n}^{\frac{1}{2}}$ on $(X \otimes Y)^{\otimes n}$. As such, using Lemma~\ref{lem:perm_inv_spec}, 
\begin{align*}
|\mathrm{spec}(\omega_{X^n}^{\frac{1}{2}}(J^{\cM}_{XY})^{\otimes n} \omega_{X^n}^{\frac{1}{2}})| \leq (n+1)^{d} (n + d)^{d^2} \ ,
\end{align*}
where $d := \dim X \dim Y$. Taking the supremum over all $\omega_{X^n}$, we get
\begin{align*}
\newD_{\alpha}(\cN^{\otimes n} \| \cM^{\otimes n}) &\leq D^{\mathbb{M}}_{\alpha}(\cN^{\otimes n} \| \cM^{\otimes n}) + \frac{\alpha}{\alpha-1} (d^2+d) \log (n+d) \\
&\leq \widetilde{D}_{\alpha}(\cN^{\otimes n} \| \cM^{\otimes n}) + \frac{\alpha}{\alpha-1} (d^2+d) \log (n+d) \ .
\end{align*}
This gives the desired result.
\end{proof}

The channel divergence satisfies a chain rule property for any $\alpha \in (1, \infty)$, as the one satisfied for the geometric divergence $\widehat{D}_{\alpha}$~\cite{FF19}.
\begin{proposition}
\label{prop:chain_rule_im}
Let $\alpha \in (1,\infty)$. Let $\rho_{RX'}, \sigma_{RX'} \in \Pos(R \otimes X')$, $\cN, \cM$ be completely positive maps from $\Lin(X')$ to $\Lin(Y)$. Then
\begin{align*}
\newD_{\alpha}((\cI_{R} \otimes \cN_{X' \to Y})(\rho_{RX'}) \| (\cI_{R} \otimes \cM_{X' \to Y})(\sigma_{RX'} ) &\leq \newD_{\alpha}(\cN \| \cM ) + \newD_{\alpha}( \rho_{RX'} \| \sigma_{RX'}) \ .
\end{align*}
\end{proposition}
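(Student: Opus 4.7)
The plan is to construct an explicit feasible point $C_{RY}$ for the convex program defining $\newQ_{\alpha}\bigl((\cI_R \otimes \cN)(\rho_{RX'}) \,\big\|\, (\cI_R \otimes \cM)(\sigma_{RX'})\bigr)$ whose trace is at most $\newQ_{\alpha}(\rho_{RX'}\|\sigma_{RX'}) \cdot \newQ_{\alpha}(\cN\|\cM)$; taking logarithms then yields the stated chain rule. One may assume $\rho_{RX'} \ll \sigma_{RX'}$ and $J^{\cN}_{XY} \ll J^{\cM}_{XY}$, for otherwise the right-hand side is $+\infty$.

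First I would invoke Proposition~\ref{prop:compact} and Theorem~\ref{thm:channel_div} to fix an optimal $B_{RX'} \geq 0$ satisfying $\rho_{RX'} \leq \sigma_{RX'} \#_{1/\alpha} B_{RX'}$ with $\tr B_{RX'} = \newQ_{\alpha}(\rho_{RX'}\|\sigma_{RX'})$, and an optimal $A_{XY} \geq 0$ satisfying $J^{\cN}_{XY} \leq J^{\cM}_{XY} \#_{1/\alpha} A_{XY}$ with $\|\tr_Y A_{XY}\|_{\infty} = \newQ_{\alpha}(\cN\|\cM)$. Tensoring the two inequalities and applying the tensor-product identity for the Kubo-Ando mean (property (xi) of Section~\ref{sec:gm}) gives, on $R \otimes X' \otimes X \otimes Y$,
\begin{equation*}
\rho_{RX'} \otimes J^{\cN}_{XY} \;\leq\; (\sigma_{RX'} \otimes J^{\cM}_{XY}) \#_{1/\alpha} (B_{RX'} \otimes A_{XY}).
\end{equation*}

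Next I would sandwich this inequality by $\bra{\Phi}_{XX'}$ on the left and $\ket{\Phi}_{XX'}$ on the right, where $\ket{\Phi}_{XX'} = \sum_x \ket{x}_X \ket{x}_{X'}$ is the unnormalized maximally entangled vector associated with $\Phi_{XX'}$. The transformer inequality (property (ii) of Section~\ref{sec:gm}) lets the sandwich pass inside the geometric mean, and the standard Choi identity $\bra{\Phi}_{XX'}(\tau_{RX'} \otimes J^{\cK}_{XY})\ket{\Phi}_{XX'} = (\cI_R \otimes \cK_{X' \to Y})(\tau_{RX'})$ (a direct consequence of the definition of the Choi matrix) then identifies the three sandwiched tensor products: the $\rho \otimes J^{\cN}$ factor becomes $(\cI_R \otimes \cN)(\rho_{RX'})$, the $\sigma \otimes J^{\cM}$ factor becomes $(\cI_R \otimes \cM)(\sigma_{RX'})$, and the $B \otimes A$ factor becomes the PSD operator $C_{RY} := (\cI_R \otimes \Phi_A)(B_{RX'})$, where $\Phi_A$ is the CP map whose Choi matrix is $A$. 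This exhibits $C_{RY}$ as a feasible witness for the program on the left-hand side.

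Finally I would compute the trace by expanding in the computational basis of $X$ and $X'$: one finds $\tr C_{RY} = \tr\bigl[B_{X'} (\tr_Y A_{XY})^T\bigr]$, where $B_{X'} = \tr_R B_{RX'}$. Since $(\tr_Y A_{XY})^T$ is PSD with the same operator norm as $\tr_Y A_{XY}$, Hölder's inequality yields $\tr C_{RY} \leq \tr(B_{X'}) \cdot \|\tr_Y A_{XY}\|_{\infty} = \newQ_{\alpha}(\rho_{RX'}\|\sigma_{RX'}) \cdot \newQ_{\alpha}(\cN\|\cM)$, completing the argument. The main conceptual step is recognizing that ``tensor the two defining inequalities and contract with the maximally entangled vector'' simultaneously realizes both channel applications on the left-hand side and manufactures a feasible witness for the composite program on the right-hand side; the remaining steps are routine applications of the tensor-product, transformer, and monotonicity properties of the Kubo-Ando mean collected in Section~\ref{sec:gm}.
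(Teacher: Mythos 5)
Your proposal is correct and follows essentially the same route as the paper's proof: tensor the two optimal witnesses, use the tensor-product identity for $\#_{1/\alpha}$ and the transformer inequality after contracting with $\ket{\Phi_{XX'}}$, and bound the trace of the resulting witness by $\|\tr_Y A_{XY}\|_{\infty}\,\tr(B_{RX'})$. Your transpose/H\"older phrasing of the final trace estimate is just a rewriting of the paper's bound $\bra{\Phi_{XX'}} \tr_Y(A_{XY}) \otimes \tr_R(B_{RX'}) \ket{\Phi_{XX'}} \leq \|\tr_Y A_{XY}\|_{\infty} \tr(B_{RX'})$.
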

\begin{proof}
Let $A_{XY}$ be an optimal solution for \eqref{eq:expr_channel_Q} for the maps $\cN$ and $\cM$ and $\bar{A}_{RX'}$ be an optimal solution for \eqref{eq:expr_min_def} for $\rho_{RX'}$ and $\sigma_{RX'}$. Then note that 
\begin{align*}
(\cI_{R} \otimes \cN_{X' \to Y})(\rho_{RX'}) = \bra{\Phi_{XX'}} J^{\cN}_{XY} \otimes \rho_{RX'} \ket{\Phi_{XX'}} \quad (\cI_{R} \otimes \cM_{X' \to Y})(\sigma_{RX'}) = \bra{\Phi_{XX'}} J^{\cM}_{XY} \otimes \sigma_{RX'} \ket{\Phi_{XX'}} \ .
\end{align*}
Combining the properties of $A_{XY}$ and $\bar{A}_{RX'}$ using Proposition~\ref{prop:p}, we get
\begin{align*}
J^{\cN}_{XY} \otimes \rho_{RX'} \leq (J^{\cM}_{XY} \otimes \sigma_{RX'}) \#_{1/\alpha} (A_{XY} \otimes \bar{A}_{RX'}) \ .
\end{align*}
Then using the transformer inequality, we have
\begin{align*}
\bra{\Phi_{XX'}} J^{\cN}_{XY} \otimes \rho_{RX'}  \ket{\Phi_{XX'}}
&\leq \bra{\Phi_{XX'}} \big(  (J^{\cM}_{XY} \otimes \sigma_{RX'}) \#_{1/\alpha} (A_{XY} \otimes \bar{A}_{RX'}) \big) \ket{\Phi_{XX'}} \\
&\leq \big( \bra{\Phi_{XX'}} (J^{\cM}_{XY} \otimes \sigma_{RX'})  \ket{\Phi_{XX'}} \big) \#_{1/\alpha} \big(\bra{\Phi_{XX'}} (A_{XY} \otimes \bar{A}_{RX'}) \ket{\Phi_{XX'}}\big) \ .
\end{align*}
To conclude it suffices to compute 
\begin{align*}
\tr\big(\bra{\Phi_{XX'}} (A_{XY} \otimes \bar{A}_{RX'}) \ket{\Phi_{XX'}}\big) &= \bra{\Phi_{XX'}} \tr_{Y}(A_{XY}) \otimes \tr_{R}(\bar{A}_{RX'}) \ket{\Phi_{XX'}} \\
&\leq \| \tr_{Y} A_{XY} \|_{\infty} \bra{\Phi_{XX'}} \id_{X} \otimes \tr_{R}(\bar{A}_{RX'}) \ket{\Phi_{XX'}} \\
&= \| \tr_{Y} A_{XY} \|_{\infty} \tr(\bar{A}_{RX'}) \ ,
\end{align*}
which after taking the logarithm establishes the desired inequality.
\end{proof}

\begin{remark}
Note that the chain rule can be seen as a generalization of the data processing inequality. In fact, we can take the $R$ system to be trivial and if the maps are the same $\cN = \cM$ and in addition trace-preserving, then $\newD_{\alpha}(\cN \| \cM) = 0$.
\end{remark}

\section{Applications}

\label{sec:applications}

In this section we present some example applications of the newly introduced divergences. Most of these applications are related to the regularized sandwiched divergence between channels. For $\alpha > 1$, we denote
%
\begin{align}
\label{eq:def_reg_div}
\widetilde{D}_{\alpha}^{\reg}(\cN \| \cM) := \lim_{n \to \infty} \frac{1}{n} \widetilde{D}_{\alpha}(\cN^{\otimes n} \| \cM^{\otimes n}) \ .
\end{align}
We note that as the sequence $\frac{1}{n} \widetilde{D}_{\alpha}(\cN^{\otimes n} \| \cM^{\otimes n})$ is superadditive, using Fekete's lemma, the limit exists and can be replaced by a supremum over $n$. Regularized entropic quantities appear extensively in quantum information theory but it is unclear how to compute them (or even whether they are computable to start with) as we do not have control on the convergence speed in the regularization. Using $\newD_{\alpha}$, one can quantify the convergence speed explicitly for $\widetilde{D}^{\reg}_{\alpha}$ and thus show that this quantity is computable. 

\subsection{Converging hierarchy of upper bounds on the regularized divergence of channels}
\label{sec:hierarchy}

\begin{theorem}
\label{thm:comp_reg_sand}
Let $\alpha \in (1, \infty)$ and $\cN, \cM$ be completely positive maps from $\Lin(X)$ to $\Lin(Y)$. Then for any $m \geq 1$,
\begin{align*}
\frac{1}{m} \newD_{\alpha}(\cN^{\otimes m} \| \cM^{\otimes m}) - \frac{1}{m} \frac{\alpha}{\alpha-1} (d^2+d) \log (m+d) \leq \: \widetilde{D}_{\alpha}^{\reg}(\cN \| \cM) \: \leq \frac{1}{m} \newD_{\alpha}(\cN^{\otimes m} \| \cM^{\otimes m}) \ ,
\end{align*}
where $d = \dim X \dim Y$. We can also write 
\begin{align}
\label{eq:convergence_speed_sand_channel}
 \widetilde{D}_{\alpha}^{\reg}(\cN \| \cM) - \frac{1}{m} \widetilde{D}_{\alpha}(\cN^{\otimes m} \| \cM^{\otimes m})  \leq \frac{1}{m} \frac{\alpha}{\alpha-1} (d^2+d) \log (m+d) \ .
\end{align}
\end{theorem}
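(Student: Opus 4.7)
The theorem is essentially a packaging of Lemma~\ref{lem:reg_channel_div}, which already gives a two-sided sandwich of $\widetilde{D}_{\alpha}(\cN^{\otimes n} \| \cM^{\otimes n})$ by $\newD_{\alpha}(\cN^{\otimes n} \| \cM^{\otimes n})$ for every $n$, together with the subadditivity of $\newD_{\alpha}$ under tensor products (Corollary~\ref{cor:channel_div_subadditive}) and the superadditivity of $n \mapsto \widetilde{D}_{\alpha}(\cN^{\otimes n} \| \cM^{\otimes n})$ noted after \eqref{eq:def_reg_div} (so that $\widetilde{D}_{\alpha}^{\reg}(\cN\|\cM) = \sup_n \frac{1}{n} \widetilde{D}_{\alpha}(\cN^{\otimes n}\|\cM^{\otimes n})$ by Fekete).

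For the upper bound $\widetilde{D}_{\alpha}^{\reg}(\cN\|\cM) \leq \frac{1}{m}\newD_{\alpha}(\cN^{\otimes m}\|\cM^{\otimes m})$, I would use Lemma~\ref{lem:reg_channel_div} (the inequality $\widetilde{D}_{\alpha} \leq \newD_{\alpha}$ at the channel level) applied to $n = mk$, followed by subadditivity of $\newD_{\alpha}$ applied to the $k$-fold tensor power of $\cN^{\otimes m}$ and $\cM^{\otimes m}$, giving
\[
\frac{1}{mk}\widetilde{D}_{\alpha}(\cN^{\otimes mk}\|\cM^{\otimes mk}) \;\leq\; \frac{1}{mk}\newD_{\alpha}(\cN^{\otimes mk}\|\cM^{\otimes mk}) \;\leq\; \frac{1}{m}\newD_{\alpha}(\cN^{\otimes m}\|\cM^{\otimes m}),
\]
and then letting $k \to \infty$.

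For the lower bound, I would just apply Lemma~\ref{lem:reg_channel_div} with $n = m$ to obtain
\[
\frac{1}{m}\newD_{\alpha}(\cN^{\otimes m}\|\cM^{\otimes m}) - \frac{1}{m}\frac{\alpha}{\alpha-1}(d^2+d)\log(m+d) \;\leq\; \frac{1}{m}\widetilde{D}_{\alpha}(\cN^{\otimes m}\|\cM^{\otimes m}),
\]
and combine this with $\frac{1}{m}\widetilde{D}_{\alpha}(\cN^{\otimes m}\|\cM^{\otimes m}) \leq \widetilde{D}_{\alpha}^{\reg}(\cN\|\cM)$, which holds because the regularized quantity is a supremum over $n$ by Fekete. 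Finally, \eqref{eq:convergence_speed_sand_channel} follows by chaining the upper bound on $\widetilde{D}_{\alpha}^{\reg}$ with the lower bound on $\frac{1}{m}\widetilde{D}_{\alpha}(\cN^{\otimes m}\|\cM^{\otimes m})$ (or equivalently by subtracting $\frac{1}{m}\widetilde{D}_{\alpha}(\cN^{\otimes m}\|\cM^{\otimes m})$ on both sides of the upper bound and using $\widetilde{D}_{\alpha} \leq \newD_{\alpha}$).

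There is no real obstacle here: all the analytic content is already packed into Lemma~\ref{lem:reg_channel_div}, whose proof uses the representation theoretic control on the spectrum of permutation-invariant operators together with the comparison between $\newD_{\alpha}$ and $D^{\M}_{\alpha}$. The only conceptual point worth stating clearly in the write-up is that superadditivity of $n \mapsto \widetilde{D}_{\alpha}(\cN^{\otimes n}\|\cM^{\otimes n})$ (and hence the Fekete identification of the limit with a supremum) justifies passing from the single-$m$ inequality to a bound on the regularization, which is exactly the feature that converts the finite-$m$ estimate of Lemma~\ref{lem:reg_channel_div} into a converging hierarchy.
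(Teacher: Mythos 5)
Your proposal is correct and follows essentially the same route as the paper: the upper bound via $\widetilde{D}_{\alpha}\leq \newD_{\alpha}$ on $mk$ copies plus subadditivity of $\newD_{\alpha}$ (Corollary~\ref{cor:channel_div_subadditive}), the lower bound from Lemma~\ref{lem:reg_channel_div} with $n=m$ combined with the supremum characterization of $\widetilde{D}_{\alpha}^{\reg}$, and \eqref{eq:convergence_speed_sand_channel} by chaining the two. No gaps.
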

\begin{proof}
The lower bound follows immediately from Lemma~\ref{lem:reg_channel_div} with $m = n$. For the upper bound, using the fact that $\widetilde{D}_{\alpha} \leq \newD_{\alpha}$ and the subadditivity property in Corollary~\ref{cor:channel_div_subadditive} we have for any $n, m$
\begin{align*}
\widetilde{D}_{\alpha}( \cN^{\otimes mn} \| \cM^{\otimes mn} ) 
&\leq \newD_{\alpha}( \cN^{\otimes mn} \| \cM^{\otimes mn} ) \\
&\leq n \newD_{\alpha}( \cN^{\otimes m} \| \cM^{\otimes m} ) \ .
\end{align*}
Dividing by $mn$ and taking the limit as $n \to \infty$ concludes the proof of the upper bound.
The inequality~\eqref{eq:convergence_speed_sand_channel} follows from this upper bound together with Lemma~\ref{lem:reg_channel_div} (more specifically, the lower bound there applied for $m = n$).
\end{proof}
Note that for any finite $m$, the quantity $\frac{1}{m} \newD_{\alpha}(\cN^{\otimes m} \| \cM^{\otimes m})$ can be approximated to arbitrary accuracy and this shows that $\widetilde{D}_{\alpha}^{\reg}$ can be approximated within additive $\eps$ in finite time. 
The precise analysis of the running time as a function of the bit size of the input is a subtle question that is outside the scope of this work. 
But staying at a high level, the running time of this algorithm will be exponential in the input and output dimensions of the channels. In fact, we can take $m = \lceil \frac{8\alpha d^3}{(\alpha-1)\eps} \rceil$ where $d$ is the dimension of the Choi state of the channels $\cN$ and $\cM$, and then compute $\newD_{\alpha}(\cN^{\otimes m} \| \cM^{\otimes m})$. The channels $\cN^{\otimes m}$ and $\cM^{\otimes m}$ have a Choi state of dimension $d^m$ and thus the convex program defining $\newD_{\alpha}(\cN^{\otimes m} \| \cM^{\otimes m})$ can be approximated in time that is polynomial in $d^m$ using the ellipsoid algorithm. Overall, the running time is exponential in $d$. 

As the regularized divergence between channels appears in the analysis of many information processing tasks, we believe this result will be useful in obtaining improved characterizations of such tasks. An example is the task of channel discrimination, for which the regularized Umegaki channel divergence governs the asymptotic error rate~\cite{berta18,Wang2019,FFRS19}. One could also obtain upper bounds on quantum channel capacities, such as the classical capacity, in terms of regularized divergence between channels (see e.g.,~\cite{WFT17}). In fact, closely following the approach of~\cite{FF19} to upper bound the classical capacity of a quantum channel and replacing $\widehat{D}_{\alpha}$ with $\newD_{\alpha}$, one does obtain improved bounds, including for the amplitude damping channel. However, the improvements obtained by such a direct application of~\cite{FF19} were small, typically less than 1\%. To give an example, for a damping parameter $\gamma = 0.5$, we obtain (using $\newD_{\alpha}$ with two copies of the channel) an upper bound on the capacity of $0.7694...$ whereas the previous bound (using $D_{\max}$ or $\widehat{D}_{\alpha}$) was $0.7716...$~\cite{WXD17,WFT17}. We leave the further exploration of this question for future work.

\subsection{A chain rule for the sandwiched R\'enyi divergence}

\label{sec:chain_rule_sand}

Our second application is a chain rule for the sandwiched R\'enyi divergence, which once again features the regularized divergence between channels. Such a chain rule was proved in~\cite{FFRS19} for the Umegaki relative entropy.
\begin{corollary}[Chain rule for the sandwiched R\'enyi divergence]
\label{cor:chainrulesandwiched}
Let $\alpha \in (1,\infty)$. Let $\rho_{RX'}, \sigma_{RX'} \in \Pos(RX')$, $\cN, \cM$ be completely positive maps from $\Lin(X')$ to $\Lin(Y)$. Then
\begin{align}
\label{eq:chain_rule_sand}
\widetilde{D}_{\alpha}((\cI_{R} \otimes \cN_{X' \to Y})(\rho_{RX'}) \| (\cI_{R} \otimes \cM_{X' \to Y})(\sigma_{RX'} ) ) &\leq  \widetilde{D}^{\reg}_{\alpha}(\cN \| \cM ) + \widetilde{D}_{\alpha}( \rho_{RX'} \| \sigma_{RX'}) \ .
\end{align}
\end{corollary}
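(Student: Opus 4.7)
The plan is to lift the chain rule for $\newD_{\alpha}$ (Proposition~\ref{prop:chain_rule_im}) to a chain rule for $\widetilde{D}_{\alpha}$ by tensorizing and then regularizing. The key observation is that the sandwiched divergence is additive under tensor products on the state side, while its regularization on the channel side is exactly what $\newD_{\alpha}$ converges to (Theorem~\ref{thm:comp_reg_sand}).

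Concretely, I would first apply Proposition~\ref{prop:chain_rule_im} with inputs $\rho_{RX'}^{\otimes n}, \sigma_{RX'}^{\otimes n}$ and channels $\cN^{\otimes n}, \cM^{\otimes n}$ (treating the $R$-system as $R^{n}$), obtaining
\begin{align*}
\newD_{\alpha}\bigl((\cI \otimes \cN^{\otimes n})(\rho_{RX'}^{\otimes n}) \,\big\|\, (\cI \otimes \cM^{\otimes n})(\sigma_{RX'}^{\otimes n})\bigr)
\leq \newD_{\alpha}(\cN^{\otimes n} \| \cM^{\otimes n}) + \newD_{\alpha}(\rho_{RX'}^{\otimes n} \| \sigma_{RX'}^{\otimes n}).
\end{align*}
Next I would use the inequality $\widetilde{D}_{\alpha} \leq \newD_{\alpha}$ from Proposition~\ref{prop:relation_other_div} to lower-bound the left-hand side by $\widetilde{D}_{\alpha}$, and then invoke the well-known additivity of the sandwiched divergence under tensor products to write the left-hand side as $n \cdot \widetilde{D}_{\alpha}((\cI_R \otimes \cN)(\rho_{RX'}) \| (\cI_R \otimes \cM)(\sigma_{RX'}))$.

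After dividing both sides by $n$ and letting $n \to \infty$, the first term on the right converges to $\widetilde{D}^{\reg}_{\alpha}(\cN \| \cM)$ by the definition~\eqref{eq:def_reg_div} together with Theorem~\ref{thm:comp_reg_sand} (or equivalently Lemma~\ref{lem:reg_channel_div}), and the second term converges to $\widetilde{D}_{\alpha}(\rho_{RX'} \| \sigma_{RX'})$ by the state-level regularization identity~\eqref{eq:reg_im_sand} of Proposition~\ref{prop:relation_other_div}. The left-hand side is independent of $n$, so this yields the desired bound.

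Most steps are essentially bookkeeping once the right identifications are made; the only potential pitfall is making sure the additivity of $\widetilde{D}_{\alpha}$ under tensor products is applied to $(\cI_R \otimes \cN)(\rho_{RX'})$ viewed as a state on $RY$ (this is straightforward from the defining trace formula~\eqref{eq:def_minim_div}, since tensor products factor under the sandwiched power). The real content of the argument is imported from the two ingredients already established: the chain rule for $\newD_{\alpha}$ (whose proof uses the transformer inequality and the tensor-product property of the geometric mean), and the quantitative regularization convergence $\frac{1}{n}\newD_{\alpha}(\cdot^{\otimes n}\|\cdot^{\otimes n}) \to \widetilde{D}_{\alpha}$ for both states and channels.
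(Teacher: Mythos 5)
Your proposal is correct and follows essentially the same route as the paper: apply Proposition~\ref{prop:chain_rule_im} to $\rho_{RX'}^{\otimes n}$, $\sigma_{RX'}^{\otimes n}$ and $\cN^{\otimes n}$, $\cM^{\otimes n}$, divide by $n$, and regularize using Lemma~\ref{lem:reg_channel_div} and~\eqref{eq:reg_im_sand}. The only (harmless) difference is on the left-hand side, where the paper invokes the regularization identity~\eqref{eq:reg_im_sand} directly for the output states, whereas you use $\widetilde{D}_{\alpha} \leq \newD_{\alpha}$ together with additivity of $\widetilde{D}_{\alpha}$ under tensor products of states; both steps are valid and rest on the same ingredients.
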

\begin{proof}
We apply the chain rule in Proposition~\ref{prop:chain_rule_im} to the states $\rho_{RX'}^{\otimes n}$ and $\sigma_{RX'}^{\otimes n}$ and the channels $\cN^{\otimes n}$ and $\cM^{\otimes n}$ and get
\begin{align*}
\frac{1}{n} \newD_{\alpha}((\cI_{R} \otimes \cN_{X' \to Y})(\rho_{RX'})^{\otimes n} \| (\cI_{R} \otimes \cM_{X' \to Y})(\sigma_{RX'} )^{\otimes n}) &\leq  \frac{1}{n} \newD_{\alpha}(\cN^{\otimes n} \| \cM^{\otimes n} ) + \frac{1}{n} \newD_{\alpha}( \rho_{RX'}^{\otimes n} \| \sigma_{RX'}^{\otimes n}) \ .
\end{align*}
Taking the limit as $n \to \infty$, the state divergences becomes sandwiched divergences using~\eqref{eq:reg_im_sand} and the channel divergence becomes the regularized channel sandwiched divergence using Lemma~\ref{lem:reg_channel_div}.
\end{proof}
\begin{remark}
It is unclear whether taking the limit $\alpha \to 1$ in this chain rule recovers the chain rule proved in~\cite{FFRS19}. The reason for this difficulty is that it remains open whether $\lim_{\alpha \downarrow 1} \widetilde{D}^{\reg}_{\alpha}(\cN \| \cM ) = D^{\reg}(\cN \| \cM )$.
\end{remark}

It is also possible to phrase the chain rule in terms of amortized divergences as introduced in~\cite{berta18}. For a divergence $\mathbf{D}$, the amortized divergence is defined as
\begin{align}
\label{eq:def_amor_div}
\mathbf{D}^{\amor}(\cN \| \cM) = \sup_{\rho_{RX} \in \D(RX), \sigma_{RX} \in \D(RX)} \mathbf{D}((\cI_{R} \otimes \cN)(\rho_{RX}) \| (\cI_{R} \otimes \cM)(\sigma_{RX}) ) - \mathbf{D}(\rho_{RX} \| \sigma_{RX}) \ ,
\end{align}
where the supremum \change{also runs} over all finite dimensional spaces $R$. When $\mathbf{D}$ is the sandwiched R\'enyi divergence $\widetilde{D}_{\alpha}$, note that for positive real numbers $\beta$ and $\gamma$, we have $\widetilde{D}_{\alpha}(\beta \rho \| \gamma \sigma ) = \widetilde{D}_{\alpha}(\rho \| \sigma) + \frac{\alpha}{\alpha - 1} \log \beta - \log \gamma$. As a result, for any nonzero $\rho_{RX}, \sigma_{RX} \in \Pos(RX)$, we have
\begin{align}
&\widetilde{D}_{\alpha}((\cI_{R} \otimes \cN)(\rho_{RX}) \| (\cI_{R} \otimes \cM)(\sigma_{RX}) ) - \widetilde{D}_{\alpha}(\rho_{RX} \| \sigma_{RX}) \notag \\
&= \widetilde{D}_{\alpha}\left((\cI_{R} \otimes \cN)\left(\frac{\rho_{RX}}{\tr(\rho)}\right) \| (\cI_{R} \otimes \cM)\left(\frac{\sigma_{RX}}{\tr(\sigma)}\right) \right) - \widetilde{D}_{\alpha}\left(\frac{\rho_{RX}}{\tr(\rho)} \| \frac{\sigma_{RX}}{\tr(\sigma)}\right) \label{eq:amor_unnorm} \ ,
\end{align}
which means that in~\eqref{eq:def_amor_div}, we can also take the supremum over all nonzero positive semidefinite operators.
\begin{theorem}[Amortization = regularization for sandwiched divergence]
For any completely positive maps $\cN, \cM$ and any $\alpha > 1$, we have
\begin{align}
\label{eq:amor=reg}
\widetilde{D}_{\alpha}^{\amor}(\cN \| \cM) = \widetilde{D}_{\alpha}^{\reg}(\cN \| \cM) \ .
\end{align}
\end{theorem}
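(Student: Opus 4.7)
The plan is to prove the two inequalities $\widetilde{D}_{\alpha}^{\amor}(\cN \| \cM) \leq \widetilde{D}_{\alpha}^{\reg}(\cN \| \cM)$ and $\widetilde{D}_{\alpha}^{\amor}(\cN \| \cM) \geq \widetilde{D}_{\alpha}^{\reg}(\cN \| \cM)$ separately.

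For the upper bound $\widetilde{D}_{\alpha}^{\amor} \leq \widetilde{D}_{\alpha}^{\reg}$, I would apply the chain rule of Corollary~\ref{cor:chainrulesandwiched} directly: for density operators $\rho_{RX}, \sigma_{RX}$, rearranging the chain rule gives
\[
\widetilde{D}_{\alpha}((\cI_{R} \otimes \cN)(\rho_{RX}) \| (\cI_{R} \otimes \cM)(\sigma_{RX})) - \widetilde{D}_{\alpha}(\rho_{RX} \| \sigma_{RX}) \leq \widetilde{D}_{\alpha}^{\reg}(\cN \| \cM).
\]
Taking the supremum over $\rho_{RX}, \sigma_{RX} \in \D(RX)$ and all finite dimensional $R$ gives the inequality for density operators. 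For general nonzero positive semidefinite $\rho_{RX}, \sigma_{RX}$, equation~\eqref{eq:amor_unnorm} already shows that the expression defining the amortized divergence is invariant under the rescaling $\rho_{RX}\mapsto \rho_{RX}/\tr(\rho)$ and $\sigma_{RX}\mapsto \sigma_{RX}/\tr(\sigma)$, so it suffices to consider density operators.

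For the reverse inequality $\widetilde{D}_{\alpha}^{\amor} \geq \widetilde{D}_{\alpha}^{\reg}$, I would use a standard telescoping argument. Fix $n\geq 1$ and an arbitrary input density operator $\Omega_{RX^n}$ for the channel $\cN^{\otimes n}$ (with a reference system $R$). For $k=0,1,\dots,n$ define
\[
\rho^{(k)} = \bigl((\cN^{\otimes k}) \otimes (\cI^{\otimes (n-k)})\bigr)(\Omega_{RX^n}), \qquad \sigma^{(k)} = \bigl((\cM^{\otimes k}) \otimes (\cI^{\otimes (n-k)})\bigr)(\Omega_{RX^n}).
\]
Then $\rho^{(0)} = \sigma^{(0)} = \Omega_{RX^n}$, so $\widetilde{D}_{\alpha}(\rho^{(0)}\|\sigma^{(0)}) = 0$, and by telescoping
\[
\widetilde{D}_{\alpha}(\rho^{(n)} \| \sigma^{(n)}) = \sum_{k=0}^{n-1} \bigl[\widetilde{D}_{\alpha}(\rho^{(k+1)} \| \sigma^{(k+1)}) - \widetilde{D}_{\alpha}(\rho^{(k)} \| \sigma^{(k)})\bigr].
\]
Each summand, viewing all tensor factors other than the $(k{+}1)$-st copy of $X$ as part of an enlarged reference system, is bounded above by $\widetilde{D}_{\alpha}^{\amor}(\cN \| \cM)$ by the definition~\eqref{eq:def_amor_div}. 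Hence $\widetilde{D}_{\alpha}(\rho^{(n)} \| \sigma^{(n)}) \leq n\, \widetilde{D}_{\alpha}^{\amor}(\cN \| \cM)$. Taking the supremum over $\Omega_{RX^n}$ (and using the definition of channel divergence~\eqref{eq:def_channel_div}) yields $\widetilde{D}_{\alpha}(\cN^{\otimes n} \| \cM^{\otimes n}) \leq n\,\widetilde{D}_{\alpha}^{\amor}(\cN \| \cM)$; dividing by $n$ and letting $n\to\infty$ gives the desired inequality.

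I don't expect a serious obstacle here: the harder work (the chain rule of Corollary~\ref{cor:chainrulesandwiched}) has already been done using $\newD_{\alpha}$, and the telescoping argument is completely standard. The only mild care needed is to handle the unnormalized case via~\eqref{eq:amor_unnorm} in the upper bound direction, and to verify that in the telescoping step the enlarged reference system is of finite dimension (which it is, since we are inside a fixed tensor power).
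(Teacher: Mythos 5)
Your proposal is correct and takes essentially the same route as the paper: the inequality $\widetilde{D}_{\alpha}^{\amor} \leq \widetilde{D}_{\alpha}^{\reg}$ is exactly the chain rule of Corollary~\ref{cor:chainrulesandwiched}, and the reverse inequality is the same telescoping of $\widetilde{D}_{\alpha}(\cN^{\otimes n} \| \cM^{\otimes n})$ into $n$ single-use differences, each bounded by $\widetilde{D}_{\alpha}^{\amor}(\cN \| \cM)$. The one small correction is that the rescaling observation \eqref{eq:amor_unnorm} is actually needed in the telescoping step (since $\cN,\cM$ are only completely positive, the intermediate states need not be normalized, so the supremum in \eqref{eq:def_amor_div} over density operators does not apply verbatim), whereas in the chain-rule direction restricting to density operators is already enough.
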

\begin{proof}
The inequality $\leq$ follows immediately from the chain rule in~\eqref{eq:chain_rule_sand}.

The inequality $\geq$ is actually true for any generalized divergence and was observed in previous works~\cite{berta18,Wang2019}. Note that we can equivalently write the channel divergence as 
\begin{align*}
\widetilde{D}_{\alpha}(\cN \| \cM) = \sup_{\phi_{XX'} \in \D(X \otimes X')} \widetilde{D}_{\alpha}((\cI_{X} \otimes \cN_{X' \to Y})(\phi) \| (\cI_{X} \otimes \cM_{X' \to Y})(\phi))  \ ,
\end{align*}
where as usual $X'$ and $X$ have the same dimension. Thus, denoting the $n$ copies of $X$ by $X_1, X_2, \dots, X_n$ and using the shorthand $X_{i}^{j}$ to denote $X_i X_{i+1} \dots X_j$, we have
\begin{align*}
\widetilde{D}_{\alpha}(\cN^{\otimes n} \| \cM^{\otimes n})
&= \sup_{\phi_{X_1^n {X'}^n_1} \in \D(X^{\otimes n} \otimes {X'}^{\otimes n})} \widetilde{D}_{\alpha}((\cI_{X_1^n} \otimes \cN_{X' \to Y}^{\otimes n})(\phi) \| (\cI_{X_1^n} \otimes \cM_{X' \to Y}^{\otimes n})(\phi)) \\
&= \sup_{\phi_{X_1^n {X'}^n_1}}  \sum_{i=0}^{n-1} \Bigg( \widetilde{D}_{\alpha}((\cI_{X_1^n} \otimes \cN_{X' \to Y}^{\otimes (i+1)} \otimes \cI_{{X'}_{i+2}^n})(\phi) \| (\cI_{X_1^n {X'}_{i+2}^n} \otimes \cM_{X' \to Y}^{\otimes (i+1)} \otimes \cI_{{X'}_{i+2}^n})(\phi)) \\
&\qquad \qquad - \widetilde{D}_{\alpha}((\cI_{X_1^n} \otimes \cN_{X' \to Y}^{\otimes i} \otimes \cI_{{X'}_{i+1}^n})(\phi) \| (\cI_{X_1^n {X'}_{i+1}^n} \otimes \cM_{X' \to Y}^{\otimes i} \otimes \cI_{{X'}_{i+1}^n})(\phi)) \Bigg) \\
&\leq \sum_{i=0}^{n-1} \sup_{\phi_{X_1^n {X'}^n_1}}  \Bigg( \widetilde{D}_{\alpha}((\cI_{X_1^n} \otimes \cN_{X' \to Y}^{\otimes (i+1)} \otimes \cI_{{X'}_{i+2}^n})(\phi) \| (\cI_{X_1^n {X'}_{i+2}^n} \otimes \cM_{X' \to Y}^{\otimes (i+1)} \otimes \cI_{{X'}_{i+2}^n})(\phi)) \\
&\qquad \qquad - \widetilde{D}_{\alpha}((\cI_{X_1^n} \otimes \cN_{X' \to Y}^{\otimes i} \otimes \cI_{{X'}_{i+1}^n})(\phi) \| (\cI_{X_1^n {X'}_{i+1}^n} \otimes \cM_{X' \to Y}^{\otimes i} \otimes \cI_{{X'}_{i+1}^n})(\phi)) \Bigg) \ .
\end{align*}
Note that in the $i$-th term, we subtract two expressions that differ by an application of the channels $\cN$ and $\cM$ on the system $X_{i+1}$ and so the remaining systems can be considered as the $R$ system in the definition~\eqref{eq:def_amor_div}. Using in addition the observation in~\eqref{eq:amor_unnorm} saying that $\rho$ and $\sigma$ need not be normalized, we get that each term is bounded by $\widetilde{D}_{\alpha}^{\amor}(\cN \| \cM)$. Thus,
\begin{align*}
\widetilde{D}_{\alpha}(\cN^{\otimes n} \| \cM^{\otimes n})
&\leq n \widetilde{D}_{\alpha}^{\amor}(\cN \| \cM) \ ,
\end{align*}
which gives the desired result.
\end{proof}

The concept of amortization is particularly useful when analyzing information processing tasks that have an adaptive aspect. We discuss some examples below.

\subsubsection{Channel discrimination}
\label{sec:channel_discrimination}

We discuss the example of channel discrimination, referring to~\cite{Cooney2016,berta18} for a more detailed and precise presentation of the problem and the relevant references on the topic. Imagine we would like to distinguish between two quantum channels $\cN$ and $\cM$ having black box access to $n$ uses of one of them. The task of adaptive channel discrimination is to decide which channel we are dealing with. The word adaptive here refers to the fact that our use of one of the black boxes can depend on the outcomes of a previously used black box. By contrast, a strategy is called parallel (or nonadaptive) if the $n$ black boxes are used in parallel on a fixed input state. As is common in hypothesis testing, we call the type I error the probability $\alpha_n$ that the channel is actually $\cN$ but our procedure says $\cM$ and the type II error $\beta_n$ is the other kind of error and the goal is to determine the tradeoff between these two errors. Multiple regimes can be considered, the most studied is the asymmetric or Stein setting where we set $\alpha_n \leq \eps$ for some $\eps \in (0,1)$ and consider the asymptotic behavior of the optimal type II error $-\frac{1}{n} \log \beta_n$. The works~\cite{berta18,Wang2019,FFRS19} establish that if we take $\eps \to 0$ this is given by the regularized Umegaki relative entropy $D^{\reg}(\cN \| \cM)$\footnote{We recall that the Umegaki quantum relative entropy (also called simply quantum relative entropy) is defined as $D(\rho \| \sigma) = \tr(\rho (\log \rho - \log \sigma))$~\cite{umegaki1962conditional,HP91}}. Our focus here is on the strong converse regime, i.e., we require $\beta_n \leq 2^{-rn}$ with $r > D^{\reg}(\cN \| \cM)$ and we consider the behavior of $\alpha_n$. As far as we are aware, it is not known whether in this case we always have $\alpha_n \to 1$ (this would be a strong converse property). However, we can always consider the following quantity which measures how quickly $\alpha_n$ goes to $1$ when it does so:
\begin{align*}
\overline{H}(r, \cN, \cM) := \inf_{\text{adaptive strategies}} \Big\{ \limsup_{n \to \infty}  -\frac{1}{n} \log(1 - \alpha_n) : \change{\limsup_{n \to \infty} \frac{1}{n} \log \beta_n \leq -r} \Big\} \ .
\end{align*}
Note that if $\alpha_n$ does not converge to $1$ exponentially fast, then this quantity is $0$.

A lower bound is given in~\cite[Proposition 20]{berta18} for this quantity:
\begin{align}
\label{eq:error_exponent_strong_converse_amor}
\overline{H}(r, \cN, \cM)
&\geq \sup_{\alpha > 1} \frac{\alpha - 1}{\alpha} \left( r -  \widetilde{D}_{\alpha}^{\amor}(\cN \| \cM) \right) \ .
\end{align}
\change{
We note that~\cite[Proposition 20]{berta18} actually shows something stronger: for any $n$, any adaptive strategy and any $\alpha > 1$, we have
\begin{align*}
-\frac{1}{n} \log(1 - \alpha_n) \geq \frac{\alpha - 1}{\alpha}\left(-\frac{1}{n} \log \beta_n - \widetilde{D}_{\alpha}^{\amor}(\cN \| \cM)\right) \ . 
\end{align*}
Thus, for any family of strategies satisfying $\limsup_{n \to \infty} \frac{1}{n} \log \beta_n \leq -r$, we have
\begin{align*}
\limsup_{n \to \infty} -\frac{1}{n} \log(1 - \alpha_n) &
\geq \liminf_{n \to \infty} \frac{\alpha - 1}{\alpha}\left(-\frac{1}{n} \log \beta_n - \widetilde{D}_{\alpha}^{\amor}(\cN \| \cM)\right) \\
&=  \frac{\alpha - 1}{\alpha}\left(- \limsup_{n \to \infty}\frac{1}{n} \log \beta_n - \widetilde{D}_{\alpha}^{\amor}(\cN \| \cM)\right) \\
&\geq \frac{\alpha - 1}{\alpha}\left(r - \widetilde{D}_{\alpha}^{\amor}(\cN \| \cM)\right) \ ,
\end{align*}
which establishes~\eqref{eq:error_exponent_strong_converse_amor}.}

Using equality~\eqref{eq:amor=reg} together with the explicit convergence bounds in Theorem~\ref{thm:comp_reg_sand} as well as the strong converse exponent established for states~\cite{mosonyi2015quantum}, we show that this bound is in fact tight. This generalizes the result of~\cite{Cooney2016} who considered the case where $\cM$ is a replacer channel, i.e., $\cM(W) = \tr(W) \sigma$ for some state $\sigma$.
\begin{theorem}
For any completely positive and trace-preserving maps $\cN, \cM$ and any $r > 0$, we have
\begin{align}
\overline{H}(r, \cN, \cM)
&= \sup_{\alpha > 1} \frac{\alpha - 1}{\alpha} \left( r -  \widetilde{D}_{\alpha}^{\reg}(\cN \| \cM) \right) \ .
\end{align}
In addition, the achievability uses a nonadaptive strategy and this shows that adaptive strategies do not offer an advantage in this setting.
\end{theorem}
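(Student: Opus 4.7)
The lower bound $\overline{H}(r,\cN,\cM) \geq \sup_{\alpha>1}\frac{\alpha-1}{\alpha}(r - \widetilde{D}_\alpha^{\reg}(\cN\|\cM))$ is immediate from the inequality~\eqref{eq:error_exponent_strong_converse_amor} combined with the equality $\widetilde{D}_\alpha^{\amor}(\cN\|\cM) = \widetilde{D}_\alpha^{\reg}(\cN\|\cM)$ just established in~\eqref{eq:amor=reg}.

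For the matching upper bound I will exhibit a nonadaptive strategy. Fix a finite grid $1 < \alpha_1 < \cdots < \alpha_N$ in $(1, \infty)$ and, for each $i$, let $\phi_m^{(i)} \in \D(X^{\otimes m}\otimes X'^{\otimes m})$ be a nearly optimal input achieving the supremum in the definition~\eqref{eq:def_channel_div} of $\widetilde{D}_{\alpha_i}(\cN^{\otimes m}\|\cM^{\otimes m})$. Introducing an $N$-dimensional classical register $A$, I form the time-shared input $\phi_m := \frac{1}{N}\sum_{i=1}^N \proj{i}_A \otimes \phi_m^{(i)}$ and consider the strategy that prepares $\phi_m^{\otimes k}$ and feeds it to $n = mk$ uses of the channel (as $k$ parallel blocks of $m$ uses), producing either $\rho_m^{\otimes k}$ or $\sigma_m^{\otimes k}$ with $\rho_m := (\cI\otimes\cN^{\otimes m})(\phi_m)$ and $\sigma_m := (\cI\otimes\cM^{\otimes m})(\phi_m)$. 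On these outputs one applies the optimal Mosonyi--Ogawa state hypothesis test~\cite{mosonyi2015quantum} at rate $rm$, yielding
\[
\overline{H}(r,\cN,\cM) \;\leq\; \sup_{\alpha>1}\frac{\alpha-1}{\alpha}\Big( r - \tfrac{1}{m}\widetilde{D}_\alpha(\rho_m \| \sigma_m) \Big).
\]
The block-diagonal structure of $\rho_m, \sigma_m$ in the $A$-register, together with the easily verified direct-sum identity $\widetilde{Q}_\alpha(\rho_m\|\sigma_m) = \tfrac{1}{N}\sum_i \widetilde{Q}_\alpha(\rho_m^{(i)}\|\sigma_m^{(i)})$, gives for every grid point $\alpha_i$
\[
\tfrac{1}{m}\widetilde{D}_{\alpha_i}(\rho_m \| \sigma_m) \;\geq\; \tfrac{1}{m}\widetilde{D}_{\alpha_i}(\cN^{\otimes m}\|\cM^{\otimes m}) - \tfrac{\log N}{m(\alpha_i-1)} \;\xrightarrow{\,m\to\infty\,}\; \widetilde{D}_{\alpha_i}^{\reg}(\cN\|\cM),
\]
where the convergence uses Theorem~\ref{thm:comp_reg_sand} for quantitative control.

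To handle non-grid $\alpha$, monotonicity $\widetilde{D}_\alpha(\rho\|\sigma) \geq \widetilde{D}_{\alpha_j}(\rho\|\sigma)$ for $\alpha \geq \alpha_j$ lets one interpolate between grid values. Sending $m \to \infty$ with $N$ fixed and then refining $N \to \infty$, while using continuity of $\alpha \mapsto \widetilde{D}_\alpha^{\reg}(\cN\|\cM)$ to control the interpolation error, yields $\overline{H}(r,\cN,\cM) \leq \sup_{\alpha>1}\tfrac{\alpha-1}{\alpha}(r - \widetilde{D}_\alpha^{\reg}(\cN\|\cM))$. Since the strategy is nonadaptive, this also proves that adaptive strategies offer no advantage. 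The main obstacle is this final limiting step, which requires (i) restricting the supremum to a compact subinterval of $(1,\infty)$, justified by the asymptotic vanishing $\tfrac{\alpha-1}{\alpha} \downarrow 0$ as $\alpha \downarrow 1$ together with boundedness $\widetilde{D}_\alpha \to D_{\max}$ as $\alpha \to \infty$, and (ii) controlling the modulus of continuity of the monotone map $\alpha \mapsto \widetilde{D}_\alpha^{\reg}$ — a function with at most countably many jump discontinuities, which can be avoided when choosing the grid.
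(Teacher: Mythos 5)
Your lower bound is the paper's (combine \eqref{eq:error_exponent_strong_converse_amor} with \eqref{eq:amor=reg}), and for achievability you take a genuinely different route: the paper chooses, for each block size $m$, a \emph{single} input $\omega_m$ that is near-optimal simultaneously for all $\alpha$, the simultaneity coming from a Sion minimax exchange of $\inf_\omega$ and $\sup_\alpha$ (concavity/continuity in $u=\frac{\alpha-1}{\alpha}$ from Mosonyi--Ogawa, convexity/continuity in $\omega$ from Lemma~\ref{lem:concave_omega}), after which the error term of Theorem~\ref{thm:comp_reg_sand} becomes uniform in $\alpha$ because the factor $\frac{\alpha}{\alpha-1}$ is cancelled by the prefactor $\frac{\alpha-1}{\alpha}$. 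Your replacement of this minimax step by time-sharing over a finite grid of $\alpha$'s with a classical flag register, at a cost $\frac{\log N}{\alpha_i-1}$ per copy of the block, is a legitimate alternative; the direct-sum identity and the application of the Mosonyi--Ogawa test at rate $rm$ are correct, up to the bookkeeping the paper does explicitly (and you omit) to pass from $n$ of the form $mk$ to arbitrary $n$.

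The genuine gap is in your final interpolation step. You invoke continuity of $\alpha \mapsto \widetilde{D}_{\alpha}^{\reg}(\cN\|\cM)$, which the paper never establishes (Lemma~\ref{lem:lower_semicont} concerns continuity in the states, not in $\alpha$, and even the behaviour at $\alpha\to1$ is left open), and your fallback --- that the countably many jumps of this monotone function ``can be avoided when choosing the grid'' --- does not work: if $\widetilde{D}^{\reg}_{\alpha}$ jumps by $\delta_0$ at some $\alpha^*$, then every grid has consecutive points $\alpha_j<\alpha^*\le\alpha_{j+1}$ with $\widetilde{D}^{\reg}_{\alpha_{j+1}}-\widetilde{D}^{\reg}_{\alpha_j}\ge\delta_0$, so a continuity-based interpolation error does not vanish under refinement, and as written the upper bound is only proved up to the (unknown) maximal jump of $\widetilde{D}^{\reg}$. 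The step can be repaired without any continuity: for $\alpha\in[\alpha_j,\alpha_{j+1}]$ use $\widetilde{D}_{\alpha}(\rho_m\|\sigma_m)\ge \widetilde{D}_{\alpha_j}(\rho_m\|\sigma_m)$ and compare the resulting bound $\frac{\alpha-1}{\alpha}\bigl(r-\widetilde{D}^{\reg}_{\alpha_j}+\mathrm{err}_m\bigr)$ with the candidate $\beta=\alpha_j$ in the target supremum, so that only the prefactor changes; the loss is at most $\bigl(\frac{\alpha_{j+1}-1}{\alpha_{j+1}}-\frac{\alpha_j-1}{\alpha_j}\bigr)(r+\mathrm{err}_m)$ when $r\ge \widetilde{D}^{\reg}_{\alpha_j}-\mathrm{err}_m$ and zero otherwise, i.e.\ it is controlled by the grid spacing in the variable $u=\frac{\alpha-1}{\alpha}$ alone; the tail $\alpha>\alpha_N$ costs at most $(1-u_N)(r+\mathrm{err}_m)$ and the tail $\alpha<\alpha_1$ at most $u_1 r$, using that the target supremum is nonnegative once one reduces (as the paper does) to $J^{\cN}\ll J^{\cM}$ so that $\widetilde{D}^{\reg}_{\alpha}\le D_{\max}(\cN\|\cM)<\infty$. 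With that modification your argument closes and gives a proof that bypasses the paper's minimax lemma.
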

\begin{remark}[Continuity of $\widetilde{D}_{\alpha}^{\reg}$ when $\alpha \to 1$]
Note that this result implies that $\overline{H}(r, \cN, \cM) = 0$ if $r \leq \inf_{\alpha > 1} \widetilde{D}_{\alpha}^{\reg}(\cN \| \cM)$ and $\overline{H}(r, \cN, \cM) > 0$ if $r > \inf_{\alpha > 1} \widetilde{D}^{\reg}_{\alpha}(\cN \| \cM)$. As the behaviour of $\widetilde{D}_{\alpha}^{\reg}$ as $\alpha \to 1$ remains unclear, we cannot rule out that $D^{\reg}(\cN \| \cM) < \inf_{\alpha > 1} \widetilde{D}_{\alpha}^{\reg}(\cN \| \cM)$ for some channels, and so it remains open whether a strong converse property holds in general.
\end{remark}
\begin{proof}
As usual, we will assume $J^{\cN} \ll J^{\cM}$, as otherwise, $D^{\reg}(\cN \| \cM) = \infty$ and the statement is void. The lower bound $\geq$ follows immediately from~\eqref{eq:error_exponent_strong_converse_amor} and equality~\eqref{eq:amor=reg}.

For the upper bound, the idea is to use the characterization of~\cite{mosonyi2015quantum} for the strong converse exponent for state discrimination. They show that for any states $\rho$ and $\sigma$ and $r > 0$, there is a family of strategies to distinguish between $\rho^{\otimes n}$ and $\sigma^{\otimes n}$ with type II error probability $\beta_n(\rho, \sigma) \leq 2^{-rn}$ and achieving a type I error probability $\alpha_n(\rho, \sigma)$ satisfying
\begin{align*}
\limsup_{n \to \infty} -\frac{1}{n}\log(1-\alpha_n(\rho, \sigma)) \leq \sup_{\alpha > 1} \frac{\alpha - 1}{\alpha} \left( r -  \widetilde{D}_{\alpha}(\rho \| \sigma) \right) \ .
\end{align*}

Let $\eps > 0$ and choose an integer $m$ so that $\frac{1}{m} (d^2+d) \log (m+d) < \eps$ where $d = \dim X \dim Y$. For any state $\omega \in \D(X^{\otimes m})$ we can apply this result to the states $\omega^{\frac{1}{2}} J^{\cN^{\otimes m}} \omega^{\frac{1}{2}}$ and $\omega^{\frac{1}{2}} J^{\cM^{\otimes m}} \omega^{\frac{1}{2}}$ and get a sequence of strategies achieving $\limsup_{n \to \infty} \frac{1}{n} \log \beta_{n}(\omega^{\frac{1}{2}} J^{\cN^{\otimes m}} \omega^{\frac{1}{2}}, \omega^{\frac{1}{2}} J^{\cM^{\otimes m}} \omega^{\frac{1}{2}}) \leq -rm$ and
\begin{align}
\label{eq:strategy_states_blocks}
\limsup_{n \to \infty} -\frac{1}{n}\log(1-\alpha_n(\omega^{\frac{1}{2}} J^{\cN^{\otimes m}} \omega^{\frac{1}{2}}, \omega^{\frac{1}{2}} J^{\cM^{\otimes m}} \omega^{\frac{1}{2}})) \leq \sup_{\alpha > 1} \frac{\alpha - 1}{\alpha} \left( rm -  \widetilde{D}_{\alpha}(\omega^{\frac{1}{2}} J^{\cN^{\otimes m}} \omega^{\frac{1}{2}} \| \omega^{\frac{1}{2}} J^{\cM^{\otimes m}} \omega^{\frac{1}{2}}) \right) \ .
\end{align}
We choose $\omega_m$ to achieve up to $\eps$ the infimum over $\omega \in \D(X^{\otimes m})$ with $\omega > 0$ of the right hand side and for this $\omega_m$, we have a strategy achieving 
\begin{align*}
\limsup_{n \to \infty} \frac{1}{n} \log \beta_{n}(\omega_m^{\frac{1}{2}} J^{\cN^{\otimes m}} \omega_m^{\frac{1}{2}}, \omega_m^{\frac{1}{2}} J^{\cM^{\otimes m}} \omega_m^{\frac{1}{2}}) \leq 2^{-rm \cdot n}
\end{align*}
and
\begin{align}
\notag
&\limsup_{n \to \infty} -\frac{1}{n}\log(1-\alpha_n(\omega_m^{\frac{1}{2}} J^{\cN^{\otimes m}} \omega_m^{\frac{1}{2}}, \omega_m^{\frac{1}{2}} J^{\cM^{\otimes m}} \omega_m^{\frac{1}{2}})) \\
&\leq \inf_{\substack{\omega \in \D(X^{\otimes m}) \\ \omega > 0}} \sup_{\alpha > 1} \frac{\alpha - 1}{\alpha} \left( rm -  \widetilde{D}_{\alpha}(\omega^{\frac{1}{2}} J^{\cN^{\otimes m}} \omega^{\frac{1}{2}} \| \omega^{\frac{1}{2}} J^{\cM^{\otimes m}} \omega^{\frac{1}{2}}) \right) + \eps \ .
\label{eq:strategy_states_blocks_inf_omega}
\end{align}

Now we observe that for any channels $\cA$ and $\cB$ such that $J^{\cA} \ll J^{\cB}$, we can perform the change of variable $u = \frac{\alpha - 1}{\alpha}$ and get
\begin{align}
\label{eq:inf_omega_sup_alpha}
\inf_{\substack{\omega \in \D(X) \\ \omega > 0}} \sup_{\alpha > 1} \frac{\alpha - 1}{\alpha} \left( r -  \widetilde{D}_{\alpha}( \omega^{\frac{1}{2}} J^{\cA} \omega^{\frac{1}{2}} \| \omega^{\frac{1}{2}} J^{\cB} \omega^{\frac{1}{2}}) \right)
&= \inf_{\substack{\omega \in \D(X) \\ \omega > 0}} \sup_{u \in (0,1)} f(\omega, u) \ ,
\end{align}
where we defined the function $f : \D(X) \times (0,1) \to \RR$ by $f(\omega, u) = u r -  u \widetilde{D}_{\frac{1}{1-u}}( \omega^{\frac{1}{2}} J^{\cA} \omega^{\frac{1}{2}} \| \omega^{\frac{1}{2}} J^{\cB} \omega^{\frac{1}{2}})$. We extend the function $f$ to $f(\omega,0) = 0$ and $f(\omega,1) = r - D_{\max}( \omega^{\frac{1}{2}} J^{\cA} \omega^{\frac{1}{2}} \| \omega^{\frac{1}{2}} J^{\cB} \omega^{\frac{1}{2}})$. Note that as we assumed $J^{\cA} \ll J^{\cB}$, we have $D_{\max}( \omega^{\frac{1}{2}} J^{\cA} \omega^{\frac{1}{2}} \| \omega^{\frac{1}{2}} J^{\cB} \omega^{\frac{1}{2}}) < \infty$ and we even have for any $\omega > 0$, $D_{\max}( \omega^{\frac{1}{2}} J^{\cA} \omega^{\frac{1}{2}} \| \omega^{\frac{1}{2}} J^{\cB} \omega^{\frac{1}{2}}) = D_{\max}(\cA \| \cB)$ is independent of $\omega$. As we will see shortly, for any $\omega > 0$, the function $u \mapsto f(\omega, u)$ is thus continuous on $[0,1]$. As such we have
\begin{align}
\label{eq:inf_max}
\inf_{\substack{\omega \in \D(X) \\ \omega > 0}} \sup_{u \in (0,1)} f(\omega, u) 
&= 
\inf_{\substack{\omega \in \D(X) \\ \omega > 0}} \max_{u \in [0,1]} f(\omega, u) \ .
\end{align}
We are now ready to apply Sion's minimax theorem. To do this, we check the following conditions:
\begin{itemize}
\item For any $\omega \in \D(X)$ with $\omega > 0$, the function $u \mapsto f(\omega, u)$ is concave and continuous on the compact interval $[0,1]$. This follows from~\cite[Remark IV.13 or the discussion preceding Lemma IV.9]{mosonyi2015quantum} which shows that $u \mapsto u \widetilde{D}_{\frac{1}{1-u}}( \omega^{\frac{1}{2}} J^{\cA} \omega^{\frac{1}{2}} \| \omega^{\frac{1}{2}} J^{\cB} \omega^{\frac{1}{2}})$ is convex and continuous on $[0,1)$. It is also clear that extending it continuously to $u = 1$ preserves the two properties.
\item For any $u \in [0,1]$, the function $\omega \mapsto f(\omega, u)$ is convex and continuous on the convex set $\{\omega \in \cD(X) : \omega > 0\}$. For $u \in \{0,1\}$, this is trivial as the function is constant. For $u \in (0,1)$, this follows immediately from Lemma~\ref{lem:concave_omega}.
\end{itemize}
Applying Sion's minimax theorem, we can exchange the $\inf$ and $\max$ in~\eqref{eq:inf_max} and get
\begin{align*}
\inf_{\substack{\omega \in \D(X) \\ \omega > 0}} \sup_{\alpha > 1} \frac{\alpha - 1}{\alpha} \left( r -  \widetilde{D}_{\alpha}( \omega^{\frac{1}{2}} J^{\cA} \omega^{\frac{1}{2}} \| \omega^{\frac{1}{2}} J^{\cB} \omega^{\frac{1}{2}}) \right)
&= \max_{u \in [0,1]} \inf_{\substack{\omega \in \D(X) \\ \omega > 0}}  u r -  u \widetilde{D}_{\frac{1}{1-u}}( \omega^{\frac{1}{2}} J^{\cA} \omega^{\frac{1}{2}} \| \omega^{\frac{1}{2}} J^{\cB} \omega^{\frac{1}{2}}) \\
&= \max \Big\{\sup_{u \in (0,1)}  u r -  u \widetilde{D}_{\frac{1}{1-u}}( \cA \| \cB ), 0, r - D_{\max}(\cA \| \cB) \Big\} \ .
\end{align*}
Note that for the second equality we used equality~\eqref{eq:sup_channel_div_strict} saying that we can drop the $\omega > 0$ condition in the infimum. Now as $\widetilde{D}_{\alpha} \leq D_{\max}$, we have $\sup_{u \in (0,1)}  u r -  u \widetilde{D}_{\frac{1}{1-u}}( \cA \| \cB ) \geq \sup_{u \in (0,1)} u (r - D_{\max}(\cA \| \cB)) \geq \max\{ 0, r - D_{\max}(\cA \| \cB) \}$ so we can drop the terms $0$ and $r - D_{\max}(\cA \| \cB)$ from the maximization.
Thus, \eqref{eq:strategy_states_blocks_inf_omega} becomes 
\begin{align}
\notag
\limsup_{n \to \infty} -\frac{1}{n}\log(1-\alpha_n(\omega_m^{\frac{1}{2}} J^{\cN^{\otimes m}} \omega_m^{\frac{1}{2}}, \omega_m^{\frac{1}{2}} J^{\cM^{\otimes m}} \omega_m^{\frac{1}{2}})) &\leq \sup_{\alpha > 1} \frac{\alpha - 1}{\alpha} \left( rm -  \widetilde{D}_{\alpha}(\cN^{\otimes m} \| \cM^{\otimes m}) \right) + \eps \ .
\end{align}
Using the finite convergence bounds in~\eqref{eq:convergence_speed_sand_channel} for $\widetilde{D}^{\reg}_{\alpha}$ , we get
\begin{align*}
\frac{\alpha-1}{\alpha} \widetilde{D}^{\reg}_{\alpha}(\cN \| \cM) 
&\leq \frac{\alpha-1}{\alpha} \frac{1}{m} \widetilde{D}_{\alpha}(\cN^{\otimes m} \| \cM^{\otimes m}) +  \frac{1}{m} (d^2+d) \log (m+d) \\
&\leq \frac{\alpha-1}{\alpha} \frac{1}{m} \widetilde{D}_{\alpha}(\cN^{\otimes m} \| \cM^{\otimes m}) +  \eps
\end{align*}
recalling our choice of $m$. Thus,
\begin{align}
\label{eq:approx_reg_m_unif}
 \sup_{\alpha > 1} \frac{\alpha - 1}{\alpha} \left( r -  \frac{1}{m} \widetilde{D}_{\alpha}( \cN^{\otimes m} \| \cM^{\otimes m}) \right) \leq \sup_{\alpha > 1} \frac{\alpha - 1}{\alpha} \left( r -  \widetilde{D}^{\reg}_{\alpha}( \cN \| \cM) \right) + \eps  \ .
\end{align}
As a result, we have
\begin{align}
\notag
\frac{1}{m} \limsup_{n \to \infty} -\frac{1}{n}\log(1-\alpha_n(\omega_m^{\frac{1}{2}} J^{\cN^{\otimes m}} \omega_m^{\frac{1}{2}}, \omega_m^{\frac{1}{2}} J^{\cM^{\otimes m}} \omega_m^{\frac{1}{2}})) &\leq \sup_{\alpha > 1} \frac{\alpha - 1}{\alpha} \left( r -  \widetilde{D}^{\reg}_{\alpha}( \cN \| \cM) \right) + 2 \eps \ .
\end{align}
In other words, we have constructed a sequence of strategies for distinguishing between $\cN^{\otimes mn}$ and $\cM^{\otimes mn}$ for $n \geq 1$ with a type II error $\beta_{nm}$ and a type I error $\alpha_{nm}$ satisfying 
\begin{align*}
\frac{1}{m} \limsup_{n \to \infty} \frac{1}{n}\log \beta_{nm} 
&\leq - r \ .
\end{align*}
and
\begin{align}
\label{eq:alpha_err_subseq}
\frac{1}{m} \limsup_{n \to \infty} -\frac{1}{n}\log(1-\alpha_{nm})
&\leq \sup_{\alpha > 1} \frac{\alpha - 1}{\alpha} \left( r -  \widetilde{D}^{\reg}_{\alpha}( \cN \| \cM) \right) + 2 \eps \ .
\end{align}
\change{
To conclude, we define a strategy for distinguishing between $\cN^{\otimes k}$ and $\cM^{\otimes k}$ for $k$ that is not necessarily of the form $mn$ for some $n$. For that, we write $k = mq + p$ with $0 \leq p < m$ and we only use $mq$ copies of the channel and apply the above argument. We thus obtain exactly the same type I and type II errors as for $mq$ copies, i.e., $\alpha_{k} = \alpha_{mq}$ and $\beta_{k} = \beta_{mq}$. With this notation, we have
\begin{align*}
\frac{1}{k} \log \beta_k = \frac{1}{mq+p} \log \beta_{mq} \leq \frac{1}{m(q+1)} \log \beta_{mq} = \frac{q}{q+1} \frac{1}{m q} \log \beta_{mq} \ .
\end{align*}
As $k \to \infty$, we have $q \to \infty$ so
\begin{align*}
\limsup_{k \to \infty} \frac{1}{k} \log \beta_k \leq \limsup_{q \to \infty} \frac{1}{m q} \log \beta_{mq} \leq -r \ .
\end{align*}
In addition, using the same notation, the type I error satisfies
\begin{align*}
- \frac{1}{k} \log (1-\alpha_k) \leq - \frac{1}{mq} \log (1-\alpha_{mq}) \ .
\end{align*}
As a result, \eqref{eq:alpha_err_subseq} implies that
\begin{align*}
 \limsup_{k \to \infty} -\frac{1}{k}\log(1-\alpha_{k})
&\leq \sup_{\alpha > 1} \frac{\alpha - 1}{\alpha} \left( r -  \widetilde{D}^{\reg}_{\alpha}( \cN \| \cM) \right) + 2 \eps \ .
\end{align*}
}
As this is valid for any $\eps > 0$, we obtain the claimed result. 
\end{proof}

\subsubsection{Bounds on amortized entanglement measures and applications}
\label{sec:amortized_entanglement}

Another task that has an adaptive nature is the task of quantum communication using free two-way classical communication.
In order to analyze such tasks, one usually considers an entanglement measure and tracks its value during the rounds of the protocol. Here, we will focus on measures of the following form: for $\alpha \in [1, \infty]$ and some convex subset $\cC(X:Y) \subseteq \Pos(X  Y)$, we can define for a bipartite state $\rho_{XY}$
\begin{align*}
E_{\alpha, \cC}(X : Y)_{\rho} = \inf_{\sigma \in \cC(X:Y)} \widetilde{D}_{\alpha}(\rho_{XY} \| \sigma_{XY}) \ .
\end{align*}
When it is clear from the context $\alpha$ and $\cC$ will be dropped from the notation. 
Note that this quantity is \change{quasiconvex} in $\rho_{XY}$. In fact using the joint quasiconvexity of $\widetilde{D}_{\alpha}$, we have for $\lambda \in [0,1]$, $\rho^0, \rho^1 \in \D(XY)$ and $\sigma^0, \sigma^1 \in \cC(X : Y)$
\begin{align*}
\widetilde{D}_{\alpha}(\lambda \rho^{0} + (1-\lambda) \rho^{1} \| \lambda \sigma^0 + (1-\lambda) \sigma^1) \leq \max\{ \widetilde{D}_{\alpha}(\rho^0 \| \sigma^0), \widetilde{D}_{\alpha}(\rho^1 \| \sigma^1)\} \ .
\end{align*}
Taking the infimum over $\sigma^0$ and $\sigma^1$, we get the quasiconvexity of $E(X : Y)_{\rho}$ in $\rho$. 
To make this a useful correlation measure, we will assume that $\cC(X:Y)$ contains all the product states $\phi_{X} \otimes \psi_{Y}$, and as we assumed convexity of $\cC(X : Y)$, it also contains the set of all separable states. 
Many studied quantum correlation measures are special cases:
\begin{itemize}
\item For the relative entropy of entanglement, $\cC$ is the set of separable states and $\alpha = 1$, but the full range $\alpha \in [1, \infty]$ has also been used, in particular for the study of adaptive protocols~\cite{PLOB17,wilde2017converse,christandl2017relative}.
\item For the Rains bound, $\cC = \mathrm{PPT'} := \{\sigma_{XY}  \in \Pos(X Y) : \| \sigma_{XY}^{\top_{Y}} \|_{1} \leq 1\}$ and $\alpha = 1$~\cite{rains2001semidefinite}, and the version with $\alpha = \infty$ has also been studied in~\cite{wang2016improved,Berta2017}. The notation $\top_{Y}$ denotes the partial transpose on system $Y$ with respect to some fixed basis.
\end{itemize}
One can then naturally define the entanglement of a quantum channel $\cN_{X \to Y}$ as
\begin{align*}
E(\cN) = \sup_{\rho \in \D(X'  X)} E(X':Y)_{(\cI_{X'} \otimes \cN_{X \to Y})(\rho_{X'X})} \ ,
\end{align*}
where the supremum \change{also} runs over arbitrary finite dimensional systems $X'$. Note that using the quasiconvexity of $E$ in the state, we may restrict $\rho_{X'X}$ to be pure. Thus, whenever the set $\cC$ is invariant under local isometries (which will be the case here), it suffices to take $X'$ to have the same dimension as $X$. The amortized version is then defined as
\begin{align*}
E^{\amor}(\cN) = \sup_{\rho_{X'XY'} \in \D(X'XY)} E(X':YY')_{(\cI_{X'Y'} \otimes \cN_{X \to Y})(\rho_{X'XY'})} - E(X'X:Y')_{\rho_{X'XY'}} \ ,
\end{align*}
where the supremum runs over arbitrary finite dimensional systems $X' Y'$. Note that if $Y'$ is trivial, we recover $E(\cN)$ but in general it is not clear how to bound the dimensions of the systems $X'$ and $Y'$. Amortized quantities allow one to place upper bounds on the rates of protocols allowing two-way communication, as shown for example~\cite{bennett2003capacities} in the context of bidirectional channel capacities and in~\cite{kaur2017amortized,Berta2017} in the context of quantum/private communication with free two-way classical communication.
For completeness, we illustrate this methodology in the following simple lemma that bounds the quantum correlations that can be obtained by a process of the form given in Figure~\ref{fig:quantum_comm_locc}.  For convenience of notation, we will be using the trivial $1$-dimensional system $Y_0$.
\begin{figure}[ht]
    \begin{center}
        \begin{tikzpicture}[thick]
        \tikzstyle{bipartitechannel}=[minimum width=1cm,minimum height=3cm,rectangle,draw,fill=blue!10]
        \tikzstyle{channel}=[minimum width=1cm,minimum height=1cm,rectangle,draw,fill=blue!30]        
        \tikzstyle{syslabel}=[font=\scriptsize]                    
        \draw
            (0, 0) node[bipartitechannel] (F0) {$\mathcal{F}_0$}
            ++(2, 0) node[channel] (N1) {$\mathcal{N}$}
            ++(2, 0) node[bipartitechannel] (F1) {$\mathcal{F}_1$}
            ++(2, 0) node[channel] (N2) {$\mathcal{N}$}
            ++(2, 0) node[bipartitechannel] (F2) {$\mathcal{F}_2$}
            ++(2,0) node (dotdotdot) {$\cdots$}
            ++(2, 0) node[bipartitechannel] (Fn) {$\mathcal{F}_n$}            
            ;
        
        \draw (F0.south west)   node[below left] {$\rho^{(0)}$} ;
        
        \draw (F0.south east)   node[below right] {$\rho^{(1)}$} ;
        \draw (F1.south east)   node[below right] {$\rho^{(2)}$} ;
        \draw (Fn.south east)   node[below right] {$\rho^{(n+1)}$} ;
            
	\draw[->] (F0.120) ++(-1,0) -- (F0.120) node[syslabel,midway, above]{$X'_0$};
	\draw[->] (F0.240) ++(-1,0) -- (F0.240) node[syslabel,midway, below]{$Y'_0$};
	\draw[->] (F0.60)  -- (F1.120) node[syslabel,midway, above]{$X'_1$};
	\draw[->] (F0.55) to[out=0,in=180] node[syslabel,pos=0.15,below]{$X_1$}  (N1.west) ;	
	\draw[->] (N1.east) to[out=0,in=180]  node[syslabel,pos=0.85,above]{$Y_1$} (F1.235);	
	\draw[->] (F0.300)  -- (F1.240) node[syslabel,midway, below]{$Y'_1$};

	\draw[->] (F1.60)  -- (F2.120) node[syslabel,midway, above]{$X'_2$};
	\draw[->] (F1.55) to[out=0,in=180] node[syslabel,pos=0.15,below]{$X_2$}  (N2.west) ;	
	\draw[->] (N2.east) to[out=0,in=180]  node[syslabel,pos=0.85,above]{$Y_2$} (F2.235);	
	\draw[->] (F1.300)  -- (F2.240) node[syslabel,midway, below]{$Y'_2$};	
	
	\draw[->] (Fn.60)  -- ++(1,0) node[syslabel,midway, above]{$X'_{n+1}$};
	\draw[->] (Fn.300) -- ++(1,0) node[syslabel,midway, below]{$Y'_{n+1}$};	
      \end{tikzpicture}
\end{center}
    \caption{The state $\rho^{(n+1)}_{X'_{n+1} Y'_{n+1}}$ is generated by a sequence of quantum channels as indicated in the Figure. The channels $\cF_{i}$ should be considered as free operations (e.g., modeling two-way classical communication) between Alice (top) and Bob (bottom) and $\cN$ is a quantum channel going from Alice to Bob.
    }
     \label{fig:quantum_comm_locc}
\end{figure}

\begin{lemma}
\label{lem:entanglement_tracking}
Let $\rho^{(0)}_{X'_0 Y'_0}$ be a quantum state in $\cC(X'_0 : Y'_0)$  and assume that the quantum channels $\cF_{i}$ map elements in $\cC(X'_i : Y_i Y'_i)$ to elements in $\cC(X'_{i+1} X_{i+1} : Y'_{i+1})$. Then the state $\rho^{(n+1)}_{X'_{n+1} Y'_{n+1}}$ generated as in Figure~\ref{fig:quantum_comm_locc} satisfies
\begin{align*}
E(X'_{n+1} : Y'_{n+1})_{\rho^{(n+1)}} \leq n E^{\amor}(\cN) \ .
\end{align*}
\end{lemma}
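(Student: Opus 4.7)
The plan is to prove the lemma by inductively tracking the quantity $E$ along the rounds of the protocol depicted in Figure~\ref{fig:quantum_comm_locc}. There are two types of transitions: applications of the ``free'' channels $\cF_i$, under which $E$ will not increase, and applications of the channel $\cN$, under each of which $E$ can grow by at most $E^{\amor}(\cN)$ by the very definition of the amortized quantity. Since $\cN$ is applied $n$ times and the initial value of $E$ is zero, we end up with the bound $n E^{\amor}(\cN)$.

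More concretely, the starting point is the observation that for any $\sigma \in \cC(X:Y)$ one has $E(X:Y)_{\sigma} = 0$: this follows directly from $E(X:Y)_{\sigma} = \inf_{\sigma' \in \cC(X:Y)} \widetilde{D}_{\alpha}(\sigma \| \sigma')$ by taking $\sigma' = \sigma$ and using $\widetilde{D}_{\alpha}(\sigma \| \sigma) = 0$. In particular, $E(X'_0:Y'_0)_{\rho^{(0)}}=0$. Next, if $\cF_i$ maps $\cC(X'_i : Y_i Y'_i)$ into $\cC(X'_{i+1}X_{i+1}:Y'_{i+1})$, I would argue that $E$ is non-increasing across $\cF_i$: for any state $\omega$ on $X'_iY_iY'_i$ and any $\sigma \in \cC(X'_i:Y_iY'_i)$ approaching the infimum, $\cF_i(\sigma) \in \cC(X'_{i+1}X_{i+1}:Y'_{i+1})$ and the data-processing inequality for $\widetilde{D}_{\alpha}$ gives $\widetilde{D}_{\alpha}(\cF_i(\omega)\|\cF_i(\sigma)) \leq \widetilde{D}_{\alpha}(\omega\|\sigma)$, hence $E(X'_{i+1}X_{i+1}:Y'_{i+1})_{\cF_i(\omega)} \leq E(X'_i:Y_iY'_i)_{\omega}$.

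For the transitions coming from $\cN$, the state just before the $(i+1)$-th use of $\cN$ lives on $X'_{i+1}X_{i+1}Y'_{i+1}$ (Alice holding $X'_{i+1}X_{i+1}$, Bob holding $Y'_{i+1}$), and just after on $X'_{i+1}Y_{i+1}Y'_{i+1}$ (Alice holding $X'_{i+1}$, Bob holding $Y_{i+1}Y'_{i+1}$). Treating $X'_{i+1}$ as the reference $X'$ and $Y'_{i+1}$ as the reference $Y'$ in the definition~\eqref{eq:def_amor_div} of $E^{\amor}$, we get directly that
\[
E(X'_{i+1}:Y_{i+1}Y'_{i+1})_{(\cI\otimes\cN)(\cdot)} - E(X'_{i+1}X_{i+1}:Y'_{i+1})_{(\cdot)} \;\leq\; E^{\amor}(\cN).
\]
Combining these two ingredients by a straightforward telescoping starting from $E=0$ yields $E(X'_{n+1}:Y'_{n+1})_{\rho^{(n+1)}} \leq nE^{\amor}(\cN)$.

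The only subtle point, and the one I would double check carefully, is the monotonicity of $E$ under the ``free'' channels $\cF_i$: one must make sure that the infimum defining $E$ is really well controlled under $\cF_i$ when the minimizer is only approximate, which is handled by a standard $\eps$-argument using the data-processing inequality, and that the convention allowing unnormalized $\rho$ in the amortized inequality (as observed in~\eqref{eq:amor_unnorm}) is compatible with the normalized states appearing along the protocol — here this is immediate since each intermediate state is an output of trace-preserving maps applied to a normalized initial state, so normalization is automatic.
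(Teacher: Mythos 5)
Your proposal is correct and follows essentially the same route as the paper: the paper's proof is exactly your two ingredients per round (feasibility of $\cF_i(\sigma)$ in the class $\cC$ plus data processing for the free step, and the definition of $E^{\amor}(\cN)$ for each use of $\cN$), folded into a single chain of inequalities and iterated from $E(X'_0:Y'_0)_{\rho^{(0)}}=0$. The only cosmetic difference is that the relevant amortized definition is the one for $E^{\amor}(\cN)$ (the displayed supremum of differences of entanglement measures), not~\eqref{eq:def_amor_div}, but your use of it is the intended one.
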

\begin{proof}
Using the definition of $E$, we can write
\begin{align*}
E(X'_{n+1} : Y'_{n+1})_{\rho^{(n+1)}} 
&= \inf_{\sigma_{X'_{n+1} Y'_{n+1}} \in \cC(X'_{n+1} : Y'_{n+1})} \widetilde{D}_{\alpha}(\rho^{(n+1)}_{X'_{n+1} Y'_{n+1}} \| \sigma_{X'_{n+1} Y'_{n+1}}) \\
&\leq \inf_{\sigma_{X'_n Y_n Y'_n} \in \cC(X'_n : Y_n Y'_n) } \widetilde{D}_{\alpha}\left(\cF_n\left( \cN_{X_n \to Y_n}(\rho^{(n)}_{X'_{n} X_n Y'_{n}} ) \right) \| \cF_{n}\left(\sigma_{X'_n Y_n Y'_n}\right) \right) \\
&\leq \inf_{\sigma_{X'_n Y_n Y'_n} \in \cC(X'_n : Y_n Y'_n) } \widetilde{D}_{\alpha}\left( \cN_{X_n \to Y_n}(\rho^{(n)}_{X'_{n} X_n Y'_{n}} ) \| \sigma_{X'_n Y_n Y'_n} \right) \\
&= E(X'_{n} : Y_n Y'_{n})_{\cN_{X_n \to Y_n}(\rho^{(n)})} \\
&\leq E^{\amor}(\cN) + E(X'_{n} X_n : Y'_{n})_{\rho^{(n)}} \ ,
\end{align*}
using the definition of the amortized quantity $E^{\amor}(\cN)$. Repeating this argument, and using the fact that $E(X'_{0} : Y'_{0})_{\rho^{(0)}} = 0$, we obtain the desired result.
\end{proof}

However, the issue with the amortized quantity $E^{\amor}(\cN)$ is that it is unclear how to compute it. Using our chain rule, one can upper bound this $E^{\amor}(\cN)$ in terms of a regularized divergence by finding channels $\cM$ having the right properties. Then one can use Theorem~\ref{thm:comp_reg_sand} to obtain computable upper bounds on the regularized divergence.
\begin{lemma}
\label{lem:bound_amortized_entanglement}
Let $\cM_{X \to Y}$ be a completely positive map satisfying the following property. For any $\rho_{X'YY'}  \in \D(X'YY')$ and any $\sigma_{X'XY'} \in \cC(X'X : Y')$, we have
\begin{align}
\label{eq:condition_channel_M}
E(X':YY')_{\rho} \leq \widetilde{D}_{\alpha}(\rho_{X'YY'} \| (\cI_{X'Y'} \otimes \cM_{X \to Y})(\sigma_{X'XY'})) \ .
\end{align} 
 Then
\begin{align*}
E^{\amor}(\cN) \leq \widetilde{D}^{\reg}_{\alpha}(\cN \| \cM).
\end{align*}
\end{lemma}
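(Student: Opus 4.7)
The plan is to combine the hypothesis on $\cM$ with the chain rule for the sandwiched R\'enyi divergence from Corollary~\ref{cor:chainrulesandwiched}, and then carry out two successive optimizations (over $\sigma$ and then over $\rho$). I would begin by unfolding the definition of $E^{\amor}(\cN)$, fixing an arbitrary state $\rho_{X'XY'} \in \D(X'XY')$ and an arbitrary $\sigma_{X'XY'} \in \cC(X'X:Y')$.

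The key observation is that $(\cI_{X'Y'} \otimes \cM_{X\to Y})(\sigma_{X'XY'})$ is precisely a state of the form appearing on the right-hand side of the hypothesis~\eqref{eq:condition_channel_M}. Applying this hypothesis to the state $(\cI_{X'Y'} \otimes \cN_{X\to Y})(\rho_{X'XY'}) \in \D(X'YY')$ gives
\[
E(X':YY')_{(\cI \otimes \cN)(\rho)} \leq \widetilde{D}_{\alpha}\bigl((\cI_{X'Y'} \otimes \cN)(\rho_{X'XY'}) \;\big\|\; (\cI_{X'Y'} \otimes \cM)(\sigma_{X'XY'})\bigr).
\]
At this point I would invoke the chain rule in Corollary~\ref{cor:chainrulesandwiched} with reference system $R = X'Y'$ (so that the channels act on $X$) to bound the right-hand side by $\widetilde{D}^{\reg}_{\alpha}(\cN \| \cM) + \widetilde{D}_{\alpha}(\rho_{X'XY'} \| \sigma_{X'XY'})$.

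Finally, since $\sigma_{X'XY'}$ was an arbitrary element of $\cC(X'X:Y')$, taking the infimum over such $\sigma$ converts the state divergence term into $E(X'X:Y')_{\rho}$, yielding
\[
E(X':YY')_{(\cI \otimes \cN)(\rho)} - E(X'X:Y')_{\rho} \leq \widetilde{D}^{\reg}_{\alpha}(\cN \| \cM).
\]
Taking the supremum over $\rho_{X'XY'}$ (and over the auxiliary dimensions) gives the claimed inequality. The argument is essentially just a clean application of the chain rule; there is no serious obstacle, beyond the bookkeeping needed to identify the reference system correctly and to check that the infimum over $\sigma \in \cC(X'X:Y')$ on the chain-rule side exactly produces the correlation measure $E(X'X:Y')_{\rho}$ appearing in the definition of $E^{\amor}(\cN)$.
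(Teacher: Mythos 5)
Your proposal is correct and follows essentially the same route as the paper: both apply the hypothesis~\eqref{eq:condition_channel_M} to $(\cI\otimes\cN)(\rho)$ together with the chain rule of Corollary~\ref{cor:chainrulesandwiched} (with reference system $X'Y'$), and then optimize over $\sigma \in \cC(X'X:Y')$ and over $\rho$; the paper merely phrases the optimization as an exchange of suprema inside the definition of $E^{\amor}(\cN)$ rather than fixing $\rho,\sigma$ first, which is an immaterial difference in bookkeeping.
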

\begin{proof}
Consider $\rho_{X'XY'}$ and let $\sigma_{X'XY'} \in \cC(X'X:Y')$. Applying the chain rule, we obtain
\begin{align*}
\widetilde{D}_{\alpha}(\cN_{X \to Y}(\rho_{X'XY'}) \| \cM_{X \to Y}(\sigma_{X'XY})) \leq \widetilde{D}^{\reg}_{\alpha}(\cN \| \cM) + \widetilde{D}_{\alpha}(\rho_{X'XY'} \| \sigma_{X'XY'}) \ .
\end{align*}
Thus,
\begin{align*}
E^{\amor}(\cN) 
&= \sup_{\rho_{X'XY'} \in \D(X'XY)} E(X':YY')_{(\cI_{X'Y'} \otimes \cN_{X \to Y})(\rho_{X'XY})} - E(X'X:Y')_{\rho_{X'XY}} \\
&= \sup_{\rho_{X'XY'} \in \D(X'XY)} \sup_{\sigma_{X'XY} \in \cC(X'X:Y)} \left( E(X':YY')_{(\cI_{X'Y'} \otimes \cN_{X \to Y})(\rho_{X'XY})} -  \widetilde{D}_{\alpha}(\rho_{X'XY'} \| \sigma_{X'XY'}) \right) \\
&\leq \sup_{\rho_{X'XY'} \in \D(X'XY)} \sup_{\sigma_{X'XY} \in \cC(X'X:Y)} \left( \widetilde{D}_{\alpha}( \cN_{X \to Y}(\rho_{X'XY'}) \| \cM_{X \to Y}(\sigma_{X'XY})) -  \widetilde{D}_{\alpha}(\rho_{X'XY'} \| \sigma_{X'XY'}) \right) \\
&\leq \widetilde{D}^{\reg}_{\alpha}(\cN \| \cM) \ .
\end{align*}
\end{proof}


We could then apply this methodology to a variety of tasks. Here we consider the task of quantum communication between Alice and Bob with free classical two-way communication. For that, we will fix $\cC$ to be the set known as $\mathrm{PPT'}$~\cite{rains2001semidefinite} defined by $\cC(X : Y) = \{\sigma_{XY}  \in \Pos(X Y) : \| \sigma_{XY}^{\top_{Y}} \|_{1} \leq 1\}$ and $\alpha \in (1, \infty)$. We then have to find a set of channels $\cM$ satisfying the condition~\eqref{eq:condition_channel_M}. For that we use set of channel used in~\cite{FF19} (this choice can be traced back to~\cite{holevo2001evaluating}),
\begin{align*}
{\cV_{\Theta}} := \{ \cM \in \CP(X : Y) : \| \Theta_{Y} \circ \cM_{X \to Y} \|_{\diamond} \leq 1 \} \ ,
\end{align*}
where $\Theta_{Y}$ denotes the transpose map and the diamond norm of a linear map $\cA$ from $\Lin(X')$ to $\Lin(Y)$ is defined by $\| \cA \|_{\diamond} = \sup \{ \| (\cI_{X} \otimes \cA_{X' \to Y})(W_{XX'}) \|_1 : \| W_{XX'} \|_{1} \leq 1 \}$.
Notice that any $\cM \in \cV_{\Theta}$ satisfies the condition~\eqref{eq:condition_channel_M} as $\cM(\sigma_{X'XY'}) \in \cC(X : YY')$ for any $\sigma_{X'XY'} \in \cC(X' X : Y)$. In fact, we have for any $\sigma_{X'XY'}$ such that $\| \sigma_{X'XY'}^{\top_{Y'}} \|_{1} \leq 1$, we have
\begin{align*}
\| \cM_{X \to Y}(\sigma_{X'XY'})^{\top_{YY'}} \|_{1} 
&= \| \Theta_{Y} \circ \cM_{X \to Y}(\sigma_{X'XY'}^{\top_{Y'}}) \|_1 \\
&\leq \| \Theta_{Y} \circ \cM_{X \to Y} \|_{\diamond} \| \sigma_{X'XY'}^{\top_{Y'}} \|_{1} \\
&\leq 1 \ .
\end{align*}
\begin{proposition}
Let $\eps \in [0,1]$, $k \in \NN_+$ and consider a state $\rho^{(n+1)}_{X'_{n+1} Y'_{n+1}}$ generated as in Figure~\ref{fig:quantum_comm_locc} with the quantum channels $\cF_{i}$ that preserve the property $\mathrm{PPT'}$ (which is in particular the case for classical two-way communication and local operations). Assume that $X'_{n+1}$ and $Y'_{n+1}$ are $k$-qubit systems and that $\tr(\rho^{(n+1)}_{X'_{n+1} Y'_{n+1}} \Psi^{\otimes k}_{}) \geq 1-\eps$, with $\Psi = \proj{\Psi}$ is a maximally entangled state $\ket{\Psi} = \frac{1}{\sqrt{2}}(\ket{00} + \ket{11})$. Then,
for any $\alpha \in (1,\infty)$ and any $\cM \in \cV_{\Theta}$,
\begin{align*}
\frac{k}{n} 
&\leq \widetilde{D}^{\reg}_{\alpha}(\cN \| \cM) - \frac{\alpha}{n(\alpha - 1)} \log(1-\eps) \ .
\end{align*}
\end{proposition}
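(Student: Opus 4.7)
The plan is to combine the entanglement-tracking lemma with the regularized chain rule bound on the amortized entanglement, and then to supply a lower bound on the entanglement of the final state $\rho^{(n+1)}_{X'_{n+1}Y'_{n+1}}$ coming from its large overlap with the maximally entangled state $\Psi^{\otimes k}$.

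\textbf{Step 1 (applying Lemmas \ref{lem:entanglement_tracking} and \ref{lem:bound_amortized_entanglement}).} Fix $\cC = \mathrm{PPT'}$ and work with $E = E_{\alpha,\mathrm{PPT'}}$. Since the channels $\cF_i$ are assumed to preserve $\mathrm{PPT'}$, the hypothesis of Lemma \ref{lem:entanglement_tracking} is satisfied, giving
\[
E(X'_{n+1}:Y'_{n+1})_{\rho^{(n+1)}} \;\leq\; n \cdot E^{\amor}(\cN).
\]
Next, the discussion preceding the proposition shows that any $\cM \in \cV_{\Theta}$ maps $\cC(X'X:Y')$ into $\cC(X':YY')$, which is exactly condition~\eqref{eq:condition_channel_M}. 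Thus by Lemma~\ref{lem:bound_amortized_entanglement}, $E^{\amor}(\cN) \leq \widetilde{D}^{\reg}_{\alpha}(\cN\|\cM)$ for every $\cM \in \cV_{\Theta}$.

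\textbf{Step 2 (lower bound on $E$ from fidelity).} The core estimate is that for any $\sigma \in \mathrm{PPT'}(X'_{n+1}:Y'_{n+1})$,
\[
\widetilde{D}_{\alpha}(\rho^{(n+1)}\|\sigma) \;\geq\; k + \frac{\alpha}{\alpha-1}\log(1-\eps).
\]
To see this, apply the two-outcome measurement $\{\Psi^{\otimes k},\, I - \Psi^{\otimes k}\}$ and invoke the data-processing inequality for $\widetilde{D}_{\alpha}$. The resulting classical distributions have first entry $p := \tr(\Psi^{\otimes k}\rho^{(n+1)}) \geq 1-\eps$ and $q := \tr(\Psi^{\otimes k}\sigma)$. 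The key observation is that $q \leq 2^{-k}$: writing $\Psi^{\otimes k}$ as a maximally entangled state on a $2^k$-dimensional bipartite system, one has $(\Psi^{\otimes k})^{\top_{Y}} = 2^{-k} F$ for $F$ the swap operator (so $\|(\Psi^{\otimes k})^{\top_Y}\|_{\infty} = 2^{-k}$), and hence
\[
q = \tr\bigl((\Psi^{\otimes k})^{\top_Y}\, \sigma^{\top_Y}\bigr) \leq 2^{-k}\,\|\sigma^{\top_Y}\|_{1} \leq 2^{-k}.
\]
Lower-bounding the classical Rényi divergence by keeping only the first term in its defining sum (valid since $\alpha>1$, making $\log$ increasing), we obtain
\[
D_{\alpha}\bigl((p,1-p)\,\|\,(q,1-q)\bigr) \geq \tfrac{1}{\alpha-1}\log\bigl(p^{\alpha}q^{1-\alpha}\bigr) = \tfrac{\alpha}{\alpha-1}\log p - \log q \geq k + \tfrac{\alpha}{\alpha-1}\log(1-\eps).
\]
Taking the infimum over $\sigma \in \mathrm{PPT'}$ yields the claimed lower bound for $E(X'_{n+1}:Y'_{n+1})_{\rho^{(n+1)}}$.

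\textbf{Step 3 (assembly).} Chaining the inequalities from Steps 1 and 2 gives
\[
k + \tfrac{\alpha}{\alpha-1}\log(1-\eps) \;\leq\; n\,\widetilde{D}^{\reg}_{\alpha}(\cN\|\cM),
\]
and dividing by $n$ and rearranging produces the stated inequality. The only nontrivial ingredient is the partial-transpose computation showing $\tr(\Psi^{\otimes k}\sigma) \leq 2^{-k}$ for PPT' states $\sigma$; everything else is a direct application of results already proved in the paper.
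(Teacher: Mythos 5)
Your proposal is correct and follows essentially the same route as the paper's proof: apply Lemma~\ref{lem:entanglement_tracking} and Lemma~\ref{lem:bound_amortized_entanglement} to get $E(X'_{n+1}:Y'_{n+1})_{\rho^{(n+1)}} \leq n\,\widetilde{D}^{\reg}_{\alpha}(\cN\|\cM)$, then lower bound $E$ via data processing with the two-outcome measurement $\{\Psi^{\otimes k}, I-\Psi^{\otimes k}\}$, the bound $\tr(\sigma\Psi^{\otimes k})\leq 2^{-k}$ for $\mathrm{PPT'}$ operators, and dropping the second term of the binary R\'enyi divergence. The only (harmless) difference is that you prove the $2^{-k}$ overlap bound directly via the partial-transpose/swap-operator argument, whereas the paper cites it from~\cite{rains2001semidefinite}.
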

\begin{proof}
Applying Lemma~\ref{lem:entanglement_tracking} and then Lemma~\ref{lem:bound_amortized_entanglement} for the choice of $\cC$ and $\cM$ described above, we have
\begin{align*}
E(X'_{n+1} : Y'_{n+1})_{\rho^{(n+1)}} 
&\leq n E^{\amor}(\cN) \\
&\leq n \widetilde{D}_{\alpha}^{\reg}(\cN \| \cM) \ .
\end{align*}
We now want to relate the quantity $E(X'_{n+1} : Y'_{n+1})_{\rho^{(n+1)}}$ to $\eps$ and $k$. Using the data processing inequality for $\widetilde{D}_{\alpha}$ with the completely positive and trace-preserving map $\cA(W) = \tr(W \Psi^{\otimes k}) \proj{0} + (1-\tr(W \Psi^{\otimes k})) \proj{1}$, we have
\begin{align*}
E(X'_{n+1} : Y'_{n+1})_{\rho^{(n+1)}} 
&= \inf_{ \sigma_{X'_{n+1} Y'_{n+1}} \in \cC(X'_{n+1} : Y'_{n+1}) } \widetilde{D}_{\alpha}(\rho^{(n+1)}_{X'_{n+1} Y'_{n+1}} \| \sigma_{X'_{n+1} Y'_{n+1}}) \\
&\geq \inf_{ \sigma_{X'_{n+1} Y'_{n+1}} \in \cC(X'_{n+1} : Y'_{n+1}) } \delta_{\alpha}\left(\tr(\rho^{(n+1)}_{X'_{n+1} Y'_{n+1}} \Psi^{\otimes k}) \| \tr(\sigma_{X'_{n+1} Y'_{n+1}} \Psi^{\otimes k}) \right) \ ,
\end{align*}
where $\delta_{\alpha}(p \| q) = \frac{1}{\alpha - 1} \log \left( p^{\alpha} q^{1-\alpha} + (1-p)^{\alpha} (1-q)^{1-\alpha} \right)$ is the binary R\'enyi divergence. By assumption $\tr(\rho^{(n+1)}_{X'_{n+1} Y'_{n+1}} \Psi^{\otimes k}) \geq 1-\eps$ and for any state $\sigma_{X'_{n+1} Y'_{n+1}}$ that is $\mathrm{PPT}'$, we have $\tr(\sigma_{X'_{n+1} Y'_{n+1}} \Psi^{\otimes k}) \leq 2^{-k}$~\cite{rains2001semidefinite}. As a result, as $\alpha > 1$, we have
\begin{align*}
\delta_{\alpha}\left(\tr(\rho^{(n+1)}_{X'_{n+1} Y'_{n+1}} \Psi^{\otimes k}) \| \tr(\sigma_{X'_{n+1} Y'_{n+1}} \Psi^{\otimes k}) \right) 
&\geq \frac{1}{\alpha - 1} \log \left( (1-\eps)^{\alpha} 2^{k (\alpha - 1)} \right)  \\
&= \frac{\alpha}{\alpha - 1} \log(1-\eps) + k \ .
\end{align*}
Putting everything together, we obtain the desired bound.
\end{proof}

Using the fact that $\widetilde{D}_{\alpha}^{\reg}(\cN \| \cM) \leq \newD_{\alpha}(\cN \| \cM)$ and the fact that the set of channels $\cV_{\Theta}$ is representable by a semidefinite program, we obtain efficiently computable bounds $\min_{\cM \in \cV_{\Theta}} \newD_{\alpha}(\cN \| \cM)$ on the quantum capacity assisted with free $\mathrm{PPT}'$-preserving operations. As we also have $\widetilde{D}_{\alpha}^{\reg}(\cN \| \cM) \leq  \frac{1}{m} \newD_{\alpha}(\cN^{\otimes m} \| \cM^{\otimes m})$ for any $m \geq 1$, $\min_{\cM \in \cV_{\Theta}} \frac{1}{m} \newD_{\alpha}(\cN^{\otimes m} \| \cM^{\otimes m})$ is also a valid upper bound but it is not clear how to compute it efficiently when $m \geq 2$. Nonetheless, one can use the map $\cM \in \cV_{\Theta}$ that minimizes $\min_{\cM \in \cV_{\Theta}} \newD_{\alpha}(\cN \| \cM)$ and evaluate $\frac{1}{m} \newD_{\alpha}(\cN^{\otimes m} \| \cM^{\otimes m})$ for this map.
We illustrate these bounds in Figure~\ref{fig:quantum_cap} for the amplitude damping channel, where we obtain an improved bound compared to using the geometric R\'enyi divergence $\widehat{D}_{\alpha}$ in~\cite{FF19}.

\begin{figure}[ht]
  \centering
%
%
\definecolor{mycolor1}{rgb}{0.00000,0.44700,0.74100}%
\definecolor{mycolor2}{rgb}{0.85000,0.32500,0.09800}%
\definecolor{mycolor3}{rgb}{0.92900,0.69400,0.12500}%
\begin{tikzpicture}

\begin{axis}[%
width=4in,
height=2.5in,
at={(1.048in,0.726in)},
scale only axis,
xmin=0,
xmax=1,
ymin=0,
ymax=1.05,
xlabel={$\gamma$},
ylabel style={rotate=-90},
ylabel={$D$},
axis background/.style={fill=white},
legend style={legend cell align=left, align=left, draw=white!15!black}
]
\addplot [color=mycolor1, line width=1.5pt]
  table[row sep=crcr]{%
0	0.999999999818654\\
0.025	0.96718528654409\\
0.05	0.940543750734257\\
0.1	0.892500444843272\\
0.2	0.804555240572948\\
0.3	0.720122479705461\\
0.4	0.635511735376004\\
0.5	0.548461571846658\\
0.6	0.457092901698822\\
0.7	0.359366802001355\\
0.8	0.252870119674115\\
0.9	0.134475360686317\\
0.95	0.0695631387394038\\
1	0\\
};
\addlegendentry{Bound $\min_{\cM \in \cV_{\Theta}} \newD_{\alpha}(\cN_{\gamma} \| \cM)$}

\addplot [color=mycolor2, line width=1.5pt, dashed]
  table[row sep=crcr]{%
0	0.999999999818654\\
0.025	0.96718528654409\\
0.05	0.940469934357659\\
0.1	0.888566126404789\\
0.2	0.794424751480545\\
0.3	0.706361308285264\\
0.4	0.620356490546093\\
0.5	0.533858545349676\\
0.6	0.444629188190253\\
0.7	0.35022999428678\\
0.8	0.247640021262488\\
0.9	0.132803513966608\\
0.95	0.069091693177106\\
1	-1.52690683627117e-10\\
};
\addlegendentry{Bound $\frac{1}{3} \newD_{2}(\cN_{\gamma}^{\otimes 3} \| \cM^{\otimes 3})$}

\addplot [color=mycolor3, line width=1.5pt, dotted]
  table[row sep=crcr]{%
0	1.0000122376519\\
0.025	0.981864556738505\\
0.05	0.963487028379919\\
0.1	0.926003073430793\\
0.2	0.848000492348132\\
0.3	0.765541627600207\\
0.4	0.678077466946298\\
0.5	0.584967868575473\\
0.6	0.48543167442716\\
0.7	0.378517121200577\\
0.8	0.263035706749808\\
0.9	0.137512932339375\\
0.95	0.0703955823322711\\
1	0\\
};
\addlegendentry{Bound $\min_{\cM \in \cV_{\Theta}} \widehat{D}_{\alpha}(\cN_{\gamma} \| \cM)$~\cite{FF19}}

\end{axis}
\end{tikzpicture}%
  \caption{Upper bounds on the quantum capacity with free $\mathrm{PPT}'$ preserving operations for the amplitude damping channel defined by $\cN_{\gamma}(\rho) = (\proj{0} + \sqrt{1-\gamma} \proj{1}) \rho (\proj{0} + \sqrt{1-\gamma} \proj{1}) + \gamma \ketbra{0}{1} \rho \ketbra{1}{0}$. The solid plot shows $\min \left\{ \min_{\cM \in \cV_{\Theta}} \newD_{\alpha}(\cN_{\gamma} \| \cM) : \alpha \in \{1.1,1.2, \dots, 2\}\right\}$. The dashed plot shows $\frac{1}{3} \newD_{2}(\cN_{\gamma}^{\otimes 3} \| \cM^{\otimes m})$ where $\cM = \argmin_{\cM \in \cV_{\Theta}} \newD_2(\cN_{\gamma} \| \cM)$ and we observe a slightly improved bound compared to the solid plot. The dotted plot shows the bound obtained using $\widehat{D}_{\alpha}$ from~\cite{FF19}, which happens to match with the bound based on $D_{\max}$ for the amplitude damping channel as shown in~\cite{FF19}.}
  \label{fig:quantum_cap}
\end{figure}

\section{Discussion}
\label{sec:discussion}

We have presented a family of quantum $\alpha$-R\'enyi divergences for $\alpha > 1$ based on the geometric mean. The framework is in fact more general and allows us to define quantum divergences in a similar way using a Kubo-Ando mean for any operator monotone function $g : [0,\infty) \to [0, \infty)$. As we mostly used generic properties of operator means to establish properties of $\newD_{\alpha}$, we expect analogous properties for more general functions $g$ to hold. \change{For example, for any convex function $f$, the $f$-divergence between distributions $P$ and $Q$ is defined as $Q_{f}(P \| Q) = \sum_{x} f(\frac{P(x)}{Q(x)}) Q(x)$. When $f(t) = t^{\alpha}$ with $\alpha > 1$, we obtain (after applying $\frac{1}{\alpha - 1} \log$) the $\alpha$-R\'enyi divergence. Several quantum $f$-divergences have been proposed, see e.g.,~\cite{hiai2017different}. In the special case where $f$ is bijective and its inverse $f^{-1}$ is operator monotone, then using the Kubo-Ando mean associated with $g = f^{-1}$, we would obtain a quantum version of the $f$-divergence. Here, we focused on the case $f(t) = t^{\alpha}$ and correspondingly $g(t) = t^{1/\alpha}$, but it would be interesting to explore other choices of $g$ and potential applications.}
In a different direction, a variant of $\newD_{\alpha}$ is defined in~\cite{BFF20}, using the $\frac{1}{2}$-geometric mean but one takes the geometric mean $k$ times iteratively with different variables. More generally, we hope that our work encourages the study of further quantum divergences that are defined via convex optimization programs.

We leave multiple open questions. A specific question is whether $\lim_{\alpha \to 1} \newD_{\alpha}(\rho \| \sigma)$ is equal to the Belavkin-Staszewski divergence $\widehat{D}(\rho \| \sigma)$~\cite{BS82}? Numerical examples suggest that this should be the case. 
Another question is whether it is possible to define $\newD_{\alpha}$ when $\alpha < 1$ with similar properties? The natural extension would be to define $\newD_{\alpha}(\rho \| \sigma) = \frac{1}{\alpha-1} \log \newQ_{\alpha}(\rho \| \sigma)$ with $\newQ_{\alpha}(\rho \| \sigma) = \max\{ \tr(A) : \rho \geq \sigma \#_{1/\alpha} A \}$. But with this definition, it is simple to check using the operator monotonicity of $t \mapsto t^{\alpha}$ for $\alpha \in [0,1]$ that $\newD_{\alpha}(\rho \| \sigma) = \widehat{D}_{\alpha}(\rho \| \sigma)$ which means that we cannot have the property~\eqref{eq:reg_im_sand} for example. This argument does not go through when $\alpha > 1$ as $t \mapsto t^{\alpha}$ is not operator monotone in this regime. It would also be interesting to generalize the divergences introduced here to infinite-dimensional spaces or even to von Neumann algebras. Another important question that is left open is whether $\widetilde{D}_{\alpha}^{\reg}$ converges to $D^{\reg}$ when $\alpha \to 1$.

\section*{Acknowledgements}

We would like to thank Kun Fang, Renato Renner and David Sutter for discussions about the chain rule for quantum divergences, Peter Brown for discussions about quantum divergences and for comments on the manuscript, Alexander M\"uller-Hermes for comments on the monotonicity of $\newD_{\alpha}$ under positive maps, Mario Berta for discussions about quantum channel discrimination and Mark Wilde for comments on a previous draft. We are also very grateful to the anonymous reviewers for their detailed feedback that significantly improved the paper. This research is supported by the French National Research Agency via Project No. ANR-18-CE47-0011 (ACOM). 

\bibliographystyle{plainnat}
\bibliography{big}

\appendix

\section{Various results}

The following standard lemma about permutation invariant operators was used for the proof of Lemma~\ref{lem:reg_channel_div}. 
\begin{lemma}
\label{lem:perm_inv_spec}
Let $X$ be a permutation-invariant operator on $(\CC^d)^{\otimes n}$, i.e., $[P({\pi}), X] = 0$ for any permutation $\pi \in \mathfrak{S}_n$ and $P({\pi}) \ket{\psi_1} \otimes \cdots \otimes \ket{\psi_n} = \ket{\psi_{\pi^{-1}(1)}} \otimes \cdots \otimes \ket{\psi_{\pi^{-1}(1)}}$. Then 
\begin{align*}
|\mathrm{spec}(X)| \leq (n+1)^d (n+d)^{d^2} \ .
\end{align*}
\end{lemma}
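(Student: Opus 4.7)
The natural approach is via Schur-Weyl duality. First I would decompose
\[
(\CC^d)^{\otimes n} \;=\; \bigoplus_{\lambda} V_\lambda \otimes W_\lambda,
\]
where $\lambda$ ranges over partitions of $n$ into at most $d$ parts, $V_\lambda$ is the corresponding irreducible polynomial representation of $\GL(d)$, and $W_\lambda$ is the corresponding irreducible representation of $\mathfrak{S}_n$. The actions of $\GL(d)$ (acting as $U^{\otimes n}$) and of $\mathfrak{S}_n$ (acting by $P(\pi)$) on $(\CC^d)^{\otimes n}$ are each other's commutants.

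Next I would use the hypothesis that $X$ commutes with $P(\pi)$ for all $\pi \in \mathfrak{S}_n$. By Schur's lemma, this forces $X$ to act on each isotypic component as $X_\lambda \otimes I_{W_\lambda}$ for some operator $X_\lambda$ on $V_\lambda$. Hence
\[
\spec(X) \;=\; \bigcup_{\lambda} \spec(X_\lambda),
\]
so $|\spec(X)| \leq \sum_{\lambda} \dim V_\lambda$, with the sum over partitions of $n$ into at most $d$ parts.

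Finally, I would bound the two factors separately. The number of such partitions is at most $(n+1)^d$ because each part $\lambda_i$ lies in $\{0,1,\dots,n\}$. For the dimension $\dim V_\lambda$, I would use the Weyl dimension formula
\[
\dim V_\lambda \;=\; \prod_{1\leq i<j\leq d} \frac{\lambda_i - \lambda_j + j - i}{j-i},
\]
and observe that each numerator is bounded by $n + d$ while each denominator is at least $1$; since there are $\binom{d}{2} \leq d^2$ factors, this gives $\dim V_\lambda \leq (n+d)^{d^2}$. Combining these yields the desired inequality
\[
|\spec(X)| \;\leq\; (n+1)^d (n+d)^{d^2}.
\]

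There is no real obstacle here: everything reduces to invoking Schur-Weyl duality correctly and then using crude bounds on the number of Young diagrams and on the dimension of the $\GL(d)$-irreps. The only mildly delicate point is remembering to restrict to partitions with at most $d$ parts, which is automatic from the fact that $V_\lambda = 0$ whenever $\lambda$ has more than $d$ parts.
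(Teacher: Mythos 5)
Your proof is correct and follows essentially the same route as the paper: Schur--Weyl decomposition of $(\CC^d)^{\otimes n}$, Schur's lemma to force $X$ to act trivially on each $\mathfrak{S}_n$-irrep factor, then the bound $(n+1)^d$ on the number of Young diagrams with at most $d$ rows times the bound $(n+d)^{d^2}$ on the multiplicity-space dimension. The only cosmetic difference is that you derive the dimension bound from Weyl's dimension formula, whereas the paper simply cites it as a known estimate; both give $\dim V_\lambda \leq (n+d)^{\binom{d}{2}} \leq (n+d)^{d^2}$.
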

\begin{proof}
$P$ defines a representation of the symmetric group $\mathfrak{S}_n$ on $(\CC^d)^{\otimes n}$ and its decomposition into irreducible representations is well-known, see e.g.,~\cite[Section 5.3]{Harrow05}. In fact, its irreducible representations are labelled by the set $\cI_{n,d}$ of Young diagrams of size $n$ with at most $d$ rows. For $\lambda \in \cI_{n,d}$, we denote by $p_{\lambda}$ the corresponding irreducible representation acting on the space $V_{\lambda}$. Each $p_{\lambda}$ appears in general multiple times in $P$ and this is taken into account by introducing the multiplicity space $U_{\lambda}$ (which happens to correspond to an irreducible representation of the unitary group but we will not use this here). Summarizing, the operator $P(\pi)$ can in the Schur basis be written as 
\begin{align*}
P(\pi)  = \sum_{\lambda \in \cI_{n,d}} \proj{\lambda} \otimes I_{U_{\lambda}} \otimes p_{\lambda}(\pi) \ .
\end{align*}
We now express the operator $X$ in the Schur basis
\begin{align*}
X &= \sum_{\substack{\lambda, \lambda' \in \cI_{n,d} \\ i \in [m(\lambda)] ,i' \in [m(\lambda')] }} \ketbra{\lambda}{\lambda'} \otimes \ketbra{u_{\lambda,i}}{u_{\lambda',i'}} \otimes X_{(\lambda, i), (\lambda', i')} \ ,
\end{align*}
where we have introduced orthonormal bases $\{u_{\lambda, i}\}_{i \in m(\lambda)}$ of the spaces $U_{\lambda}$ ($m(\lambda)$ is the dimension of $U_{\lambda}$) and $X_{(\lambda, i), (\lambda', i')}$ can be seen as an operator from $V_{\lambda'}$ to $V_{\lambda}$. We can now write the products $P(\pi) X$ and $X P(\pi)$ as 
\begin{align*}
P(\pi) X  &= \sum_{\substack{\lambda, \lambda' \in \cI_{n,d} \\ i \in [m(\lambda)] ,i' \in [m(\lambda')] }} \ketbra{\lambda}{\lambda'} \otimes \ketbra{u_{\lambda,i}}{u_{\lambda',i'}} \otimes p_{\lambda}(\pi) X_{(\lambda, i), (\lambda', i')}  \\
X P(\pi)  &= \sum_{\substack{\lambda, \lambda' \in \cI_{n,d} \\ i \in [m(\lambda)] ,i' \in [m(\lambda')] }} \ketbra{\lambda}{\lambda'} \otimes \ketbra{u_{\lambda,i}}{u_{\lambda',i'}} \otimes  X_{(\lambda, i), (\lambda', i')} p_{\lambda'}(\pi)  \ .
\end{align*}
Applying Schur's lemma, we get that $X_{(\lambda, i), (\lambda', i')} = 0$ if $\lambda \neq \lambda'$ and $X_{(\lambda, i), (\lambda, i')} = x_{\lambda,i,i'} I_{V_{\lambda}}$ for some scalar $x_{\lambda, i,i'}$. Defining the operator $X_{U_{\lambda}} = \sum_{i,i' \in m(\lambda)} x_{\lambda, i,i'} \ketbra{u_{\lambda,i}}{u_{\lambda,i'}}$, we can write $X$ as
\begin{align*}
X = \sum_{\lambda \in \cI_{n,d}} \proj{\lambda} \otimes X_{U_{\lambda}} \otimes \id_{V_{\lambda}} \ .
\end{align*} 
As a result,  $|\mathrm{spec}(X)| \leq |\cI_{n,d}| \max_{\lambda} m(\lambda)$. But it is well-known that $|\cI_{n,d}| \leq (n+1)^d$ and $\max_{\lambda} m(\lambda) \leq (n+d)^{d^2}$  (see e.g.,~\cite[Section 6.2]{Harrow05}). This concludes the proof of the claim.
\end{proof}

We also need the following concavity and continuity statement. 
\begin{lemma}
\label{lem:concave_omega}
Let $\cN, \cM$ be completely positive maps from $\Lin(X)$ to $\Lin(Y)$ and $\alpha > 1$. The function 
\begin{align*}
\omega \in \D(X) \mapsto \widetilde{D}_{\alpha}( \omega^{\frac{1}{2}} J^{\cN} \omega^{\frac{1}{2}} \| \omega^{\frac{1}{2}} J^{\cM} \omega^{\frac{1}{2}}) 
\end{align*}
is concave and thus continuous on $\{\omega \in \D(X) : \omega > 0\}$. In addition, it is lower semicontinuous on $\D(X)$ and as a result, we have
\begin{align}
\sup_{\omega \in \D(X)} \widetilde{D}_{\alpha}( \omega^{\frac{1}{2}} J^{\cN} \omega^{\frac{1}{2}} \| \omega^{\frac{1}{2}} J^{\cM} \omega^{\frac{1}{2}}) 
&= \sup_{\substack{\omega \in \D(X) \\ \omega > 0}} \widetilde{D}_{\alpha}( \omega^{\frac{1}{2}} J^{\cN} \omega^{\frac{1}{2}} \| \omega^{\frac{1}{2}} J^{\cM} \omega^{\frac{1}{2}}) \ .
\label{eq:sup_channel_div_strict}
\end{align}
\end{lemma}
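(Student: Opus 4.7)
The lemma claims three properties of $f(\omega) := \widetilde{D}_{\alpha}(\omega^{\frac{1}{2}} J^{\cN} \omega^{\frac{1}{2}} \| \omega^{\frac{1}{2}} J^{\cM} \omega^{\frac{1}{2}})$: concavity on $\{\omega \in \D(X) : \omega > 0\}$, continuity on the same set, and lower semicontinuity on $\D(X)$, with the equality of suprema \eqref{eq:sup_channel_div_strict} then following as a consequence.

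Lower semicontinuity on $\D(X)$ combines two observations. First, the map $\omega \mapsto (\omega^{\frac{1}{2}} J^{\cN} \omega^{\frac{1}{2}}, \omega^{\frac{1}{2}} J^{\cM} \omega^{\frac{1}{2}})$ is continuous on $\D(X)$, since the matrix square root is continuous on $\Pos(X)$. Second, $(\rho, \sigma) \mapsto \widetilde{D}_{\alpha}(\rho \| \sigma)$ is lower semicontinuous on $\Pos(X \otimes Y) \times \Pos(X \otimes Y)$: on the open set $\{\rho \ll \sigma\}$ continuity follows directly from \eqref{eq:def_minim_div}, while at points where $\rho \not\ll \sigma$ the value equals $+\infty$, which preserves lower semicontinuity trivially.

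The identity \eqref{eq:sup_channel_div_strict} is then a direct consequence. For any $\omega_0 \in \D(X)$, the sequence $\omega_n := (1 - 1/n) \omega_0 + (n \dim X)^{-1} I$ lies in $\D(X)$, satisfies $\omega_n > 0$ and $\omega_n \to \omega_0$. Lower semicontinuity yields $f(\omega_0) \leq \liminf_n f(\omega_n) \leq \sup_{\omega > 0} f(\omega)$, and taking the supremum over $\omega_0 \in \D(X)$ produces the nontrivial direction of \eqref{eq:sup_channel_div_strict}.

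The main obstacle is concavity on $\{\omega > 0\}$. My plan is to exploit a variational representation of $\widetilde{D}_{\alpha}$ for $\alpha > 1$, obtained from Legendre--Fenchel duality for $t \mapsto t^{\alpha}$ applied spectrally inside the trace, which expresses $\widetilde{Q}_{\alpha}(\rho \| \sigma)$ as an optimization over an auxiliary positive operator of an expression separately linear in $\rho$. Substituting $\rho = \omega^{\frac{1}{2}} J^{\cN} \omega^{\frac{1}{2}}$ and $\sigma = \omega^{\frac{1}{2}} J^{\cM} \omega^{\frac{1}{2}}$, and performing an $\omega$-dependent change of variable absorbing the conjugating factors $\omega^{\pm 1/2}$ into the auxiliary operator, the goal is to rewrite $\frac{1}{\alpha - 1} \log f(\omega)$ (up to sign) as an infimum of functions affine in $\omega$, over a feasible set independent of $\omega$. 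An infimum of affine functions is automatically concave, and continuity on the open convex set $\{\omega > 0\}$ then comes for free from the standard fact that concave functions on finite-dimensional open convex sets are continuous. The delicate point is that the $\omega$-dependent change of variable must not leave behind implicit $\omega$-dependent constraints on the auxiliary operator; verifying this requires a careful bookkeeping inside the variational formula, and is the crux of the whole argument.
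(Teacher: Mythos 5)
Your proposal has two genuine gaps. The first and most serious is concavity, which is the heart of the lemma: you never actually prove it. Your plan is to find a variational formula expressing $\widetilde{Q}_{\alpha}$ as an optimization of an expression linear in $\rho$, absorb the factors $\omega^{\pm 1/2}$ into the auxiliary operator, and obtain an infimum of affine functions of $\omega$; but you yourself flag the crux (that the change of variable leaves no hidden $\omega$-dependence in the feasible set) as unverified, and it is doubtful the scheme works as described, because $\omega$ enters \emph{both} arguments: the known variational formulas for $\widetilde{Q}_{\alpha}$ ($\alpha>1$) are linear in $\rho$ but involve $\sigma$ through a fractional power, so one change of variable cannot simultaneously absorb $\omega$ from $\omega^{1/2}J^{\cN}\omega^{1/2}$ and from $\omega^{1/2}J^{\cM}\omega^{1/2}$. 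The paper's proof instead exploits the purification structure that your approach ignores: $\omega^{1/2}\ket{\Phi}_{XX'}$ and $\sqrt{1-\lambda}\,\ket{0}\otimes\omega_0^{1/2}\ket{\Phi}_{XX'}+\sqrt{\lambda}\,\ket{1}\otimes\omega_1^{1/2}\ket{\Phi}_{XX'}$ purify the same state $\omega=(1-\lambda)\omega_0+\lambda\omega_1$, so the corresponding outputs of $\cI\otimes\cN$ and $\cI\otimes\cM$ are isometrically equivalent; pinching away the off-diagonal blocks via data processing, using block additivity of $\widetilde{Q}_{\alpha}$ on the resulting direct sum, and finally concavity of the logarithm gives concavity in $\omega$ in a few lines. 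Some such structural input seems unavoidable.

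The second gap is your justification of joint lower semicontinuity of $\widetilde{D}_{\alpha}$. The set $\{(\rho,\sigma):\rho\ll\sigma\}$ is not open (perturbing $\rho$ at a rank-deficient $\sigma$ exits it), and $\widetilde{D}_{\alpha}$ is \emph{not} continuous on it: the paper's Lemma on lower semicontinuity records the counterexample $\rho_n=\proj{0}+n^{-1}\proj{1}$, $\sigma_n=\proj{0}+n^{-\alpha/(\alpha-1)}\proj{1}$, where $\rho_n\ll\sigma_n$ for all $n$, the limit satisfies $\rho\ll\sigma$, yet $\widetilde{D}_{\alpha}(\rho_n\|\sigma_n)=\frac{1}{\alpha-1}\not\to 0=\widetilde{D}_{\alpha}(\rho\|\sigma)$; the generalized inverse in \eqref{eq:def_minim_div} is discontinuous where the rank of $\sigma$ drops. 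Likewise, lower semicontinuity at points with $\rho\not\ll\sigma$ is not ``trivial'': one must show the nearby finite values actually diverge, which the paper does by measuring along a direction $\ket{v_1}\in\supp(\sigma)^{\perp}\setminus\supp(\rho)^{\perp}$ and applying data processing. At points with $\rho\ll\sigma$ the paper first pinches onto $\supp(\sigma)$ and its complement (data processing plus block additivity) and only then invokes continuity of the restricted function on $\Pos(\supp(\sigma))\times\Pos(\supp(\sigma))$, where $\sigma$ is invertible. Your treatment of the consequence \eqref{eq:sup_channel_div_strict} from lower semicontinuity is fine and matches the paper, but both pillars it rests on need the repairs above.
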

\begin{proof}
The concavity is very similar to the argument in~\cite{WFT17} for $\alpha = 1$.
Let $\omega_0, \omega_1 \in \D(X), \lambda \in [0,1]$ and $\omega = (1-\lambda) \omega_0 + \lambda \omega_1$. Note that both $\omega^{\frac{1}{2}} \ket{\Phi}_{XX'}$ and $\sqrt{(1-\lambda)} \ket{0} \otimes \omega_{0}^{\frac{1}{2}} \ket{\Phi}_{XX'} + \sqrt{\lambda} \ket{1} \otimes \omega_{1}^{\frac{1}{2}} \ket{\Phi}_{XX'}$ are purifications of the state $\omega$. By the isometric equivalence between purifications, we have
\begin{align*}
&\widetilde{D}_{\alpha}( \omega^{\frac{1}{2}} J^{\cN} \omega^{\frac{1}{2}} \| \omega^{\frac{1}{2}} J^{\cM} \omega^{\frac{1}{2}}) \\
&= \widetilde{D}_{\alpha}\Big( (1-\lambda) \proj{0} \otimes \omega_0^{\frac{1}{2}} J^{\cN} \omega_0^{\frac{1}{2}} + \lambda \proj{1} \otimes \omega_1^{\frac{1}{2}} J^{\cN} \omega_1^{\frac{1}{2}} + \sqrt{\lambda(1-\lambda)} \ketbra{0}{1} \otimes \omega_0^{\frac{1}{2}} J^{\cN} \omega_1^{\frac{1}{2}} + \sqrt{\lambda(1-\lambda)} \ketbra{1}{0} \otimes \omega_1^{\frac{1}{2}} J^{\cN} \omega_0^{\frac{1}{2}}  \\
&\qquad \Big\| (1-\lambda) \proj{0} \otimes \omega_0^{\frac{1}{2}} J^{\cM} \omega_0^{\frac{1}{2}} + \lambda \proj{1} \otimes \omega_1^{\frac{1}{2}} J^{\cM} \omega_1^{\frac{1}{2}} + \sqrt{\lambda(1-\lambda)} \ketbra{0}{1} \otimes \omega_0^{\frac{1}{2}} J^{\cM} \omega_1^{\frac{1}{2}} + \sqrt{\lambda(1-\lambda)} \ketbra{1}{0} \otimes \omega_1^{\frac{1}{2}} J^{\cM} \omega_0^{\frac{1}{2}} \Big) \\
&\geq \widetilde{D}_{\alpha}\left( (1-\lambda) \proj{0} \otimes \omega_0^{\frac{1}{2}} J^{\cN} \omega_0^{\frac{1}{2}} + \lambda \proj{1} \otimes \omega_1^{\frac{1}{2}} J^{\cN} \omega_1^{\frac{1}{2}} \|  (1-\lambda) \proj{0} \otimes \omega_0^{\frac{1}{2}} J^{\cM} \omega_0^{\frac{1}{2}} + \lambda \proj{1} \otimes \omega_1^{\frac{1}{2}} J^{\cM} \omega_1^{\frac{1}{2}} \right) \ ,
\end{align*}
where we used the data-processing inequality in the last line.
Now writing $\widetilde{D}_{\alpha} = \frac{1}{\alpha - 1} \log \widetilde{Q}_{\alpha}$, and using the definition of $\widetilde{Q}_{\alpha}$, we have
\begin{align*}
&\widetilde{Q}_{\alpha}\left( (1-\lambda) \proj{0} \otimes \omega_0^{\frac{1}{2}} J^{\cN} \omega_0^{\frac{1}{2}} + \lambda \proj{1} \otimes \omega_1^{\frac{1}{2}} J^{\cN} \omega_1^{\frac{1}{2}} \| (1-\lambda) \proj{0} \otimes \omega_0^{\frac{1}{2}} J^{\cM} \omega_0^{\frac{1}{2}} + \lambda \proj{1} \otimes \omega_1^{\frac{1}{2}} J^{\cM} \omega_1^{\frac{1}{2}} \right) \\
&= (1-\lambda) \widetilde{Q}_{\alpha}(\omega_0^{\frac{1}{2}} J^{\cN} \omega_0^{\frac{1}{2}} \| \omega_0^{\frac{1}{2}} J^{\cM} \omega_0^{\frac{1}{2}}) + \lambda \widetilde{Q}_{\alpha}(\omega_1^{\frac{1}{2}} J^{\cN} \omega_1^{\frac{1}{2}} \| \omega_1^{\frac{1}{2}} J^{\cM} \omega_1^{\frac{1}{2}}) \ .
\end{align*}
Using the concavity of the logarithm, we finally obtain
\begin{align*}
\widetilde{D}_{\alpha}( \omega^{\frac{1}{2}} J^{\cN} \omega^{\frac{1}{2}} \| \omega^{\frac{1}{2}} J^{\cM} \omega^{\frac{1}{2}}) 
&\geq (1-\lambda) \widetilde{D}_{\alpha}(\omega_0^{\frac{1}{2}} J^{\cN} \omega_0^{\frac{1}{2}} \| \omega_0^{\frac{1}{2}} J^{\cM} \omega_0^{\frac{1}{2}}) + \lambda \widetilde{D}_{\alpha}(\omega_1^{\frac{1}{2}} J^{\cN} \omega_1^{\frac{1}{2}} \| \omega_1^{\frac{1}{2}} J^{\cM} \omega_1^{\frac{1}{2}}) .
\end{align*}
And it is well-known that concavity implies continuity on the relative interior~\cite[Theorem 10.1]{rockafellar1970convex}.

The lower semicontinuity follows from the continuity of $\omega \mapsto (\omega^{\frac{1}{2}} J^{\cN} \omega^{\frac{1}{2}}, \omega^{\frac{1}{2}} J^{\cM} \omega^{\frac{1}{2}})$ and the lower semicontinuity of $\widetilde{D}_{\alpha}$ from Lemma~\ref{lem:lower_semicont}.

To show~\eqref{eq:sup_channel_div_strict}, for any $\omega \in \D(X)$ let $\omega_n = (1-\frac{1}{n}) \omega + \frac{1}{n} \frac{I}{\dim X}$. Then $\omega_n > 0$ for all $n$ and by lower semicontinuity,
\begin{align*}
\widetilde{D}_{\alpha}( \omega^{\frac{1}{2}} J^{\cN} \omega^{\frac{1}{2}} \| \omega^{\frac{1}{2}} J^{\cM} \omega^{\frac{1}{2}}) 
&\leq \liminf_{n \to \infty} \widetilde{D}_{\alpha}( \omega_n^{\frac{1}{2}} J^{\cN} \omega_n^{\frac{1}{2}} \| \omega_n^{\frac{1}{2}} J^{\cM} \omega_n^{\frac{1}{2}}) \\
&\leq \sup_{\substack{\theta \in \D(X) \\ \theta > 0}} \widetilde{D}_{\alpha}( \theta^{\frac{1}{2}} J^{\cN} \theta^{\frac{1}{2}} \| \theta^{\frac{1}{2}} J^{\cM} \theta^{\frac{1}{2}}) \ .
\end{align*}
Taking the supremum over $\omega \in \D(X)$ gives the desired equality.
\end{proof}

The following lower semicontinuity statement is standard (see e.g,~\cite[Corollary 3.27]{mosonyi2017strong} and ~\cite[Proposition 3.10]{jenvcova2018renyi} for the general von Neumann algebra setting). We include a proof for completeness.
\begin{lemma}
\label{lem:lower_semicont}
For any $\alpha \in (1,\infty)$, the function 
\begin{align*}
\widetilde{D}_{\alpha} \: : \: &\Pos(\cH) \times \Pos(\cH) \to \RR \cup \{\infty\} \\
&(\rho, \sigma) \mapsto \widetilde{D}_{\alpha}(\rho \| \sigma)
\end{align*}
is lower semicontinuous. 
\end{lemma}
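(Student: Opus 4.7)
The plan is to represent $\widetilde{D}_\alpha$ as a supremum of lower semicontinuous functions by invoking the identity
\[
\widetilde{D}_\alpha(\rho \| \sigma) \;=\; \sup_{n \geq 1} \frac{1}{n} D^{\mathbb{M}}_\alpha(\rho^{\otimes n} \| \sigma^{\otimes n})
\]
which was established in Proposition~\ref{prop:relation_other_div} (the limit exists and equals the supremum because the measured divergence is superadditive under tensor products). Since the supremum of an arbitrary family of lower semicontinuous functions is lower semicontinuous, and since $(\rho,\sigma) \mapsto (\rho^{\otimes n},\sigma^{\otimes n})$ is continuous, it suffices to prove that the single function $(\rho,\sigma) \mapsto D^{\mathbb{M}}_\alpha(\rho\|\sigma)$ is lower semicontinuous.

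To handle the measured divergence, I would first verify the auxiliary claim that the \emph{classical} R\'enyi divergence $(P,Q) \mapsto D_\alpha(P \| Q)$ is lower semicontinuous on $\RR^\Sigma_+ \times \RR^\Sigma_+$ for $\alpha > 1$. This reduces to noting that each summand $(p,q) \mapsto p^\alpha q^{1-\alpha}$ (with the convention $0^\alpha \cdot 0^{1-\alpha} = 0$) is lower semicontinuous on $[0,\infty)^2$: the only delicate case is $(p,0)$ with $p>0$, where the value is $+\infty$, and any sequence $(p_n,q_n)\to (p,0)$ with $p>0$ satisfies $p_n^\alpha q_n^{1-\alpha} \to +\infty$. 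Summing and then applying the continuous, nondecreasing map $t \mapsto \frac{1}{\alpha-1}\log t$ preserves lower semicontinuity.

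Next, for any fixed orthonormal basis $\{\ket{v_x}\}_{x\in\Sigma}$, the maps $\rho \mapsto (\langle v_x|\rho|v_x\rangle)_{x}$ and $\sigma \mapsto (\langle v_x|\sigma|v_x\rangle)_{x}$ are continuous, so by composition the function
\[
(\rho,\sigma) \;\longmapsto\; D_\alpha\Bigl( \textstyle\sum_x \proj{v_x}\rho\proj{v_x}\,\Big\|\,\sum_x \proj{v_x}\sigma\proj{v_x}\Bigr)
\]
is lower semicontinuous. Taking the supremum over all orthonormal bases yields the lower semicontinuity of $D^{\mathbb{M}}_\alpha$ (supremum of l.s.c.\ functions is l.s.c.). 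Finally, taking the supremum over $n$ after precomposing with the continuous map $(\rho,\sigma)\mapsto(\rho^{\otimes n},\sigma^{\otimes n})$ and dividing by $n$ gives the lower semicontinuity of $\widetilde{D}_\alpha$.

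The only mildly subtle point is the handling of the boundary where $\rho \not\ll \sigma$, i.e., where $\widetilde{D}_\alpha = +\infty$. This is taken care of automatically by the lower semicontinuity of the classical divergence: if $\rho\not\ll\sigma$, one can choose an eigenbasis of $\sigma$ (whose null space supports some mass of $\rho$) to make the pinched classical divergence infinite, and this infinity propagates through both the supremum over measurements and the supremum over $n$. So no separate analysis of the boundary is needed.
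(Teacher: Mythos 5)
Your proof is correct, but it follows a genuinely different route from the paper's. The paper argues directly: when $\rho \ll \sigma$ it restricts to $\Pos(\supp(\sigma))\times\Pos(\supp(\sigma))$, where $\widetilde{D}_{\alpha}$ is a composition of continuous maps, and compares a general sequence $(\rho_n,\sigma_n)$ to its pinching by the projector $P$ onto $\supp(\sigma)$ via data processing and block additivity of $\widetilde{Q}_{\alpha}$; when $\rho \not\ll \sigma$ it exhibits a rank-one measurement (a basis vector in $\supp(\sigma)^{\perp}$ not orthogonal to $\supp(\rho)$) forcing $\widetilde{D}_{\alpha}(\rho_n\|\sigma_n)\to\infty$. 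You instead write $\widetilde{D}_{\alpha}(\rho\|\sigma)=\sup_{n}\frac{1}{n}D^{\mathbb{M}}_{\alpha}(\rho^{\otimes n}\|\sigma^{\otimes n})$ and reduce everything to lower semicontinuity of the classical R\'enyi divergence, using that arbitrary suprema of lower semicontinuous functions are lower semicontinuous; this also absorbs the $\rho\not\ll\sigma$ case automatically, as you note. The representation you invoke is legitimate and non-circular: Proposition~\ref{prop:relation_other_div} (which rests on the cited Mosonyi--Ogawa result, not on this lemma) gives the limit, and the upgrade from limit to supremum needs superadditivity of $n\mapsto D^{\mathbb{M}}_{\alpha}(\rho^{\otimes n}\|\sigma^{\otimes n})$ plus Fekete's lemma; you assert this, and it is true, but you should spell out the one-line argument (tensor products of orthonormal bases are orthonormal bases, and $Q_{\alpha}$ of product distributions factorizes, so the inequality holds also under the paper's unnormalized convention). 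Your pointwise analysis of the summands $(p,q)\mapsto p^{\alpha}q^{1-\alpha}$, including the blow-up at $(p,0)$ with $p>0$, is the right elementary core. Trade-offs: the paper's proof is self-contained and uses only data processing and continuity on a restricted domain, whereas yours is conceptually shorter and yields the structural by-product that $\widetilde{D}_{\alpha}$ is a supremum of lower semicontinuous (indeed measured) quantities, at the cost of leaning on the nontrivial regularization theorem already cited elsewhere in the paper.
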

\begin{proof}
Let $(\rho, \sigma)$ be such that $\rho \ll \sigma$. Our objective is to show that for any sequence $(\rho_n, \sigma_n)$ converging to $(\rho, \sigma)$, we have $\liminf_{n \to \infty} \widetilde{D}_{\alpha}(\rho_n \| \sigma_n) \geq \widetilde{D}_{\alpha}(\rho \| \sigma)$. We first observe that if we restrict ourselves to the set $\Pos(\supp(\sigma)) \times \Pos(\supp(\sigma))$, then $\widetilde{D}_{\alpha}$ is continuous at $(\rho, \sigma)$ as a composition of continuous functions. Now let $P$ be the projector onto $\supp(\sigma)$. Using the data-processing inequality and the fact that $P (I-P) = 0$, we have
\begin{align*}
\widetilde{Q}_{\alpha}(\rho_n \| \sigma_n) 
&\geq \widetilde{Q}_{\alpha}\left(P\rho_nP + (I - P) \rho_n (I-P) \| P\sigma_nP + (I-P) \sigma_n (I-P) \right) \\
&= \widetilde{Q}_{\alpha}(P\rho_nP \| P\sigma_nP ) + \widetilde{Q}_{\alpha}((I-P)\rho_n(I-P) \| (I-P)\sigma_n(I-P) ) \\
&\geq \widetilde{Q}_{\alpha}(P\rho_nP \| P\sigma_nP ) \ .
\end{align*}
Now the sequence $(P \rho_n P, P \sigma_n P)$ is in $\Pos(\supp(\sigma)) \times \Pos(\supp(\sigma))$ and converges to $(\rho, \sigma)$. Using continuity of the function restricted to this set we obtain the desired result. We remark that $\widetilde{D}_{\alpha}$ is not continuous in general even in the classical case: consider for example $\rho_n = \proj{0} + n^{-1} \proj{1}$ and $\sigma_n = \proj{0} + n^{-\frac{\alpha}{\alpha-1}} \proj{1}$ with $\rho = \sigma = \proj{0}$ then $\widetilde{D}_{\alpha}(\rho \| \sigma) = 0$ but $\widetilde{D}_{\alpha}(\rho_n \| \sigma_n) = \frac{1}{\alpha - 1} \log\left(1 + n^{-\alpha} n^{+\alpha}\right) = \frac{1}{\alpha - 1}$ is bounded away from $0$.

Now assume that we do not have $\rho \ll \sigma$. In this case, our objective is to show that for any sequence $(\rho_n, \sigma_n)$ converging to $(\rho, \sigma)$, we have $\widetilde{D}_{\alpha}(\rho_n \| \sigma_n) \to \infty$ as $n \to \infty$. Note that $\supp(\rho) \not\subseteq \supp (\sigma)$ implies that $\supp (\sigma)^{\perp} \not\subseteq \supp(\rho)^{\perp}$. Let $\ket{v_1} \in \supp(\sigma)^{\perp}$ but not in $\supp(\rho)^{\perp}$. Then complete it $\{\ket{v_1}, \dots, \ket{v_d}\}$ into an orthonormal basis of $\cH$ and define the completely positive and trace-preserving map $\cM$ by $\cM(W) = \sum_{i} \proj{v_i} W \proj{v_i}$. By the data-processing inequality, we have for any $n$, 
\begin{align*}
\widetilde{D}_{\alpha}(\rho_n \| \sigma_n) 
&\geq \widetilde{D}_{\alpha}(\cM(\rho_n) \| \cM(\sigma_n)) \\
&\geq \frac{1}{\alpha - 1} \log (\bra{v_1} \rho_n \ket{v_1})^{\alpha} (\bra{v_1} \sigma_n \ket{v_1})^{1-\alpha} \ .
\end{align*}
But then $\lim_{n \to \infty} \bra{v_1} \rho_n \ket{v_1} = \bra{v_1} \rho \ket{v_1} > 0$ (as $\ket{v_1} \not\in \supp(\rho)^{\perp}$) and $\lim_{n \to \infty} \bra{v_1} \sigma_n \ket{v_1} = \bra{v_1} \sigma \ket{v_1} = 0$, which leads to the desired result.
\end{proof}

\end{document}